\DeclareMathOperator{\rand}{rand}
\DeclareMathOperator{\E}{\mathbb{E}}
\DeclareMathOperator{\Pb}{Pr}
\DeclareMathOperator{\oc}{oc}
\DeclareMathOperator{\ms}{ms}
\DeclareMathOperator{\miss}{MISS}
\DeclareMathOperator{\ev}{EV}
\DeclareMathOperator{\lru}{LRU}
\DeclareMathOperator{\mru}{MRU}
\DeclareMathOperator{\opt}{OPT}
\DeclareMathOperator{\T}{T}
\DeclareMathOperator{\lpr}{LPR}
\DeclareMathOperator{\kl}{K-L}
\DeclareMathOperator{\mf}{\enspace .}
\DeclareMathOperator{\mc}{\enspace ,}
\newcommand{\deq}{\mathrel{\mathop:}=}
\newcommand{\Z}{\mathbb{Z}}
\newcommand{\Pscal}{{\sc Scal}}
\newcommand{\tone}{}
\newcommand{\ttwo}{'}
\newcommand{\reals}{\mathbb{R}}
\newcommand{\comment}[1]{}
\newtheorem{theorem}{Theorem}
\newtheorem{corollary}{Corollary}
\newtheorem{proposition}{Proposition}
\newtheorem{lemma}{Lemma}
\theoremstyle{definition}
\newtheorem{definition}{Definition}
\newtheorem{remark}{Remark}
\theoremstyle{remark}
\def\vec#1{\mathchoice{\mbox{\boldmath$\displaystyle#1$}}
  {\mbox{\boldmath$\textstyle#1$}}
  {\mbox{\boldmath$\scriptstyle#1$}}
  {\mbox{\boldmath$\scriptscriptstyle#1$}}}
\newcounter{romen} \newenvironment{romenum}{\setcounter{romen}{1}\def\item{
    (\roman{romen})\ \stepcounter{romen}}}{\newline}
\title{Optimal Eviction Policies for Stochastic Address
  Traces\footnote{Published in Theoretical Computer Science, 2013:
    \url{http://dx.doi.org/10.1016/j.tcs.2013.01.016}}}
\author{Gianfranco Bilardi\thanks{\texttt{bilardi@dei.unipd.it}} \and Francesco
  Versaci\thanks{\texttt{versaci@par.tuwien.ac.at}. {Currently at Vienna
      University of Technology. Supported by PAT-INFN Project
      \emph{AuroraScience}, by MIUR-PRIN Project \emph{AlgoDEEP}, and by the
      University of Padova Projects \emph{STPD08JA32} and \emph{CPDA099949}.}}}
\date{University of Padova}
\begin{document}
\maketitle

\begin{abstract}
  The eviction problem for memory hierarchies is studied for the
  Hidden Markov Reference Model (HMRM) of the memory trace,
  showing how miss minimization can be naturally formulated in
  the optimal control setting.
    In addition to the traditional version assuming a buffer of fixed
  capacity, a relaxed version is also considered, in which buffer
  occupancy can vary and its average is constrained.
    Resorting to multiobjective optimization, viewing occupancy as a
  cost rather than as a constraint, the optimal eviction policy is
  obtained by composing solutions for the individual addressable
  items.

  This approach is then specialized to the Least Recently Used Stack Model
  (LRUSM), a type of HMRM often considered for traces, which includes $V-1$
  parameters, where $V$ is the size of the virtual space. A gain optimal policy
  for any target average occupancy is obtained which (i) is computable in time
  $O(V)$ from the model parameters, (ii) is optimal also for the fixed capacity
  case, and (iii) is characterized in terms of priorities, with the name of
  Least Profit Rate (LPR) policy. An $O(\log C)$ upper bound (being $C$ the
  buffer capacity) is derived for the ratio between the expected miss rate of
  LPR and that of OPT, the optimal off-line policy; the upper bound is tightened
  to $O(1)$, under reasonable constraints on the LRUSM parameters. Using the
  stack-distance framework, an algorithm is developed to compute the number of
  misses incurred by LPR on a given input trace, simultaneously for all buffer
  capacities, in time $O(\log V)$ per access.

  Finally, some results are provided for miss minimization over a
  finite horizon and over an infinite horizon under bias optimality,
  a criterion more stringent than gain optimality.

\end{abstract}

\vspace{5mm}
\noindent{\bf Keywords}: Eviction policies, Paging, Online problems,
Algorithms and data structures, Markov chains, Optimal control,
Multiobjective optimization.


\section{Introduction to Eviction Policies for the Memory Hierarchy}
\label{chap:an-optimal-control}

The storage of most computer systems is organized as a hierarchy of
levels (currently, half a dozen), for technological and economical
reasons \cite{HennessyP06} as well as due to fundamental physical
constraints \cite{BilardiP95}. Memory hierarchies have been
investigated extensively, in terms of hardware organization
\cite{Fotheringham61,Przybylski90,HennessyP06}, operating systems
\cite{SilberschatzGG05}, compiler optimization
\cite{AllenK02,Wolfe95,GuoGP03}, models of computation
\cite{Savage97}, and algorithm design \cite{AggarwalACS87}.  A central
issue is the decision of which data to keep in which level.  It is
customary to focus on two levels (see Fig.~\ref{two-lev-hier}),
respectively called here the \emph{buffer} and the \emph{backing
  storage}, the extension to multiple levels being generally
straightforward. (We adopt the neutral names used in the seminal paper
of Mattson et al.\ \cite{MattsonGST70}, since the concepts introduced
have application at each level of the memory hierarchy: register
allocation, CPU caching, memory paging, web caching, etc.) We assume
that an engine generates a sequence of access requests $a_1, a_2
\ldots, a_t, \ldots$ for \emph{items} (equally sized blocks of data)
stored in the hierarchy.  A request is called a \emph{hit} if the
requested item is in the buffer and a \emph{miss} otherwise. Upon a
miss, the item must be brought into the buffer, a costly operation.
If the buffer is full, the requested item will replace another
item. We are interested in an \emph{eviction policy} that selects the
items to be replaced so as to minimize subsequent misses.

The \emph{MIN policy} of Belady \cite{Belady66} and the \emph{OPT
  policy}\footnotemark\ of Mattson, Gecsei, Sluts, and Traiger
\cite{MattsonGST70}, minimize the number of misses in an off-line
setting. \footnotetext{OPT: Upon a miss, if the buffer contains items that will
  not be accessed in the future then evict (any) one of them, else evict the
  (unique) item whose next access is furthest in the future.} Since, in
practical situations, the address trace unfolds with the computation, eviction
decisions must be made \emph{on-line}.  Dozens of on-line policies have been
proposed and implemented in hardware or within operating systems.
Somewhat schematically, we can say that implemented policies have evolved mostly
experimentally, by benchmarking plausible proposals against relevant workloads
\cite{Przybylski90}.  Most of these policies are variants of the \emph{Least
  Recently Used} (LRU) policy. Departure from pure LRU is motivated to a large
extent by its high implementation cost.  However, several authors have also
explored variants of LRU that incur fewer misses, at least on some
workloads~\cite{MegiddoM04}.
\begin{figure}
  \centering  \includegraphics[width=2.3cm]{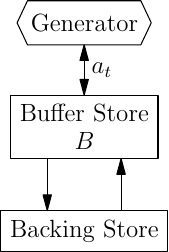}
  \caption{Two-level memory hierarchy.}
  \label{two-lev-hier}
\end{figure}

Theoretical investigations have focused mostly on two objectives: (a) to
``explain'' the practical success of LRU-like policies and (b) to explore the
existence of better policies.  One major question is how to model the input
traces (i.e., the lists of requested memory references). An interesting
perspective, proposed by Koutsoupias and Papadimitriou \cite{KoutsoupiasP00}, is
to model traces by a class of stochastic processes, all to be dealt with by the
same policy.  For a given stochastic model, two metrics help assess the quality
of a policy: (i) the expected number of misses and (ii) its ratio with the
expected number of misses incurred by OPT, called the \emph{competitive ratio}.
The competitive ratio of a policy with respect to a class of stochastic
processes is defined in a worst-case sense, maximizing over the class.

Competitive analysis was proposed by Sleator and Tarjan \cite{SleatorT85} for
the class of all possible traces (all stochastic processes), for which they
showed that the competitive ratio of any on-line policy is at least the buffer
capacity $C$, a value actually achieved by, e.g., LRU and FIFO. While
theoretically interesting, this results shed little light on what is
experimentally known. For example, if we restrict the possible traces to the
ones which actually emerge in practical applications, the miss ratio between LRU
and OPT is much smaller than $C$, being typically around 2 and seldom exceeding
4.
Borodin et al.\ \cite{BorodinIRS95} restrict the class of possible
traces to be consistent with an underlying ``access graph'' that
models the program locality (nodes correspond to items and legal
traces correspond to walks); a heuristic policy based on this model has
been developed by Fiat and Rosen \cite{FiatR97} and has been shown to
perform better than LRU, relative to some benchmarks.

Albers et al.\ \cite{AlbersFG05} propose a trace restriction based on
Denning's working set concept \cite{Denning68}: for a given function
$f(n)$, the (average or maximum) number of distinct items referenced
in $n$ consecutive steps is at most $f(n)$. LRU is proved to be
optimal in both the average and maximum cases.

Koutsoupias and Papadimitriou \cite{KoutsoupiasP00} have considered
the class $\Delta_{\epsilon}$ ($0 < \epsilon \leq 1$) of stochastic
processes such that, given any prefix of the trace, the probability to
be accessed next is at most $\epsilon$ for any item.  A
combinatorially rich development establishes that LRU achieves minimum
competitive ratio, for each value of $\epsilon$.  A careful analysis
of how the competitive ratio depends upon both $\epsilon$ and $C$ has
been provided by Young \cite{Young98}: in particular, the competitive
ratio increases with $\epsilon$. Qualitatively speaking, buffering is
more efficient exactly when the items in the buffer are more likely to
be referenced than those outside the buffer. Hence, for traces where
LRU (or any policy) exhibits good performance (that is, few misses),
$\epsilon$ must be correspondingly high. Then, the $\Delta_{\epsilon}$
competitive ratio is also high, unlike what observed in practice.  For
a more quantitative appraisal of the issue, consider that, at any level
of the memory hierarchy of real systems, the average miss ratio is
typically below 1/4 (even well below 1\% in main memory). Let then
$C_{1/2}$ be the buffer capacity at which the miss rate would be
1/2. It is easy to see that it must be $\epsilon C_{1/2} \geq 1/2$.
The actual buffer capacity $C$ is typically considerably larger than
$C_{1/2}$, say $C \geq 4 C_{1/2}$, by the rule of thumb that
quadrupling the cache capacity halves the miss ratio
\cite{Przybylski90}.  Then, $\epsilon C \geq 2$. By the bounds of
Young \cite{Young98}, the $\Delta_{\epsilon}$ competitive ratio of LRU
is at least $C/2$, which is not much more informative than the value
$C$ of \cite{SleatorT85}.

To ``explain'' both the low miss rate and the low competitive ratio of
LRU in practical cases, the approach of \cite{KoutsoupiasP00} requires
some restriction to the class of stochastic processes, in order to
capture temporal locality of the reference trace, while essentially
assigning low the probability to those traces for which OPT vastly
outperforms LRU.  The model and results of Becchetti
\cite{Becchetti04} can be viewed as an exploration of this direction.

Following what could be viewed as an extreme case of the approach
outlined above, a number of studies have focused on specific
stochastic processes (the class is a singleton), with the objective of
developing (individually) optimal policies for such processes.  In
\cite{FranaszekW74}, the address trace $a_1,a_2, \ldots, a_t, \ldots$
is taken to be a sequence of mutually independent random variables, a
scenario known as the Independent Reference Model (IRM).  While
attractive for its simplicity, the IRM does not capture the
cornerstone property that memory hierarchies relay upon: the temporal
locality of references.  To avoid this drawback, the LRU-Stack Model
(LRUSM) of the trace \cite{OdenS72,SpirnD72} has been widely
considered in the literature
\cite{TurnerS77,EffelsbergH84,KobayashiM89} and is the focus of much
attention in this paper.  Here, the trace is statistically
characterized by the probability $s(j)$ of accessing the $j$-th most
recently referenced item.  The values of $j$ for different accesses
are assumed to be statistically independent. Throughout, we assume
that the items being referenced belong to a finite virtual space of
size $V$.  If $s$ is monotonically decreasing, LRU is optimal
\cite{SpirnD72} (for an infinite trace); in general, the optimal
policy varies with $s$ and can differ from LRU, as shown by Woof,
Fernandez and Lang\cite{WoodFL77,WoodFL83}. Smaragdakis, Kaplan, and
Wilson\cite{SmaragdakisKW99,SmaragdakisKW03} introduced the EELRU
policy as a heuristic inspired by the LRUSM. The model of Becchetti
\cite{Becchetti04} mentioned above is similar to LRUSM,  essentially
dropping the assumption of independence, while restricting the form of
the conditional distribution of accessing the $j$-th most recent item,
given the value of the past trace. LRU is compared to OPT under this
model.

A different generalization of the IRM model is the Markov Reference
Model (MRM), already suggested by \cite{MattsonGST70}, where the
address trace is a finite Markov chain.  A wealth of results are
obtained by Karlin, Phillips, and Raghavan \cite{KarlinPR00} for MRM,
including the Commute Algorithm, a remarkable policy computable from
the transition probabilities of the chain in polynomial time, whose
expected miss rate is within a constant factor of optimum. We
underscore that MRM and LRUSM are substantially different models; in
general, while the MRM trace is itself a Markov process with $V$
states, the LRUSM trace is a function of a Markov process with $V!$
states.

\paragraph{Paper outline} In this work, we further the study of the
LRUSM, deriving new results and strengthening its understanding. Some
methodological aspects of our investigation, however, can be of
interest for a wider class of models of the reference trace, hence
they will be presented in a more general context.

A first methodological aspect we explore is the possibility as well as
the fruitfulness of casting miss minimization as a problem of optimal
control theory (or, equivalently, as a Markov Decision Problem
\cite{LewisP02}).  In Section \ref{sec:ocfr}, we show as this is
possible whenever the trace is a hidden Markov process, a scenario
which we call the Hidden Markov Reference Model (HMRM). The IRM, the
MRM, and the LRUSM are all special cases of the HMRM.  We refer to the
classical optimal control theory framework as presented, for instance,
in the textbook of Bertsekas \cite{Bertsekas00}.  In the dynamical
system to be controlled, the state encodes the content of the buffer
and some information on the past trace.  The disturbance input models
the uncertainty in the address trace, while the control input encodes
the eviction decisions available to the memory manager. The cost per
step is one if a miss is incurred and zero otherwise.  We modify the
standard assumption that the control is a function only of the state
and allow it to depend on the disturbance as well:
$u_t=\mu_t(x_t,w_t)$.  This modification is necessary since eviction
decisions are actually taken with full knowledge of the current
access.  The Bellman equation characterizing the optimal policies has
to be modified accordingly. The technicalities of this adaptation are
dealt with in the Appendix.

A second methodological aspect we explore is a generalization of the
buffer management problem where rather than imposing the capacity as a
fixed constraint, we let the number of buffer positions vary
dynamically, under the control of the management policy. The average
buffer occupancy becomes a second cost, in addition to the miss rate,
and the tradeoff between the two costs is of interest. This problem
could have practical applications. For example, in a time-shared
environment, processes could be charged for their average use of main
memory and hence be interested in utilizing more memory in phases
where this can result in significant page-fault reduction and less
memory in other phases. In addition, the study of the average
occupancy problem sheds significant light even on the solution of the
fixed capacity problem. In particular, for the LRUSM, an optimal
policy for the case of a fixed buffer can be simply obtained from
optimal policies for the average occupancy case. In Section
\ref{sec:relax-buff-manag}, the average buffer occupancy problem is
studied for the HMRM, exploiting techniques of multi-objective
optimization and optimal control. A key advantage lies in the possibility of composing a
global policy from policies tailored to the individual items in the
virtual space. Furthermore, the approach naturally lends itself to the
efficient management of a buffer shared among different processes.
Throughout this section, the notion of optimality of the policies
under consideration is that of \emph{gain} optimality over an infinite
horizon.

In Section \ref{lrusm}, the framework and the results developed in the
two previous sections are applied to the LRUSM. After reviewing the
LRU stack model, a class of optimal policies is derived, for an
arbitrary distribution $s$, for the average occupancy problem. It is
then shown that this class of policies includes some that use a buffer
of fixed capacity $C$.  More specifically, it is shown that the
minimum miss rate under a constraint on the average occupancy can be
attained with fixed capacity. A buffer of capacity $C$ is optimally
managed by a $\kl$ policy \cite{WoodFL77,WoodFL83}, which can be
specified by two parameters, denoted as $K=K(C)$ and $L=L(C)$.  $\kl$
policies include, as special cases, LRU ($K=C$, $\forall L$) and Most
Recently Used (MRU) ($K=1$, $L=V$). Our derivation of the optimal
policy has a number of advantages. (i) The optimality is established
for the more general setting of average occupancy. (ii) The policy is
naturally described in terms of a system of priorities for the
eviction of items. As a corollary of a result of \cite{MattsonGST70},
it follows that the policy does satisfy the inclusion property: if an
item is in a given buffer, then it is also in all buffers of larger
capacity.  The inclusion property rules out the so-called Belady
anomaly
\cite{BeladyNS69} and enables more efficient algorithms for its
performance evaluation.  (iii) By linking the eviction priorities to
the (planar) convex hull of the Pareto optimal points of the average
occupancy problem and by adapting Graham's scan, an algorithm is
derived for a \emph{linear} time computation of the values $K(C)$ and
$L(C)$ for all relevant values of $C$. (Previously known properties of
the $\kl$ policy lead to a straightforward cubic algorithm.) (iv)
Finally, the priorities can be shown to correspond to a suitable
notion of profit rate of an item, informally capturing the best
achievable ratio between expected hits and the expected occupancy for
that item, leading to the concept of Least Profit Rate (LPR)
policy. The LPR policy can be defined for models different from the
LRUSM for which, while not necessarily optimal, it may yield a good
heuristic.

In Section~\ref{sec:olol}, we show that the ratio $\chi$
between the expected miss rate of the optimal on-line policy, LPR, and
that of OPT is $O(\log C)$. Moreover, for the class of stack access
distributions $s$ for which the miss rate of LPR is lower bounded by
some constant $\beta>1/C$, we have $\chi \leq 2\ln(2/\beta) \in O(1)$.

The ability to efficiently compute the number of misses for buffers of
various capacities when adopting a given policy for benchmark traces
is of key interest in the design of hardware as well as software
solutions for memory management
\cite{BilardiEP11,SugumarA93,ThompsonS89}.
In Section~\ref{sec:fast-simul-optim}, we develop an algorithm to compute the
LPR misses for all buffer capacities in time $O(\log V)$ per access, providing a
rather non trivial generalization of an analogous result for LRU
\cite{BennettK75,AlmasiCP02}.

In the remainder of the paper, we explore alternate notions of
optimality. In Section~\ref{sec:fh}, we consider optimization over a
finite interval, or horizon.  Technically, the optimal control problem
is considerably harder. As an indication, even if the system dynamics,
its cost function, and the statistics of the disturbance are all time
invariant, the optimal control policy is in general time-dependent.
We show that, for any monotonically non increasing stack distribution
$s$, LRU is an optimal policy for any finite horizon, whereas MRU is
optimal if the stack distribution is non decreasing.  While these
results appear symmetrical and highly intuitive, their proofs are
substantially different and all but straightforward.  The standard
approach based on the Bellman equation, which requires ``guessing''
the optimal cost as a function of the initial state, does not seem
applicable, lacking a closed form for such function. We have
circumvented this obstacle by establishing an inductive invariant on
the relative values of the cost for select pairs of states. This
approach may have applicability to other optimal-control problems.
Some of the results are derived for a considerably more general
version of the LRUSM, which does not assume the statistical
independence of stack distances at different steps.

Finally, in Section~\ref{sec:ih}, we take a preliminary look at 
\emph{bias} optimality, a property stronger than \emph{gain} optimality,
but also considerably more difficult to deal with. As an indication of
the obstacles to be faced, we prove that, in some simple cases of
LRUSM, no bias-optimal policy satisfies the useful inclusion property.
We also develop a closed-form solution for the simplest non-trivial
case of buffer capacity, that is, $C=2$.  The derivation as well as
the results are not completely straightforward, suggesting that the
solution for arbitrary buffer capacity may require considerable
ingenuity.

We conclude the paper with a brief discussion of directions for
further research.

The main notation introduced and used throughout this work is summarized in
Table~\ref{tab:notation}.

\newcommand{\otoprule}{\specialrule{\heavyrulewidth}{.5em}{.5em}}
\begin{table}
  \centering
  \begin{tabular}{lp{.8\textwidth}}
    \toprule
    \textbf{Symbol} & \textbf{Description}\\
    \otoprule
    $\opt$ & The (off-line) optimal replacement policy\\
    $\lru$ & The Least Recently Used replacement policy\\
    $\lpr$ & The Least Profit Rate replacement policy\\
    $V$ & Size of the high-latency memory\\
    $C$ & Size of the low-latency buffer\\
    $a_t$ & Address of the item accessed at time $t$\\
    $\Lambda_t, d_t$ & The $\lru$ stack at time $t$ and $\lru$ stack depth of
    item accessed at time $t+1$ ($\Lambda_t(d_t)=a_{t+1}$)\\
    $s(j)$ & Probability of accessing the $j$-th most recently referenced item\\
    $S(j)$ & Cumulative probability distribution of $s$: $S(j) = \sum_{i=1}^j s(i)$\\
    $\chi$ & Stochastic competitive ratio\\
    $x_t$ & State at time $t$ (dynamical systems)\\
    $w_t$ & Disturbance at time $t$ (dynamical systems)\\
    $u_t=\mu_t(x_t, w_t)$ & Control at time $t$ (dynamical systems)\\
    $x_{t+1}=f(x_t,u_t,w_t)$ & State transition (dynamical systems)\\
    $g(x_t, w_t)$ & Cost (dynamical systems)\\
    $J^*_t(x_0)$ & Optimal cost starting from state $x_0$ with a time horizon of
    $t$ steps (dynamical systems)\\
    \bottomrule
  \end{tabular}
  \caption{Main notation used throughout the paper}
  \label{tab:notation}
\end{table}

\section{Optimal Control Formulation of Eviction for the
Hidden Markov Reference Model}
\label{sec:ocfr}

In the typical problem of optimal control \cite{Bertsekas00}, one is given a
dynamical system described by a state-transition equation of the form
\begin{equation}\label{eqn:xtransition}
  x_{t+1}=f(x_t,u_t,w_t) \mc
\end{equation} 
where $x_t$ is the \emph{state} at time $t$, while both $u_t$ and $w_t$ are
inputs, with crucially different roles. Input $u_t$, called the \emph{control},
can be chosen by whoever operates the system.  In contrast input $w_t$,
historically called the \emph{disturbance}, is determined by the environment and
modeled as a stochastic process. At each step $t$, a cost is incurred, given by
some function $g(x_t,u_t,w_t)$. The objective of optimal control is to find a
\emph{control policy} $u_t=\mu_t(x_t)$ so as to minimize the total cost
\begin{equation}
  \E_w \left[ \sum_{t \in I} g(x_t,u_t,w_t) \right] \mc
\end{equation} 
where $I$ is a time interval of interest.  A key premise of most optimal control
theory is the assumption of \emph{past-independent disturbances} (PID): given the
current state $x_t$, the current disturbance $w_t$ is statistically independent
of past disturbances $\{w_{\tau}:~\tau <t\}$.

In this section, we define a dynamical system whose optimal control
corresponds to the minimization of the number of misses when the
reference trace can be expressed as a function $a_t=r(z_t)$ of a
Markov chain $z_t$ with a finite state space $Z$.  We call this
scenario the Hidden Markov Reference Model (HMRM).
In the following we assume the system to be \emph{unichain}, i.e., under any
stationary policy the Markov chain associated with the system evolution has only
a single recurrent class. This hypothesis guarantees that the average cost in
infinite horizon does not depend on the initial state and that there is always a
solution to the Bellman equation (e.g., a Markov chain with two non
communicating classes is not unichain).

To cast eviction as a problem in optimal control, the state of our
dynamical system will model both the Markov chain underlying the trace
and the content of the buffer:
\begin{equation}
  x_t \deq (z_t, b_t) \mc
\end{equation}
where $b_t\in\{0,1\}^V$ is a Boolean vector such that, for $j=1,
\ldots, V$,
\begin{equation}
  b_t(j) \deq
  \begin{cases}
    1 & j \text{ is in the buffer},\\
    0 & \text{otherwise}.
  \end{cases}
\end{equation}

Toward formulating the transition function governing the evolution of
$x_t$, let us first observe that any Markov chain can be written
as
\begin{equation}\label{eqn:ztransition}
  z_{t+1} = \phi (z_t, w_t) \mc
\end{equation}
where $w_t \in W$ is a sequence of equally distributed random variables,
independent of each other and of the initial state $z_0$ \cite{LevinPW09}.
Furthermore, $W$ is a finite set with $|W| \leq |Z|(|Z|-1)$.  We take $w_t$ to
be the ``disturbance'' in (\ref{eqn:xtransition}).  We let the control input
$u_t \in \{0,1,\ldots,V\}$ encode the eviction decisions with $0$ denoting no
eviction (the only admissible control in case of a hit) and $j>0$ denoting the
eviction of the item $j$ (an admissible control only when a miss occurs and the
item $j$ is in the buffer, i.e., $b_t(j)=1$). We can then write:
\begin{equation}\label{eqn:btransition}
  b_{t+1} = \psi (b_t, u_t) \mc
\end{equation}
where the transition function $\psi$ is specified as
\begin{equation}
  b_{t+1}(j) \deq
  \begin{cases}
    1 & j = a_t = r(z_t) \mc \\
    0 & j = u_t \mc\\
    b_t(j) & \text{otherwise} \mf
  \end{cases}
\end{equation}
Finally, we have:
\begin{equation}\label{eqn:Xtransition}
  x_{t+1}= (z_{t+1}, b_{t+1}) = (\phi (z_t, w_t), \psi (b_t, u_t)) 
         = f(x_t,u_t,w_t) \mf
\end{equation}
The instantaneous cost function $g$ is simply
\begin{equation}\label{eqn:gDef}
  g(x_t, w_t) \deq
  \begin{cases}
    1 & \text{if a miss occurred} \mc\\
    0 & \text{otherwise} \mf
  \end{cases}
\end{equation}
Finally, we assume that a policy can set the control $u_t$ with
knowledge of both the state and the disturbance: $u_t=\mu_t(x_t,w_t)$.
This requires adaptation of some results derived in the control theory
literature typically assuming $u_t=\mu_t(x_t)$.

Consider a policy $\pi=\left(\mu_1, \mu_2, \ldots, \mu_\tau\right)$
applied to our system during the time interval $\left[1,\tau \right]$,
so that, for $t$ in this interval, we have
\begin{equation}\label{mi-dynamics}
    x_{t+1}=f\left(x_t, \mu_t\left(x_t,w_t\right), w_t \right)\mf
\end{equation}
We define the cost of $\pi$, starting from state $x_0$, with time
horizon $\tau$ as
\begin{equation}\label{Jdef}
  J_\tau^\pi\left(x_0\right) \deq \E_{\vec w}\left[\sum_{t=0}^{\tau-1}
    g\left(x_t, w_t\right)\right], \quad \text{where } \vec w=\left[w_0, w_1,
    \ldots, w_{\tau-1} \right] \mc
\end{equation}
where the $x_t$'s are subject to (\ref{mi-dynamics}) and the expected value
averages over disturbances $w_t$. For our system, this is the expected number of
misses in $\tau$ steps. The optimal cost is
\begin{equation}
  J^*_\tau(x_0) \deq \min_\pi J_\tau^\pi\left(x_0\right) \mf
\end{equation}
The optimal cost satisfies the following dynamic-programming
recurrence (analogous to Eq.~1.6 in Vol.~1 of \cite{Bertsekas00})
\begin{equation}\label{beqsys1}
  \begin{split}
    J^*_\tau(x_0)=\E_w\left[\min_{u\in U(x_0,w)}\left\{g(
        x_0,w)+J^*_{\tau-1}\left(f\left(x_0,u,w\right)\right)\right\}\right]\mc
  \end{split}
\end{equation}
where $U(x_0,w)$ denotes the set of controls admissible when the state is $x_0$
and the disturbance is $w$.  Introducing a vector $\vec J^*$ whose components
are the values $J^*_{\tau}(x)$, in some chosen order of the states,
(\ref{beqsys1}) can be concisely rewritten as
\begin{equation}
  \vec J^*_\tau =\T \vec J^*_{\tau-1} \mc
\end{equation}
where $\T$ is the (non-linear) \emph{optimal-cost update operator}.

In applications where the temporal horizon of interest is long and perhaps not
known a priori, one is interested in policies that are optimal over an infinite
horizon; an added benefit is that such policies are provably \emph{stationary},
under very mild conditions.  Usually, the cost defined in (\ref{Jdef}) diverges
as $\tau \rightarrow \infty$, thus alternate definitions of optimality are
considered \cite{Bertsekas00,LewisP02,ArapostathisBFGM93,blackwell1962discrete},
like gain and bias optimality.
\begin{description}
\item[Gain Optimality] refers to a policy $\mu^*$ that achieves the lowest \emph{
    average cost} (that can be shown to be independent of $x$): $\mu^*= \arg
  \min_\mu \lim_{\tau\rightarrow\infty} {J^\mu_\tau(x)}/{\tau}$ .
\item[Bias Optimality] refers to a policy $\mu^*$ that is as good as any other
  stationary policy for $\tau$ long enough, i.e.,\ $\forall
  x\;\lim_{\tau\rightarrow\infty} J^\mu_\tau(x) - J^{\mu^*}_\tau(x)\geq
  0$. (Note that a bias optimal policy is also gain optimal.)
\end{description}

In this work we mostly concentrate on the standard gain optimality concept
(which corresponds to the miss rate minimization), but we also give some
particular results for the stronger concept of bias optimality in
Section~\ref{sec:ih}.  In Sections~\ref{sec:relax-buff-manag} and \ref{sec:ih}
we will use the classical \emph{Bellman equation}, which characterizes optimal
control policies in infinite horizon as solutions of a fixed point equation
\cite{Bertsekas00}. We show in \ref{sec:bellman-equation} that the result holds
in our model as well:
\begin{proposition}\label{bel}
  Let $\Delta$ be a dynamical system, with state space $X$, whose control $u_t$
  can be chosen with knowledge of the disturbance $w_t$. If
     $\exists \lambda \, \exists \vec h :\quad \lambda \vec 1 + \vec h = \T \vec h$,     then $\lambda$ is the \emph{optimal average cost} of $\Delta$ and
  $\vec h$ is the vector of the \emph{differential costs} of the
  states, i.e.
  \begin{equation}
    \forall x,y \in X \quad \lim_{\tau\rightarrow +\infty} J_\tau^*(x)-J_\tau^*(y)=h(x)-h(y)\mf
  \end{equation}
\end{proposition}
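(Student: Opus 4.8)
The plan is to reduce the statement to two structural properties of the update operator $\T$ that survive the non-standard dependence of the control on the disturbance, and then to run the classical average-cost dynamic-programming argument on top of them. Writing the operator explicitly as $(\T\vec h)(x)=\E_w\bigl[\min_{u\in U(x,w)}\{g(x,w)+h(f(x,u,w))\}\bigr]$, the first step is to verify that (a) $\T$ is monotone, i.e.\ $\vec h_1\le\vec h_2$ componentwise implies $\T\vec h_1\le\T\vec h_2$, and (b) $\T$ is invariant under additive constants, i.e.\ $\T(\vec h+c\vec 1)=\T\vec h+c\vec 1$ for every scalar $c$. Both are immediate from the formula, since a minimum and an expectation each preserve order and commute with the addition of a constant. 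Crucially, this holds regardless of whether the expectation sits \emph{inside} or \emph{outside} the minimization, which is exactly the point where our model ($u_t=\mu_t(x_t,w_t)$) departs from the textbook setting $u_t=\mu_t(x_t)$; identifying (a)--(b) as the only ingredients the average-cost argument needs is what makes the adaptation go through.

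Next I would iterate the hypothesis $\lambda\vec 1+\vec h=\T\vec h$. By (b) and induction one obtains $\T^\tau\vec h=\vec h+\tau\lambda\vec 1$. On the other hand, since the empty-horizon cost vanishes ($\vec J^*_0=\vec 0$), the recurrence $\vec J^*_\tau=\T\vec J^*_{\tau-1}$ gives $\vec J^*_\tau=\T^\tau\vec 0$. Setting $m=\min_x h(x)$ and $M=\max_x h(x)$, so that $m\vec 1\le\vec h\le M\vec 1$, applying $\T^\tau$ (monotone) and using (b) yields the sandwich
\begin{equation}
  \vec h+\tau\lambda\vec 1-M\vec 1\;\le\;\vec J^*_\tau\;\le\;\vec h+\tau\lambda\vec 1-m\vec 1\mf
\end{equation}
Dividing componentwise by $\tau$ and letting $\tau\to\infty$ collapses both bounds to $\lambda$, proving that $\lim_{\tau\to\infty}J^*_\tau(x)/\tau=\lambda$ for every state $x$; hence $\lambda$ is the optimal average cost.

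For the differential-cost claim the sandwich is not enough: it only traps $J^*_\tau(x)-\tau\lambda$ in the bounded interval $[h(x)-M,\,h(x)-m]$, whereas I must prove the stronger statement that $J^*_\tau(x)-J^*_\tau(y)$ \emph{converges} to $h(x)-h(y)$. I would set $\vec v_\tau\deq\vec J^*_\tau-\tau\lambda\vec 1$ and check, again via (b), that $\vec v_\tau=\T\vec v_{\tau-1}-\lambda\vec 1$, so that $\vec v_\tau$ is precisely the relative-value iteration whose fixed points are the solutions of the Bellman equation, $\vec h$ among them. Identifying the stationary policy $\mu^*(x,w)$ attaining the minimum turns the Bellman equation into the Poisson equation $\lambda\vec 1+\vec h=\bar{\vec g}+P^*\vec h$ for the Markov chain $P^*$ induced by $\mu^*$, where $\bar g(x)=\E_w[g(x,w)]$ is the expected one-step cost. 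Invoking the unichain hypothesis, the Poisson solution is unique up to an additive constant and the iterates $\vec v_\tau$ converge to such a solution; uniqueness then forces the limit to be $\vec h+c\vec 1$, giving $\lim_{\tau\to\infty}\bigl(J^*_\tau(x)-J^*_\tau(y)\bigr)=h(x)-h(y)$.

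I expect this last step to be the main obstacle. Convergence of relative value iteration can fail for a \emph{periodic} induced chain, so the argument must either build in aperiodicity (e.g.\ through the standard data transformation that introduces a self-loop probability while leaving $\lambda$ and $\vec h$ unchanged) or pass to Cesàro averages of the iterates; the unichain assumption is what keeps the limiting relative values well defined and independent of the starting vector. By contrast, the disturbance-dependent control causes no real difficulty here: once $\mu^*$ is fixed, $P^*$ is an ordinary Markov chain, and the only place the modification enters is properties (a)--(b), which hold verbatim.
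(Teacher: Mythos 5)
Your proof is correct in outline, but it takes a genuinely different route from the paper's. The paper does not re-derive the average-cost theory at all: it defines an \emph{equivalent} system of the standard type, in which the controls are taken to be the entire policies $\mu(\cdot,\cdot)$ of the original system (so that $U''(x)=\mathcal{P}'$ and $f''(x,u'',w)=f'(x,u''(x,w),w)$), observes that the two optimal-cost update operators then agree on every cost vector, $\T'\vec h=\T''\vec h$, and invokes the classical Bellman-equation theorem for the equivalent system verbatim. You instead open up the operator and identify monotonicity and additive-constant invariance as the only two properties the classical argument needs, noting that these survive the expectation being placed outside the minimization; you then rerun the sandwich argument for the gain claim and appeal to relative value iteration plus the unichain hypothesis for the differential-cost claim. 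What the paper's reduction buys is brevity and a clean separation of concerns: the delicate convergence of $J^*_\tau(x)-\tau\lambda$ (which, as you rightly observe, can fail for periodic induced chains without an aperiodicity transformation or Ces\`aro averaging) is absorbed into the cited standard theorem rather than confronted. What your approach buys is self-containment and a sharper diagnosis of exactly where the nonstandard information pattern $u_t=\mu_t(x_t,w_t)$ does and does not matter. The one place to be careful in your write-up is the final step: you should make explicit that the ``optimal average cost'' in the statement is $\min_\mu\lim_\tau J^\mu_\tau(x)/\tau$, so that beyond showing $J^*_\tau(x)/\tau\rightarrow\lambda$ you also need the stationary policy attaining the minimum in the Bellman equation to actually achieve average cost $\lambda$ (your Poisson-equation remark covers this, but it is doing real work and should not be left implicit).
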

In the rest of the paper devoted to infinite horizon, since the costs will be
independent of the initial state $x_0$, we will simplify (and abuse a little)
the notation by setting $J^\mu(x_0)=J(\mu)$.

\section{The Average Occupancy Eviction Problem}
\label{sec:relax-buff-manag}

The classical form of the eviction problem, as reviewed in the
Introduction, is based on the assumption of a \emph{fixed capacity}
buffer: when the buffer is full, an eviction is required every time a
miss occurs, otherwise no eviction is performed. In this section, we
consider a relaxed version of the eviction problem where the buffer
is of potentially unlimited capacity ($C=V$ is actually sufficient)
and a variable portion of it can be occupied at different times. The
requirement that the item being referred must be kept in the buffer or
brought in if not already there is retained. However, after an access
(whether a hit or a miss), any item in the buffer, except for the one
just accessed, can be evicted. In this \emph{relaxed buffer
management} scenario, in addition to the miss rate, an interesting
cost metric is the \emph{average occupancy} of the buffer.  We call
average occupancy eviction problem the minimization of the miss
rate, given a target value for the average occupancy. This problem has
direct applications, as mentioned in the Introduction. However, the
study of average occupancy also sheds light on the fixed capacity
version, as we will see in particular in the next section for the
LRUSM.  A key advantage of the average capacity problem is that its
solution can be obtained by combining policies for the individual
items considered in isolation.

\subsection{The Single Item Problem}
\label{sec:one-item-eviction}

In this section we study eviction policies for single items from a
multiobjective perspective \cite{Ehrgott05,Miettinen99}, i.e., we are not
interested in optimizing a single objective, but in obtaining all the ``good''
policies.

When focusing on a single item, say $\omega$, a first simplification
arises from the fact that the state of the buffer, denoted
$\beta_\omega$, is just a binary variable, set to $1$ when the item is
kept in the buffer and to $0$ otherwise.  One also can restrict
attention to the $\omega$-\emph{trace} $a_{\omega,t}$, defined as $1$
when the item is referenced ($a_t=\omega$) and to $0$ otherwise.
Clearly, if the full trace $a_t$ is a hidden Markov process, so is
also the $\omega$-trace. However, in some cases, the $\omega$-trace
can be described with fewer states, forming a set $Z_\omega$. For
example, the reduction is dramatic for the LRUSM, where $|Z|=V!$ while
$|Z_\omega|=V$, for every item $\omega$. In general, we can write
\begin{equation}\label{eqn:phiOmega}
a_{\omega,t} = r_{\omega}(z_{\omega,t}), \quad \text{with }
z_{\omega,t+1}= \phi_\omega(z_{\omega,t}, w_{\omega,t}) \mf
\end{equation}
Process $z_{\omega,t}$ is called the \emph{Characteristic Generator}
(CG) of item $\omega$.  The control input $u_{\omega,t} \in \{0,1\}$
determines whether $\omega$ is evicted from the buffer ($u_{\omega,t}=1$),
which is admissible only when the item is in the buffer at time $t$
($\beta_{\omega,t}=1$) and it is not currently accessed
($a_{\omega,t}=0$). This amounts to specifying the function
$\psi_\omega$ such that
\begin{equation}\label{eqn:psiOmega}
\beta_{\omega,t+1} = \psi_\omega(\beta_{\omega,t}, u_{\omega,t}) \mf
\end{equation}
The overall state of the control system is then $x_\omega = (z_\omega,
\beta_\omega)$, evolving as
\begin{equation}\label{eqn:fOmega}
x_{\omega,t+1} = f_\omega(x_{\omega,t}, u_{\omega,t}, w_{\omega,t}) \mf
\end{equation}
We are interested into two types of cost: buffer occupancy and misses.
Correspondingly, we introduce two instantaneous cost functions:
\begin{align}\label{eqn:gOmega}
  g_{\oc, \omega}(x_{\omega,t}, w_{\omega,t})& = \beta_{\omega,t}&
  g_{\ms, \omega}(x_{\omega,t}, w_{\omega,t})& =
  \begin{cases}
    1, &\beta_{\omega,t}=0 \; \wedge \; a_{\omega,t}=1 \quad \text{(miss)} \\
    0, & \text{otherwise} \mf
  \end{cases}
\end{align}
Obviously, there is a tradeoff between the two costs, occupancy being
minimized by evicting the item as soon as possible and misses being minimized by
never evicting it. The set of ``good'' solutions to multiobjective optimization
problems is the Pareto set, i.e., the set of all policies whose costs are not
dominated by the costs of other policies (Pareto points are also known as
Efficient Points, EPs).

A peculiarity of multiobjective optimization problems is that, since we are
studying tradeoffs among the costs, sometimes we can usefully introduce
Randomized Mixtures of Policies (RMoPs) to obtain more points in the costs
space. E.g., consider the problem of choosing a route every day from home to
work between two possible routes $\mu_1$ and $\mu_2$, with costs $\vec J(\mu)$
defined by the driving time and gas used. If $\vec J(\mu_1)=(20 \text{
  minutes},10 \$)$ and $\vec J(\mu_2)=(30 \text{ minutes}, 5 \$)$ by choosing
every day with the same probability $\mu_1$ or $\mu_2$ we have long-run average
costs equal to $(25 \text{ minutes}, 7.50 \$)$, which cannot be obtained by
choosing always the same route. To properly define an RMoP for the average
occupancy problem for a single item, we observe that there always exists a
``hit'' state $x^*=(z^*,1)$ (with $r_\omega(z^*)=1$) which is recurrent in the
system evolution (otherwise the item frequency would be zero and there would not
be need to buffer it). If we call $t_1, t_2, \ldots$ the times at which the
system enters the recurrent state, then the problems of choosing an eviction
policy in the intervals $[t_i+1, t_{i+1}]$ are independent; by randomly choosing
for each interval between two policies we can obtain all the convex combinations
of the costs of the original non-mixture policies (see Fig.~\ref{fig:pareto}).
In more detail, if policies $\mu'$ and $\mu''$ exist with $J_{\oc}(\mu')=\eta'$
and $J_{\oc}(\mu'')=\eta''$ such that $C = \gamma \eta' + (1-\gamma) \eta''$ for
some $\gamma \in [0,1]$, we write $\mu = \rand_\gamma\left(\mu', \mu''\right)$
to mean that $\mu$ is an RMoP that chooses $\mu'$ with probability $\gamma$ and
$\mu''$ with probability $1-\gamma$ (the random choice being made every time the
system leaves the state $x^*$). Note that an analog equation holds also for the
miss rate cost: $J_{\ms}(\mu)=\gamma J_{\ms}(\mu') + (1-\gamma) J_{\ms}(\mu'')$.

\begin{figure}
  \centering
  \includegraphics{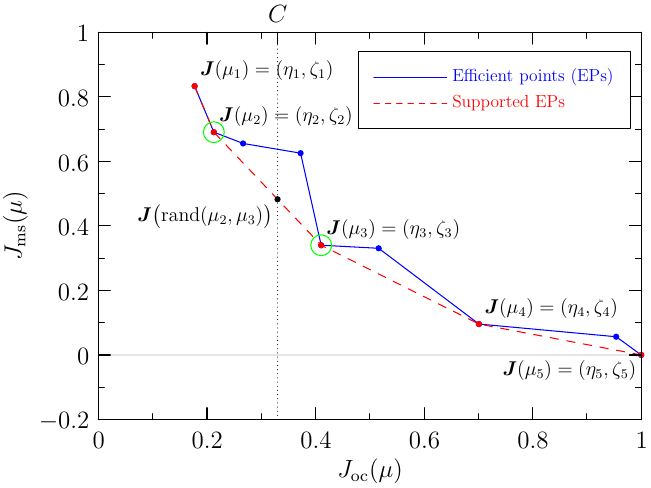}
  \caption{A Randomized Mixture of Policies (RMoP). Achieving $C$ as average
    buffer occupancy using a probabilistic combination of $\mu_2$ and $\mu_3$.}
  \label{fig:pareto}
\end{figure}

Since any point in the costs space between two policies can be obtained by an
appropriate RMoP, the set of ``good'' policies in this context maps into the set
$\cal{S}$ of Supported EPs (SEPs, also known as Pareto-convex points) which have
costs not dominated by any convex combination of the costs of other
policies. The set of SEPs can in general be of size exponential in $|Z_\omega|$
and thus it is interesting to find a polynomial approximation $\hat{\cal{S}}$ of
$\cal{S}$. More specifically, for every point $(J_1,J_2)$ in $\cal{S}$ we want
two points $({\hat J}_1', {\hat J}_2')$ and $({\hat J}_1'', {\hat J}_2'')$ in
$\hat{\cal{S}}$ and a real number $0 \leq \gamma \leq 1$ such that, for $q=1,2$,
we have $J_q (1+\epsilon) \geq \gamma {\hat J}_q' + (1 - \gamma) {\hat J}_q''$.

A necessary and sufficient condition \cite{PapadimitriouY00,DiakonikolasY08} to
get a Polynomial Time Approximation Scheme (PTAS) to construct $\hat{\cal{S}}$
is the availability of a PTAS for solving the single-objective scalarized
problem (\Pscal) which has the following cost:
\begin{equation}\label{eqn:psiGtheta}
G_{\omega, \theta}(x_{\omega}, w_{\omega})
= \cos(\theta) g_{\oc, \omega}(x_{\omega}, w_{\omega})
+ \sin(\theta) g_{\ms, \omega}(x_{\omega}, w_{\omega}) \mf
\end{equation}
The PTAS to build $\hat{\cal{S}}$ runs in time polynomial in both $|Z_\omega|$
and $(1/\epsilon)$.
(Note that any relative weight between the two costs can be achieved by a
suitable choice of $\theta \in \left[0, \pi/2\right]$.)
While \Pscal\ is interesting in its own right, here we are considering it mainly
as a tool to find a polynomial approximation of the set of SEPs.
For the single item problem the optimal policy solution of \Pscal\ can be
obtained by solving the following Bellman equation \cite{Bertsekas00}, in the
unknowns $\lambda_{\omega,\theta} \in \reals$ and $h_{\omega,\theta} : X_\omega
\rightarrow \reals$, where $X_\omega = Z_\omega \times \{0,1\}$:
\begin{equation}\label{eqn:bellmanGtheta}
  \lambda_{\omega,\theta} + h_{\omega,\theta}(x) = \E_{w_{\omega}} \left [G_{\omega,
      \theta}(x_{\omega}, w_{\omega}) +
    \min_{u_{\omega}} \left\{h_{\omega,\theta} \left(f_\omega(x_{\omega},
    u_{\omega}, w_{\omega})\right) \right\} \right] \mf
\end{equation}
The solution can be computed in time polynomial in $|Z_\omega|$ (the
problem is P-complete \cite{PapadimitriouT87}). The optimal policy can
be obtained as $\mu_{\omega,\theta}(x_{\omega}, w_{\omega}) =
\arg\min_{u_\omega} \left\{h_{\omega,\theta} \left(f_\omega(x_{\omega},
u_{\omega}, w_{\omega})\right) \right\}$.  From the general theory,
the cost of the optimal policy $\mu_{\omega,\theta}$ (for the
instantaneous cost $G_{\omega, \theta}$) is $J_{\omega,\theta}
=\lambda_{\omega,\theta}$.  Standard methods permit the polynomial
time computation of the values $J_{\oc,\omega}$ and $J_{\ms,\omega}$
of the same policy for the instantaneous costs $g_{\oc, \omega}$ and
$g_{\ms, \omega}$, respectively.

\subsection{The Average Occupancy of All Items}
\label{sec:aver-occup-all}

The solution for the average occupancy problem of all items can be obtained by
choosing, for each item $\omega$, an appropriate solution (given in general by
an RMoP) to the single item problem. In fact, let $\mu$ be an optimal policy for
the all items problem, then $\mu$ induces an occupancy cost
$J_{\oc,\omega}(\mu)$ and a miss rate cost $J_{\ms,\omega}(\mu)$ for each item
$\omega$. Now consider the single item problem of finding a policy $\mu_\omega$
which minimizes $J_{\ms,\omega}(\mu_\omega)$ while achieving occupancy
$J_{\oc,\omega}(\mu_\omega)=J_{\oc,\omega}(\mu)$: then we must necessarily have
$J_{\ms,\omega}(\mu_\omega)=J_{\ms,\omega}(\mu)$, otherwise either $\mu$ or
$\mu_\omega$ would be suboptimal.
Because of this relationship between single and all items optimal policies, the
average occupancy problem for all items reduces to a convex resource allocation
problem, which can be solved efficiently
\cite{StoneTW92,ThiebautSW92,FredericksonJ82}, once we have the approximation of
the set of SEPs for each single item.

A quantity of pivotal importance is the marginal gain $\rho_\omega$ of policy
$\mu_\omega'$ w.r.t.\ to policy $\mu_\omega$ defined, as
\begin{equation}
  \rho_\omega \deq -
  \frac{J_{\ms,\omega}(\mu_\omega')-J_{\ms,\omega}(\mu_\omega)}
  {J_{\oc,\omega}(\mu_\omega')-J_{\oc,\omega}(\mu_\omega)}
  \mf
\end{equation}
The solution can be obtained in a greedy fashion by increasingly allocating
buffer capacity to the item which provides a larger marginal gain. This greedy
procedure takes time polynomial in $\sum_\omega|Z_\omega|$; a more sophisticated,
asymptotically optimal algorithm can be found in \cite{FredericksonJ82}.  Note
that we can obtain the global miss rate $M$ and buffer occupancy $C$ just adding
up the equivalent quantities for the single items:
\begin{align}
  M &= \sum_\omega J_{\ms,\omega} \mc &
  C &= \sum_\omega J_{\oc,\omega} \mf
\end{align}

\begin{theorem}
  Let $\vec J(\mu^1_\omega)=(\eta^1_\omega,\zeta^1_\omega), \, \vec
  J(\mu^2_\omega)=(\eta^2_\omega,\zeta^2_\omega), \, \ldots$ be the list, with
  $\eta^i_\omega < \eta^{i+1}_\omega$, of the SEPs for the generic item
  $\omega$. The optimal policies $\mu_\omega$ (given for each item) are obtained
  applying Algorithm~\ref{alg:multiple-items}.
\end{theorem}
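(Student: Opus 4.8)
The plan is to prove that the greedy scheme of Algorithm~\ref{alg:multiple-items} solves the separable convex resource-allocation problem to which, as already observed, the all-items average occupancy problem reduces. The essential structural fact I would establish first is the convexity of each item's occupancy-versus-miss tradeoff. By definition the SEPs $\vec J(\mu^1_\omega), \vec J(\mu^2_\omega), \ldots$ are Pareto points not dominated by any convex combination of other points, so, plotting $\zeta^i_\omega$ against $\eta^i_\omega$, they lie on the lower convex hull of the attainable cost region. Hence the consecutive marginal gains
\[
  \rho^i_\omega \deq -\frac{\zeta^{i+1}_\omega-\zeta^i_\omega}{\eta^{i+1}_\omega-\eta^i_\omega}
\]
are positive (more occupancy reduces misses) and non-increasing in $i$ (diminishing returns). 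Since any intermediate occupancy between $\eta^i_\omega$ and $\eta^{i+1}_\omega$ is realizable by the RMoP $\rand_\gamma(\mu^i_\omega,\mu^{i+1}_\omega)$, and along such a segment both $J_{\oc,\omega}$ and $J_{\ms,\omega}$ interpolate linearly, the item's best-achievable miss rate as a function of occupancy is piecewise linear, convex, and decreasing.

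With this in hand the all-items problem becomes: minimize $\sum_\omega f_\omega(\eta_\omega)$ subject to $\sum_\omega \eta_\omega = C$, with each $f_\omega$ convex and decreasing, a classical separable convex program with one linear constraint. I would prove greedy optimality by an exchange argument anchored in the KKT conditions. At an optimum there must exist a common threshold $\rho^*$ (the Lagrange multiplier of the occupancy constraint) such that every item is driven forward until its marginal gain falls to $\rho^*$: each item's left-increment has gain at least $\rho^*$ and its right-increment at most $\rho^*$. The greedy procedure consumes increments in globally non-increasing order of $\rho$, so upon reaching total occupancy $C$ every consumed increment has $\rho\ge\rho^*$ and every unconsumed one has $\rho\le\rho^*$, exactly matching the KKT characterization. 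The formal exchange step is routine: given any competing allocation, transfer occupancy from an item operated below the common threshold to one still offering a strictly higher marginal gain; convexity guarantees this transfer does not increase $\sum_\omega J_{\ms,\omega}$, so greedy is no worse than any alternative.

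The hard part will be the boundary handling at the target occupancy $C$. Because the greedy increments are discrete jumps between consecutive SEPs while $C$ is a continuous target, the final increment generically overshoots; the correct resolution is to halt partway through it by selecting the RMoP $\rand_\gamma(\mu^i_\omega,\mu^{i+1}_\omega)$ with $\gamma$ chosen so that $\sum_\omega J_{\oc,\omega}$ meets $C$ exactly, which is legitimate precisely by the RMoP construction established earlier. I would then verify that this fractional last step still satisfies the common-threshold condition (it does, since the chosen point lies on the very segment whose slope equals $\rho^*$) and that ties in marginal gain across items can be broken arbitrarily without affecting optimality, so the algorithm is well defined. Confirming that the greedy's running totals $M=\sum_\omega J_{\ms,\omega}$ and $C=\sum_\omega J_{\oc,\omega}$ coincide with the true costs of the composed policy then follows directly from the additivity already recorded above.
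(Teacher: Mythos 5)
Your proposal is correct and follows essentially the same route as the paper: the paper reduces the all-items problem to a separable convex resource-allocation problem (via the observation that an optimal global policy must induce, for each item, an optimal single-item policy at that item's allotted occupancy) and then solves it greedily by non-increasing marginal gain, delegating the correctness of the greedy step to the cited resource-allocation literature. You supply precisely the standard convexity/exchange (KKT) argument and the fractional last step via an RMoP that those citations cover, so your write-up is a faithful, self-contained expansion of the paper's argument.
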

{\centering
  \begin{algorithm}
    $\forall \omega \; c[\omega] \leftarrow 1$ \;
    $ B \leftarrow 1 $ \;
    \While{$B<C$}{

      \tcp{choose the policy improvement with maximum marginal gain}
      $\displaystyle \phi \leftarrow \arg\max_\omega \left\{
      -\frac{\zeta_\omega^{c[\omega]+1}-{\zeta_\omega^{c[\omega]}}}{\eta_\omega^{c[\omega]+1}
        - {\eta_\omega^{c[\omega]}}}\right\}$ \;
      $B \leftarrow B + \eta_\phi^{c[\phi]+1} - {\eta_\phi^{c[\phi]}}$ \;
      $c[\phi] \leftarrow c[\phi]+1$ \;
    }
    $\forall \omega\not = \phi \quad \mu_\omega \leftarrow
    \mu_\omega^{c[\omega]} $ \;
    $ \gamma \leftarrow \text{such that } B-\sum_{\omega \not = \phi} \eta_\omega^{c[\omega]} = \gamma
    \eta_\phi^{c[\phi]-1}+(1-\gamma) \eta_\phi^{c[\phi]}$ \;
    $\mu_\phi \leftarrow \rand_\gamma\left(\mu_\phi^{c[\omega]-1},\mu_\phi^{c[\omega]} \right)$ \;
    \caption{Obtaining the optimal algorithm for multiple items.}
    \label{alg:multiple-items}
  \end{algorithm}
}
\begin{remark}
  Algorithm~\ref{alg:multiple-items} will produce a non-mixture
  eviction policy for each item except for one, for which an RMoP may
  be needed in order to use the entire buffer space assigned to
  it. Note that, given a target value $C$ for the buffer occupancy,
  there exist a global policy made only of single item non-mixture
  policies which has average buffer occupancy $C-1 < C' \leq C$ and is
  optimal for that value of the occupancy.
\end{remark}

\subsection{Buffer Partitioning}
\label{bp}
An interesting application of the previous algorithm arises when we want to
partition a buffer of capacity $C$ among $n$ independent processes, each
described by a different HMRM. We assume that the $i$-th process accesses a
private address space size $V_i$, at each step the $i$-th process has
probability $\pi_i$ ($\sum_{i=1}^n \pi_i=1$) to be the one to run.  We want to
determine the capacity $C_i$ to devote to process $i$ (with $\sum_{i=1}^n
C_i=C$) so as to minimize the global miss rate over an infinite temporal
horizon, under the hypothesis that each process is using the optimal eviction
policy.

It is straightforward to see that this problem is equivalent to building the
global policy from single items with the cost $J_{\ms}$ of the items accessed by
the $i$-th process rescaled by a factor $\pi_i$.

\section{Optimal Policies for the LRUSM}
\label{lrusm}

In this section, we focus on stack optimal policies for the LRU Stack
Model.  After first reviewing the LRUSM (\S\ref{sec:lru-stack-model}),
we derive the optimal policy in the average occupancy framework
(\S\ref{sec:aver-occup-probl}); due to the strong LRUSM structure many
simplifications apply to the general procedure built in the previous
section leading to a very efficient way of computing the optimal
policy (linear in $V$). In \S\ref{sec:fixed-capac-probl} we focus on
the fixed occupancy eviction problem and, by linking its solution to
the average occupancy one, we are able to provide a stack eviction
policy (Least Profit Rate, LPR) which is optimal for the
LRUSM. Furthermore, we can fully characterize LPR behavior using a
priority function based on a notion of profit which might be also of
interest for other memory reference models.

\subsection{The LRU Stack Model}
\label{sec:lru-stack-model}

Mattson et al.\ \cite{MattsonGST70} observed that a number of
eviction policies of interest, including LRU, MRU, and OPT, satisfy
the following property.

\begin{definition}\label{def:stackPolicy} Given an eviction policy $\mu$
defined for all buffer capacities, let $B_t^\mu(C)$ be the content of
the buffer of capacity $C$ at time $t$, after processing references
$a_1,\ldots,a_{t}$.
We say that the \emph{inclusion property} holds at time $t$ if, for any $C>1$,
$B_t^\mu(C-1) \subseteq B_t^\mu(C)$, with equality holding whenever the bigger
buffer is not full ($\left|B_t^\mu(C)\right| < C$).
We say that $\mu$ is a \emph{stack policy} if it satisfies the inclusion
property at all times for all address traces, assuming that inclusion holds for
the initial buffers $B_0^\mu(C)$, with $1 \leq C \leq V$ (this is trivially
verified if we assume, as we do, the initial buffers to be empty).
\end{definition}
The optimal on-line policy LPR, developed in this section, is a stack
policy.
Inclusion protects from Belady's anomaly \cite{BeladyNS69} (i.e.,
increasing the buffer capacity cannot lead to a worse miss ratio, as
can happen with, e.g., FIFO) and enables a compact representation of
the content of the buffers of all capacities by the \emph{stack of the
policy}, an array $\Lambda^\mu_t$ whose first $C$ components yield the
buffer of capacity $C$ as
\begin{equation}
B_t^\mu(C)= \left[ \Lambda^\mu_t(1), \ldots, \Lambda^\mu_t(C) \right] \mf
\end{equation}
The \emph{stack depth} $d_t$ of an access $a_{t+1}$ is defined as its
position in the policy stack at time $t$, so that
\begin{equation}
a_{t+1}=\Lambda^\mu_{t}(d_t) \mf
\end{equation} 
Upon an access of depth $d$, a buffer incurs a miss if and only if $C<d$. Thus,
computing the stack depth is an efficient way to simultaneously track the
performance of all buffer capacities.

\begin{figure}
  \centering  \includegraphics[width=52mm]{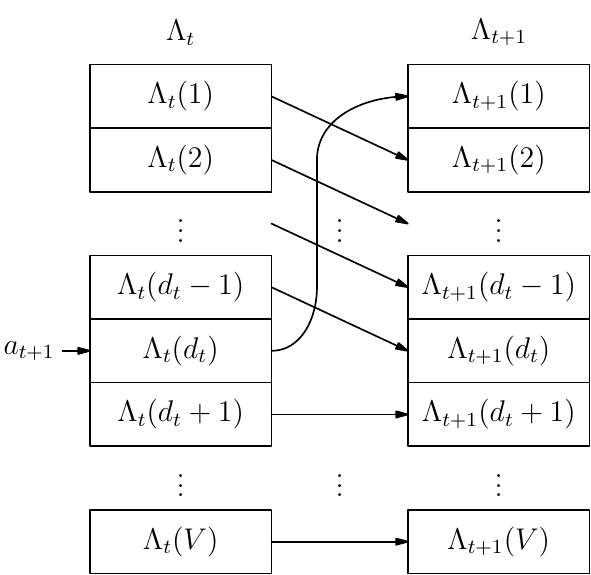}
  \caption{LRU stack update.}
  \label{lru-up}
\end{figure}

In the LRU stack, which we shall denote just by $\Lambda$ (i.e.,
$\Lambda=\Lambda^{\lru}$), the items are ordered according to the time
of their most recent access; in particular,
$\Lambda_{t+1}(1)=a_{t+1}$. Upon an access at depth $d_t$, the LRU
stack is updated by a downward, unit cyclic shift of its prefix of
length $d_t$, as illustrated in Fig.~\ref{lru-up}. The LRU stack has
inspired an attractive stochastic model for the address trace
\cite{OdenS72,SpirnD72,TurnerS77,EffelsbergH84,KobayashiM89}, where the
access depths $d_1,d_2, \ldots$ are \emph{independent and identically
distributed} random variables, specified by the distribution

\begin{equation}
  s(j) \deq \Pb[a_{t+1}=\Lambda_{t}(j)]=\Pr[d_t=j],
  \quad j=1, \ldots, V\enspace,
\end{equation}
or, equivalently, by the cumulative sum
\begin{equation}
  S(j) \deq \sum_{i=1}^j s(i), \quad j=1, \ldots, V \mf
\end{equation} 
We assume that an ``initial'' LRU stack is given. W.l.o.g., $s(V)\not = 0$
($s(V)=0$ implies that the last page in the stack is never accessed and can
therefore be ignored).

For example, the case where $s(j)$ decreases with $j$ captures a
strict form of temporal locality, where the probability of accessing
an item strictly decreases with the time elapsed from its most recent
reference.  It is simple to see that the actual trace $a_1,a_2,
\ldots$ can be uniquely recovered from the stack-depth sequence
$d_1,d_2, \ldots$, given the initial stack $\Lambda_0$.

To summarize, in the notation of Section~\ref{sec:ocfr}, the state
underlying the trace is the LRU stack ($z_t=\Lambda_t$) and the
disturbance is the stack distance ($w_t=d_t$). We have a HMRM, since
$a_{t+1}=r(\Lambda_{t+1})=\Lambda_{t+1}(1)$. The transition function $\phi$ for the trace
state (such that $\Lambda_{t+1} = \phi(\Lambda_t,d_t)$) corresponds to
the right unit cyclic shift of the the prefix of length $d_t$ of stack
$\Lambda_t$. The control input $u_t$, the buffer state $b_t$, and the
action of the former on the latter ($b_{t+1}=\phi(b_t,u_t) $) are
according to the general definitions given in  Section~\ref{sec:ocfr}.

\subsubsection{$\kl$ Eviction Policies}
\label{sec:an-optimal-stack}

A first optimal policy for the LRUSM was given by Wood, Fernandez and
Lang in \cite{WoodFL77,WoodFL83}. They introduced the
\emph{$\kl$ eviction policy}, defined as follows, in terms of LRU
stack distance, for given integers $K$ and $L$ that $1\leq K < C \leq
L \leq V$. If access $a_{t+1}$ results in a miss, then
\begin{itemize}
\item evict \(\Lambda_{t+1}(L+1)\), if it is in buffer $B_t(C)$;
\item otherwise evict \(\Lambda_{t+1}(K+1)\), which is always in
  $B_t(C)$ if the policy is consistently applied starting from an
  empty buffer.
\end{itemize}
Here, $\Lambda$ denotes the LRU stack, while $B(C)$ denotes the
content of the buffer under the $\kl$ policy.  Eviction is specified
in terms of the LRU stack immediately after the rotation due to access
$a_{t+1}$. Special cases are LRU ($K=C$, $\forall L$) and MRU ($K=1$,
$L=V$).  It has been shown in \cite{WoodFL77,WoodFL83} that, under the
LRUSM, for any $C$, there exist values $K(C)$ and $L(C)$ for which the
$\kl$ policy is (gain) optimal. It is easily shown that, in the steady
state, the items in the top $K$ positions of the LRU stack are always
in the buffer, the items in the bottom $V-L$ positions are always
outside the buffer, and $C-K$ of the $L-K$ items between position
$K+1$ and $L$ are in the buffer. (Technically, the $\kl$ policy is not
specified for a buffer containing neither \(\Lambda_{t+1}(L+1)\) nor
\(\Lambda_{t+1}(K+1)\). However, if the least recently used item is
evicted in such case, one can show that the same steady state is
reached, with probability 1.) It turns out that the miss rate of the
$\kl$ policy for a given distribution $s$ is
\begin{equation}
    M^{\kl}[s]= 1 - \frac{S(K)(L-C)+S(L)(C-K)}{L-K} \mf
\end{equation}
Finding, for each $C$, the optimal parameters $K(C)$ and $L(C)$ can be then
accomplished in time $\Theta(V^3)$, by evaluating the miss ratio for all
possible $(K,L)$ pairs.  In general, for a given $C$, the miss rate can be
minimized by different $(K,L)$ pairs. Smaragdakis et al.\
\cite{SmaragdakisKW99,SmaragdakisKW03} proved that, for any given $s$, $K(C)$
and $L(C)$ can be chosen so that the resulting $\kl$ policy does satisfy the
inclusion property.

In the next subsection, we derive the optimal eviction policy in the
average occupancy framework developed in
Section~\ref{sec:relax-buff-manag} and highlight its deep
structure. In \S\ref{sec:fixed-capac-probl} we derive a stack policy
which is optimal in the fixed buffer model. By exploiting its
similarity with the average occupancy optimal, we develop an algorithm
that computes the two $K$ and $L$ parameters for all $C \in \{1, V\}$
in \emph{linear} time, whereas a straightforward computation that does
not exploit the stack policy characterization takes \emph{cubic} time
(a first speedup from cubic to quadratic is obtained exploiting the
inclusion property, while the improvement from quadratic to linear
derives from algorithmic refinements).

\subsection{Average Occupancy Problem}
\label{sec:aver-occup-probl}

The general procedure described in the previous section can be
specialized for the LRUSM as follows. For each item $\omega$ the
associated CG is a Markov chain $z_{\omega,t}$ with states
$Z_\omega=\{1, \ldots, V \}$. The state encodes the position of the
item in the LRU stack, thus satisfying the following transition
probabilities:
\begin{align}
  \forall i \quad &\Pb[z_{\omega,t+1}=1 | z_{\omega,t}=i]=s(i) \\
  \forall i \neq 1 \quad &\Pb[z_{\omega,t+1}=i | z_{\omega,t}=i]=S(i-1) \\
  \forall i \neq V \quad &\Pb[z_{\omega,t+1}=i+1 | z_{\omega,t}=i]=1-S(i) \mf
\end{align}
We also have $r_\omega(1)=1$ and, $\forall i \neq 1, \quad r_\omega(i)=0$.  Thus, in the
LRUSM the statistical description of the CGs is the same for all the items. The
stationary eviction policies for the single item are binary vectors of size $V$
which say, for each LRU stack depth, if an item arriving at that position should
be evicted.
The ``hit'' state $x^* \deq (z^* = 1,\beta=1)$, in which the item is
at the top of the LRU stack and in the buffer, is recurrent and hence
will be used to define RMoPs in the LRUSM: we assume an RMoP will
choose a SEP policy each time the system leaves the state $x^*$. In
the following we call \emph{lifetime} the time between two consecutive
$x^*$. A lifetime ends when the item is accessed. Thus, at the
beginning of a new lifetime, an item is always at the top of the stack
and in the buffer. If evicted from the buffer, the item will re-enter
only at the beginning of the next lifetime.

We define $\ev_k$ as the policy which keeps the item in the buffer while at
stack depth $i\leq k$ and evicts it if $i > k$. Note that, in steady state, any
policy $\mu$ is equivalent to an appropriate $\ev_k$, where $k+1$ is the
smallest depth at which the policy $\mu$ evicts (this is because, after the item
is first accessed, it will always get evicted at depth $k+1$; it can only get
evicted once at a different depth). Hence, w.l.o.g., we can limit our study to
$\ev_k$ policies.
To provide a close form for both the occupancy and the miss cost of
$\ev_k$, we first need to establish some properties of the Markov
chain $z_\omega$.

\begin{proposition}
  \label{thm:time-spent-lrusm}
  Let $t^j_{j+h}$ be the expected time spent by an item in position $j+h$ ($h\geq
  0$) conditioned to the fact the item surely arrives in position $j$ without
  getting accessed. Then
  \begin{equation}
    t^j_{j+h} = \frac{1}{1-S(j-1)} \mf
  \end{equation}
  (Note that the expected time does not depend on $h$.)
\end{proposition}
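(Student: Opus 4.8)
The plan is to exploit the one-directional structure of the characteristic-generator chain $z_\omega$: within a single lifetime the item can only stay at its current depth, descend by one, or jump to the top when accessed (which ends the lifetime). Hence the positions $j, j+1, j+2, \ldots$ are visited strictly in increasing order, and each depth is occupied during at most one contiguous sojourn, never to be revisited. This lets me factor the conditional expected time at $j+h$ as the probability of ever reaching depth $j+h$ starting from $j$ (without being accessed) times the expected length of the single sojourn there. A preliminary remark is that, by the Markov property, conditioning on the item arriving at $j$ without being accessed merely fixes the starting state of the remaining trajectory at $j$; the path taken to reach $j$ is irrelevant, so it suffices to analyse the chain launched from state $j$.

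First I would compute the sojourn time. At any depth $i$ the chain stays with probability $S(i-1)$ and leaves --- either downward or by an access --- with probability $1-S(i-1)$, so the number of consecutive steps spent at $i$ is geometric with mean $1/(1-S(i-1))$; at $i=j+h$ this is $1/(1-S(j+h-1))$. Next I would compute the reaching probability. Conditioned on leaving a depth $i$, the move is a descent to $i+1$ with conditional probability $(1-S(i))/(1-S(i-1))$ and an access otherwise; multiplying the descent probabilities over $i=j,\ldots,j+h-1$ telescopes to
\[
  \prod_{i=j}^{j+h-1}\frac{1-S(i)}{1-S(i-1)} = \frac{1-S(j+h-1)}{1-S(j-1)} \mf
\]

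Combining the two factors, the expected time at $j+h$ conditioned on reaching $j$ is the product of the reaching probability and the sojourn mean, in which the factor $1-S(j+h-1)$ cancels, leaving $t^j_{j+h}=1/(1-S(j-1))$, visibly independent of $h$. The argument is almost entirely bookkeeping; I expect the only delicate point to be phrasing the conditioning so that the Markov reset at depth $j$ is unambiguous, together with checking the boundary $i=V$, where $1-S(V)=0$ correctly forbids any further descent, and the base case $h=0$, for which the reaching probability is $1$ and the claim reduces to the sojourn mean at $j$.
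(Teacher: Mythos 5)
Your proposal is correct and follows essentially the same route as the paper: both factor $t^j_{j+h}$ as (probability of reaching depth $j+h$ from $j$ before an access) times (geometric sojourn mean $1/(1-S(j+h-1))$ at that depth), with the reaching probability telescoping as $\prod_{i=j}^{j+h-1}\frac{1-S(i)}{1-S(i-1)}=\frac{1-S(j+h-1)}{1-S(j-1)}$. The only cosmetic difference is that the paper obtains the per-depth descent probability $\frac{1-S(i)}{1-S(i-1)}$ by summing a geometric series over the number of stays, whereas you obtain it by conditioning on the exit event; these are the same computation.
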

\begin{proof}  For $h=0$ we easily have that
  \begin{equation}
    t_j^j=1+S(j-1) t_j^j \quad \Rightarrow \quad t_j^j=\frac{1}{1-S(j-1)}  \mf
  \end{equation}
  The event that an item starting from position $j$ does not arrive in position
  $j+1$ is the disjoint union of the events that there are exactly $h$ accesses
  with depth smaller than $j$ followed by one at depth $j$. So the probability
  for an item in position $j$ to arrive to position $j+1$ is
  \begin{equation}
    P_{j+1}^j=1-s(j)\sum_{h=0}^{+\infty}\left[S(j-1)\right]^h=\frac{1-S(j)}{1-S(j-1)}  \mc
  \end{equation}
  and hence the expected time spent in position $j+1$ is
  \begin{equation}
    t_{j+1}^j=P_{j+1}^j
    t_{j+1}^{j+1}=\frac{1-S(j)}{1-S(j-1)}\frac{1}{1-S(j)}=t_j^j  \mf
  \end{equation}
  Furthermore
  \begin{equation}
    t_{j+h}^j=P_{j+1}^j P_{j+2}^{j+1}\cdots P_{j+h}^{j+h-1}t_{j+h}^{j+h}=t_j^j  \mf
  \end{equation}
\end{proof}

As a special case, if we know that an item starts from position 1 (in
which it is going to spend exactly one time step in the current lifetime)
then it is going to spend an expected time $t_j^1=1$ in each position on
the LRU stack in its lifetime and hence a lifetime has an expected
length of $V$ timesteps. Under policy $\ev_k$, the item is buffered on
average for $k$ timesteps per lifetime and hence the occupancy cost is
given by
\begin{equation}
  \label{eq:coc}
  J_{\oc}(\ev_k)=\frac{k}{V} \mf
\end{equation}
As for the miss rate, since the each position at depth $j>k$
contributes with a probability $s(j)$ to the misses, we have that
\begin{align}
  \label{eq:cms}
  J_{\ms}(\ev_k) &= \frac{1-S(k)}{V} \mf
\end{align}
A direct application of Algorithm~\ref{alg:multiple-items} would
provide optimal policies $\mu_\omega$ identical for all the items
except for on one (which, in general, will be managed by a
RMoP). Actually, we can take a slightly different approach to exploit
the CGs symmetry: by forcing the eviction policy to be the same for
exactly \emph{all} the items the global problem is reduced to the
single-item one, with just a rescale of the costs $J_{\oc}$ and
$J_{\ms}$ by a factor of $V$.  The optimal policy is then an RMoP (the
same for all the items) which mixes two eviction points corresponding
to SEPs, causing each item to be evicted only at two possible stack
depths. The next theorem summarizes the preceding discussion.
\begin{theorem}\label{thm:aveCap}
Let $q_1=1 < q_2 < \ldots < q_l=V$ be the values of $k$ such that
$\ev_k$ is a SEP (i.e., a Pareto-convex) policy and let $q_i<C' \leq
q_{i+1}$, so that $C'=\gamma' q_i + (1-\gamma')q_{i+1}$ for some
$\gamma' \in (0,1]$. Then, the RMoP that, for each item, mixes
policies $\ev_{q_i}$ and $\ev_{q_{i+1}}$ with probability $\gamma'$
and $(1-\gamma')$, respectively, achieves optimal miss rate for
average occupancy $C'$.
\end{theorem}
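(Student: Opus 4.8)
The plan is to reduce the all-items problem to the single-item picture via the symmetry of the characteristic generators, solve the single-item tradeoff using the convex-hull structure of the achievable cost points, and then argue that the two-point RMoP realizes exactly the optimal point on the Pareto-convex frontier.

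First I would record the two closed forms already derived, $J_{\oc}(\ev_k)=k/V$ and $J_{\ms}(\ev_k)=(1-S(k))/V$, and observe that as $k$ ranges over $\{1,\ldots,V\}$ these produce $V$ points in the occupancy--miss plane, with occupancy strictly increasing and (since $S$ is nondecreasing) miss nonincreasing in $k$. Because every $\ev_k$ shares the recurrent hit state $x^*$, the RMoP construction of Section~\ref{sec:one-item-eviction} realizes precisely the convex combinations of these points; hence the set of cost vectors achievable by mixtures of $\ev_k$ policies is their convex hull. By definition the SEPs are the points not dominated by any convex combination of others, that is, the vertices of the lower-left boundary of this hull; these are exactly the values $q_1=1<\cdots<q_l=V$.

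Next I would solve the single-item problem. For a per-item occupancy target $C'/V$ with $q_i<C'\le q_{i+1}$, the point of minimum miss rate on the hull at that occupancy lies on the segment joining $(q_i/V,J_{\ms}(\ev_{q_i}))$ and $(q_{i+1}/V,J_{\ms}(\ev_{q_{i+1}}))$, since consecutive SEPs are adjacent hull vertices and no intermediate $\ev_k$ can lie below this edge (else it would itself be an SEP strictly between $q_i$ and $q_{i+1}$). A direct computation shows $\rand_{\gamma'}(\ev_{q_i},\ev_{q_{i+1}})$ attains occupancy $\gamma' q_i/V+(1-\gamma')q_{i+1}/V=C'/V$ and miss rate $\gamma'J_{\ms}(\ev_{q_i})+(1-\gamma')J_{\ms}(\ev_{q_{i+1}})$, i.e.\ exactly this boundary point. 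Thus the two-point mixture is optimal for the single item at occupancy $C'/V$, and a mixture of only two (adjacent) SEPs suffices.

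Finally I would lift this to all items. Invoking the all-items reduction of Section~\ref{sec:aver-occup-all}, minimizing the global miss rate at total occupancy $C'$ is a convex resource-allocation problem $\min\sum_\omega f(c_\omega)$ subject to $\sum_\omega c_\omega=C'$, where $f$ is the convex, piecewise-linear lower-hull miss-versus-occupancy curve. Crucially, in the LRUSM all CGs are statistically identical, so $f$ is the same for every item; by convexity (Jensen), the minimum is attained at the equal allocation $c_\omega=C'/V$ for all $\omega$, realized by having every item run the same RMoP $\rand_{\gamma'}(\ev_{q_i},\ev_{q_{i+1}})$. The global costs are then $V$ times the single-item costs, so this symmetric RMoP achieves the optimal global miss rate at occupancy $C'$, as claimed. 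I expect the main obstacle to be this last symmetrization step: one must justify that an optimal global policy can be taken identical across items, which rests on the identity of the per-item cost curves together with the convexity of $f$; the remaining pieces (the hull characterization of the SEPs and the two-point interpolation) are then routine.
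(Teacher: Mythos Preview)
Your proposal is correct and follows essentially the same line as the paper: both compute the single-item costs $J_{\oc}(\ev_k)=k/V$, $J_{\ms}(\ev_k)=(1-S(k))/V$, identify the SEPs with the lower convex-hull vertices $q_1,\ldots,q_l$, realize the optimal single-item tradeoff at occupancy $C'/V$ by the two-point mixture $\rand_{\gamma'}(\ev_{q_i},\ev_{q_{i+1}})$, and then exploit the fact that in the LRUSM all characteristic generators are statistically identical to reduce the all-items problem to $V$ copies of the single-item one.

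The only cosmetic difference is in how the symmetrization is argued. The paper observes that a direct run of the greedy Algorithm~\ref{alg:multiple-items} on identical per-item SEP lists already yields identical policies for all items but one, and then simply forces the same RMoP on every item so that global costs become $V$ times the single-item costs. You instead phrase this as a convex resource-allocation problem with identical convex per-item curves and invoke Jensen to conclude that the equal split $c_\omega=C'/V$ is optimal. Both justifications are valid and amount to the same observation; your Jensen formulation is arguably more self-contained, while the paper's phrasing ties the result back to the general HMRM machinery of Section~\ref{sec:aver-occup-all}. One small point worth making explicit (the paper states it just before the theorem) is that every stationary single-item policy is, in steady state, equivalent to some $\ev_k$, so restricting attention to mixtures of the $\ev_k$ loses no generality.
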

It is a simple exercise to show that, if $s$ is monotonically
decreasing, then $l=V$ and $q_i=i$.  Instead, if $s$ is monotonically
increasing, then $l=2$, $q_1=1$, and $q_2=V$.

\subsection{Fixed Occupancy Problem}
\label{sec:fixed-capac-probl}

We have just shown that the optimal eviction policy in the average
occupancy model is characterized by two eviction points in the LRU
stack (let us call them $K'(C)$ and $L'(C)$). It is natural to
investigate what happens by applying a $\kl$ policy in the fixed
occupancy model with $K=K'(C)$ and $L=L'(C)$.  Under the $\kl$ policy
each lifetime is managed either by policy $\ev_K$ or $\ev_L$ and thus
its occupancy $C$ and miss rate $M$ can be written as
\begin{align}
  C &=\gamma J_{\oc}(\ev_K)+(1-\gamma)J_{\oc}(\ev_L) & 
  M &=\gamma J_{\ms}(\ev_K)+(1-\gamma)J_{\ms}(\ev_L) \mf
\end{align}
A set of similar equations holds for the average occupancy setting:
\begin{align}
  C' &=\gamma' J_{\oc}(\ev_{K'})+(1-\gamma')J_{\oc}(\ev_{L'}) & 
  M' &=\gamma' J_{\ms}(\ev_{K'})+(1-\gamma')J_{\ms}(\ev_{L'}) \mf
\end{align}
Since $C=C'$, $K=K'$ and $L=L'$ we must have the $\gamma'=\gamma$.
Then, we also have $M=M'$, i.e., the miss rate achieved by the $\kl$
policy for $K=K'(C)$ and $L=L'(C)$ is the same achieved by the average
occupancy optimal. Since the optimal miss rate for average occupancy
$C$ is obviously a lower bound for miss rate under fixed occupancy
$C$, the choice $K=K'(C)$ and $L=L'(C)$ parameters yields an optimal
policy under fixed capacity.

Based on these observations and on Theorem~\ref{thm:aveCap}, we can
compute, for each positive integer $C \leq V$, values $K(C)$ and
$L(C)$ for a gain optimal, fixed capacity policy.

\begin{corollary}\label{thm-KL-2-comp}
  Given a stack-distance distribution $s$, the corresponding values $q_1,q_2,
  \ldots, q_l$ can be computed in time \(\Theta(V)\) by
  Algorithm~\ref{alg-KL-2}, a specialization of the Graham scan \cite{Graham72}
  to obtain the convex hall of planar sets of points.  The same algorithm also
  computes optimal \(K(C)\) and \(L(C)\) for all (integer) buffer capacities $1
  \leq C \leq V$, uniquely determined from the $q_i$'s by the relation
\begin{equation}
  K(C)=q_i < C \leq q_{i+1}=L(C) \mf
\end{equation}
\end{corollary}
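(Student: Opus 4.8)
The plan is to recognize the set $\{q_1,\ldots,q_l\}$ as the indices of the vertices on the lower-left convex hull of a planar point set that is already sorted by one coordinate, so that Graham's scan runs in linear rather than the usual $\Theta(V\log V)$ time. First I would map each candidate policy $\ev_k$ to the point $P_k=(J_{\oc}(\ev_k),J_{\ms}(\ev_k))=(k/V,(1-S(k))/V)$ in the cost plane. By the definition of a supported efficient point, $\ev_k$ is a SEP exactly when $P_k$ is a vertex of the lower-left convex hull of $\{P_1,\ldots,P_V\}$---equivalently, when $(k/V,(1-S(k))/V)$ is not dominated by any convex combination of the costs of the remaining $\ev_j$. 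This is precisely the characterization by which the $q_i$ are introduced in Theorem~\ref{thm:aveCap}, so the task reduces to computing that hull.

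Next I would exploit the structural fact that the abscissae $k/V$ are strictly increasing in $k$: the input points arrive already sorted by their $x$-coordinate. Since the $\Theta(V\log V)$ cost of Graham's scan is dominated by its initial sort, the scan degenerates to a single linear sweep---maintaining a stack of tentative hull vertices and, upon inserting each $P_k$ in increasing order of $k$, popping any vertex at which the turn fails to be convex (tested by the sign of the $2\times2$ determinant of two successive edge vectors). Each index is pushed and popped at most once, which yields the $\Theta(V)$ bound; this is exactly the specialization of \cite{Graham72} recorded in \cite{BernholtEH07,LinJC02}. Here I would check that, applied to the concrete coordinates $(k/V,(1-S(k))/V)$, the convexity test retains precisely the Pareto-convex indices and discards the rest.

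The identification $K(C)=q_i$ and $L(C)=q_{i+1}$ for $q_i<C\leq q_{i+1}$ then follows directly from the two preceding results. Theorem~\ref{thm:aveCap} shows that average occupancy $C$ is optimally served by mixing $\ev_{q_i}$ and $\ev_{q_{i+1}}$ with the bracketing indices $q_i<C\leq q_{i+1}$, while the analysis of \S\ref{sec:fixed-capac-probl} establishes that the fixed-capacity $\kl$ policy with $K=q_i$ and $L=q_{i+1}$ attains the same, hence optimal, miss rate. Because these bracketing indices can be read off from the hull vertices in constant time per capacity once the $q_i$ are known, the same linear sweep simultaneously furnishes $K(C)$ and $L(C)$ for every $1\leq C\leq V$.

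The main obstacle I anticipate is not the convex-hull reduction itself, which is standard, but the careful bookkeeping needed to show that the specialized scan pops exactly the non-SEP indices and that degenerate configurations are handled correctly: when several $\ev_k$ lie on a common supporting line, or when $s$ has flat stretches making successive points collinear, the tie-breaking in the convexity test must be chosen so that the reported $q_i$ agree with the SEP definition and yield valid parameters satisfying $1\leq K(C)<C\leq L(C)\leq V$.
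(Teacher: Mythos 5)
Your proposal is correct and follows essentially the same route as the paper: the SEP policies $\ev_k$ are identified with the points $(k/V,(1-S(k))/V)$, whose lower convex hull is computed in linear time because the abscissae are already sorted, and the $K(C),L(C)$ identification is then read off from Theorem~\ref{thm:aveCap} together with the fixed-versus-average occupancy argument of \S\ref{sec:fixed-capac-probl}. The only cosmetic difference is that Algorithm~\ref{alg-KL-2} runs the scan backwards and phrases the convexity test as a comparison of segment averages $\pi[j]/\Delta[j]$ (i.e., slopes) rather than a determinant sign, which is the same test in disguise.
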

\begin{remark}
  Because of costs (\ref{eq:coc}) and (\ref{eq:cms}) finding the SEPs is
  equivalent to find the convex hull of the set $\{(1,0)\} \cup \{(j, S(j))\}_j
  \cup \{(V,0)\}$. Since these points are already ordered along the first
  coordinate we can directly apply the linear phase of Graham scan, skipping the
  initial sorting. Finally, being the points equally spaced along the first
  coordinate, the algorithm reduces to finding, for each point $j$, the next point $i$
  which maximizes the difference quotient:
  \begin{equation}
    \frac{S(i)-S(j)}{j-i+1} = \frac{\sum_{k=j}^i s(k)}{j-i+1} \mf
  \end{equation}
  This particular specialization of the Graham scan is studied in more detail in
  \cite{BernholtEH07,LinJC02}.
\end{remark}
\begin{algorithm}[H]
  \tcp{Graham scan specialization}
  \SetKw{KwDTo}{downto}
  \SetKw{KwPR}{print}
  $\nu[1] \leftarrow 1$; $\quad$
  $\pi[1] \leftarrow s(1)$; $\quad$
  $\Delta[1] \leftarrow 1$\;
  $\pi[V+1] \leftarrow 0$; $\quad$
  $\Delta[V+1] \leftarrow 1$\;
  \For{$j \leftarrow V$ \KwDTo $2$}{
    $\nu[j] \leftarrow j$; $\quad$
    $\pi[j] \leftarrow s[j]$\;
    $\Delta[j] \leftarrow 1$; $\quad$
    $n \leftarrow \nu[j]+1$\;
    \While{$\pi[j]/\Delta[j] \leq \pi[n]/\Delta[n]$}{
      $\nu[j] \leftarrow \nu[n]$\;
      $\pi[j] \leftarrow \pi[j]+\pi[n]$\;
      $\Delta[j] \leftarrow \Delta[j]+\Delta[n]$\;
      $n \leftarrow \nu[j]+1$\;
    }
  }
  \tcp{Print segmentation and priorities}
  $j \leftarrow 1$; $\quad$
  $i \leftarrow 1$\;
  \While{$j \leq V$}{
    \KwPR ``$q_i = \;$'' , $\nu[j]$\;
    $j \leftarrow \nu[j]+1$; $\quad$
    $i \leftarrow i+1$\;
  }
  $j \leftarrow 2$\;
  \While{$j \leq V$}{
    \KwPR ``$\xi(j) = \;$'' , $\pi[j]/\Delta[j]$\;
    $j \leftarrow j+1$\;
  }
  \caption{Computing \(K(C)\) and \(L(C)\) and Profit Rates priorities -- Linear
    algorithm. $K(C)=q_i < C \leq q_{i+1}=L(C)$ for some $i$.}
  \label{alg-KL-2}
\end{algorithm}

\subsubsection{The Least Profit Rate Eviction Policy}
\label{sec:least-profit-rate}

In this subsection we define a new stack policy, called Least Profit Rate
($\lpr$), by means of suitable priorities. We will see that, for any $C$, in
steady state, the LPR policy becomes the same as the $K(C)$-$L(C)$ policy, and
is therefore optimal.

\begin{definition}
  We define \(\bar s(i,j)\) as the average of \(s\) between \(i\) and \(j\):
  \begin{equation}
    \bar s(i,j) \deq \sum_{k=i}^j \frac{s(k)}{j-i+1} \mf
  \end{equation}
  Furthermore $\bar s(i)$ will indicate the moving average of $s$:
  \begin{equation}
    \bar s(i) \deq \bar s(1, i) \mf    
  \end{equation}
\end{definition}
\begin{definition}
  Let $\omega$ be an item identified by its stack depth $i$ (i.e.,
  $\Lambda(i)=\omega$). We define its profit rate $\xi$ as
  \begin{equation}
    \forall i\neq 1 \quad \xi(i) \deq \max_j \bar s(i,j) \mf
  \end{equation}
  (The profit rate of the last accessed item $\Lambda(1)$ is not defined, since
  it cannot be evicted.)
\end{definition}
\begin{definition}[Least Profit Rate]
  The $\lpr$ policy evicts the item $i^*$ in the buffer such that
  \begin{equation}
    i^* \deq \arg \min_i \xi(i) = \arg \min_i \max_j \bar s(i,j) \mf
  \end{equation}
  (If more items achieve the minimum, then the closest to the top of
   the stack is chosen.)
\end{definition}

Next, we can state the main results of this section:
\begin{theorem}\label{linathm}
  The \(K(C)\)-\(L(C)\) eviction policies (for the various $C$) are
  uniquely determined by the LPR (whose definition does not depend on $C$).
\end{theorem}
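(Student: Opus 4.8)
The plan is to prove the equivalence capacity by capacity: I will show that the single, $C$-independent LPR rule reproduces, in steady state, the optimal $K(C)$-$L(C)$ policy for every $C$, so that thresholding the profit-rate order at level $C$ returns the pair $(K(C),L(C))$ of Corollary~\ref{thm-KL-2-comp}. The first step is to tie the profit rates $\xi$ to the convex-hull segmentation $q_1<\cdots<q_l$. From $J_{\oc}(\ev_k)=k/V$ and $J_{\ms}(\ev_k)=(1-S(k))/V$, the marginal gain between consecutive SEPs is $\rho_r \deq \bar s(q_r+1,q_{r+1})$, i.e.\ the slope of the $r$-th edge of the concave majorant of the points $(k,S(k))$, whose vertices are exactly the $q_r$. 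Pareto-convexity — equivalently, the loop test of Algorithm~\ref{alg-KL-2} — gives $\rho_1>\rho_2>\cdots>\rho_{l-1}$.

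Next I would establish three facts about $\xi$, all read off from the chord identity $\bar s(i,j)=(S(j)-S(i-1))/(j-(i-1))$. (I) At a segment top, $\xi(q_r+1)=\rho_r$: the maximal chord slope leaving the hull vertex $(q_r,S(q_r))$ rightward is attained at the next vertex $q_{r+1}$ (beyond it the majorant slope only decreases) and equals $\rho_r$. (II) The $\rho_r$ strictly decrease, as above. (III) For every $i$ in segment $r$ (that is, $q_r<i\le q_{r+1}$) one has $\xi(i)\ge \bar s(i,q_{r+1})\ge\rho_r$, since $(i-1,S(i-1))$ lies on or below the $r$-th majorant edge while $(q_{r+1},S(q_{r+1}))$ lies on it, with equality in (III) only at the top depth $i=q_r+1$. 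Together (I)–(III) say the profit rates met along the stack are bounded below by the decreasing sequence of segment-top values $\rho_r$.

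With these in hand, fix $C$ and set $K=q_m<C\le q_{m+1}=L$. An LPR eviction removes the buffered item of least $\xi$, so I must identify that minimiser. By (I)–(III) every buffered item at depth $\le L$ has $\xi\ge\rho_m$ (depths $\le K$ lie in earlier segments, so $\xi>\rho_m$; depths in $[K+1,L]$ lie in segment $m$, so $\xi\ge\rho_m$, with equality only at depth $K+1$), whereas an item that has drifted to depth $L+1$ sits at the top of segment $m+1$ and has $\xi=\rho_{m+1}<\rho_m$. Here lies the one genuine difficulty: $\xi$ is \emph{not} monotone along the stack — within a segment it can rise, and by (III) a deep position can even carry a larger profit rate than the $\rho$ of an earlier segment — so a priori LPR might retain a deep, high-$\xi$ item and violate the $K$-$L$ frontier. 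The key inequality $\xi(L+1)=\rho_{m+1}<\rho_m\le\min\{\xi(i):\text{buffered }i\le L\}$ defuses this: any item reaching depth $L+1$ is the \emph{strict} minimiser and is evicted on the triggering miss, so no buffered item ever resides below depth $L+1$ and the deeper spikes are never occupied. Reading off the two surviving cases, LPR evicts the item at depth $L+1$ when present and otherwise the unique least-$\xi$ item at depth $K+1$ — exactly the two branches of the $K$-$L$ rule.

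Finally I would close with an induction on the trace showing the LPR buffer coincides with the $K(C)$-$L(C)$ buffer. The only delicate configuration is one containing neither $\Lambda(L+1)$ nor $\Lambda(K+1)$, for which the $K$-$L$ policy is itself unspecified; here I would invoke the remark of \S\ref{sec:an-optimal-stack} that evicting the least-recently-used item reaches the same recurrent class with probability $1$, so the two agree in steady state even where transient trajectories differ. Since a least-priority eviction rule is automatically a stack policy, LPR satisfies the inclusion property, and being steady-state-equal to the optimal $K(C)$-$L(C)$ policy for all $C$ at once, it is optimal for every $C$; conversely, fixing $C$ in $\xi$ returns $(K(C),L(C))$ via $K(C)=q_i<C\le q_{i+1}=L(C)$, the asserted unique determination.
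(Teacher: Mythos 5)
Your proposal is correct and follows essentially the same route as the paper's (much terser) proof: both identify the profit-rate minimiser among buffered items as depth $K+1$ via the convex-hull segmentation properties of Lemmas~\ref{sub-lemma} and \ref{sub-lemma-2}, and identify $L+1$ as the next position of strictly smaller priority, so that LPR reproduces the two branches of the $K$-$L$ rule. Your write-up supplies details the paper leaves implicit (the non-monotonicity of $\xi$ within segments, the tie at depth $K+1$ being resolved by the closest-to-top rule, and the unspecified $\kl$ configuration), which is a welcome elaboration rather than a departure.
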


\begin{lemma}
  \label{sub-lemma}
  Let \(s\) be a function from natural to positive real numbers: \(s: \mathbb{N}
  \rightarrow \mathbb{R}^+ \).  Let \(\lambda\) be the point of maximum for
  \(\bar s\).  Then for each \(q \leq \lambda\) we have
  \begin{equation}
    \bar s(\lambda) \leq \bar s(q,\lambda) \mf
  \end{equation}
\end{lemma}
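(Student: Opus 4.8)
The plan is to read the single-argument moving average $\bar s(\lambda)$ as the prefix (cumulative) average $\bar s(1,\lambda)=\frac{1}{\lambda}\sum_{k=1}^{\lambda}s(k)$, consistent with the two-argument notation $\bar s(i,j)$ already introduced, so that ``$\lambda$ is the point of maximum for $\bar s$'' means $\bar s(1,\lambda)\ge \bar s(1,m)$ for every $m$. Under this reading the lemma is exactly an instance of the mediant (weighted-average) inequality: a prefix average that is maximal at $\lambda$ cannot exceed any of its suffix-aligned sub-averages $\bar s(q,\lambda)$. I would state this interpretation up front, since it is what makes the claim both meaningful and true.

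First I would split the index range $[1,\lambda]$ into the two consecutive blocks $[1,q-1]$ and $[q,\lambda]$ and observe that $\bar s(1,\lambda)$ is the length-weighted convex combination of the two block averages, namely
\begin{equation}
\bar s(1,\lambda)=\frac{q-1}{\lambda}\,\bar s(1,q-1)+\frac{\lambda-q+1}{\lambda}\,\bar s(q,\lambda),
\end{equation}
with nonnegative weights summing to $1$ (for $q\le\lambda$ the second weight is strictly positive). Next I would invoke the maximality of $\lambda$: since $q-1<\lambda$, we have $\bar s(1,q-1)\le \bar s(1,\lambda)=\bar s(\lambda)$.

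From here the conclusion is immediate. Writing $A=\bar s(\lambda)$, the displayed identity gives $\frac{\lambda-q+1}{\lambda}\,\bar s(q,\lambda)=A-\frac{q-1}{\lambda}\,\bar s(1,q-1)\ge A-\frac{q-1}{\lambda}A=\frac{\lambda-q+1}{\lambda}A$, and dividing by the positive factor $\frac{\lambda-q+1}{\lambda}$ yields $\bar s(q,\lambda)\ge A=\bar s(\lambda)$, as required. I would dispose of the boundary case $q=1$ separately, where the left block is empty and the inequality holds with equality $\bar s(1,\lambda)=\bar s(\lambda)$.

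There is no genuine analytic difficulty here; the algebra is one line of the mediant inequality. The only point demanding care, and the place where I expect the reader's doubts to concentrate, is fixing the meaning of the overloaded symbol $\bar s(\lambda)$ versus $\bar s(i,j)$ and checking the direction of the inequality delivered by maximality, together with the degenerate $q=1$ block so that the weights in the convex combination remain well defined. Once the prefix-average reading is committed to, the remaining steps are routine.
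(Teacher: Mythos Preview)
Your proof is correct and follows essentially the same approach as the paper: both split $[1,\lambda]$ into $[1,q-1]$ and $[q,\lambda]$, write $\bar s(\lambda)$ as the length-weighted convex combination of the two block averages, and use the maximality $\bar s(q-1)\le\bar s(\lambda)$ to force $\bar s(\lambda)\le\bar s(q,\lambda)$. The paper phrases the last step via the min/max sandwich of a convex combination while you do the one-line algebra directly, and you are a bit more explicit about the $q=1$ boundary case, but the argument is the same.
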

\begin{proof}
  We can rewrite \(\bar s(\lambda)\) as
  \begin{align}
    \bar s(\lambda)=&\sum_{j=1}^{\lambda} \frac{s(j)}{\lambda}=\frac{q-1}{\lambda}\sum_{j=1}^{q-1}
    \frac{s(j)}{q-1} \\&+ \frac{\lambda-q+1}{\lambda} \sum_{j=q}^{\lambda}
    \frac{s(j)}{\lambda-q+1}=\\
    =&\frac{q-1}{\lambda} \bar s(q) + \frac{\lambda-q+1}{\lambda} \bar
    s(q,\lambda)  \mf
  \end{align}
  Being \(\bar s(\lambda)\) a convex combination of \(\bar s(q)\) and \(\bar
  s(q,\lambda)\) the following inequality holds:
  \begin{equation}
  \min \left\{\bar s(q),\bar s(q,\lambda) \right\} \leq \bar s(\lambda) \leq \max
  \left\{\bar s(q),\bar s(q,\lambda) \right\} \mf
  \end{equation}
  Since for hypothesis we have \(\bar s(\lambda) \geq \bar s(q)\) we must have
  \begin{equation}
  \bar s(q) \leq \bar s(\lambda) \leq \bar s(q,\lambda)  \mf
  \end{equation}
\end{proof}

\begin{lemma}
  \label{sub-lemma-2}
  Let \(s\) be a function from natural to positive real numbers: \(s: \mathbb{N}
  \rightarrow \mathbb{R}^+ \).  Let \(\lambda\) be the point of maximum for
  \(\bar s\).  Then for each \(r>\lambda\) we have
  \begin{equation}
  \bar s(\lambda+1,r) \leq \bar s(\lambda) \mf
  \end{equation}
\end{lemma}
\begin{proof}
  We have the following convex combination:
  \begin{equation}
  \bar s(r)= \frac{\lambda}{r} \bar s(\lambda) + \frac{r-\lambda}{r} \bar s(\lambda+1,r) \mc
  \end{equation}
  with \(\bar s(\lambda) \geq \bar s(r)\) so we must have
  \begin{equation}
  \bar s(\lambda+1,r) \leq \bar s(r) \leq \bar s(\lambda) \mf
  \end{equation}
\end{proof}
\begin{remark}
  Lemmas \ref{sub-lemma} and \ref{sub-lemma-2} highlight the following
  useful properties of the sequence $(q_1,q_2, \ldots, q_l)$ obtained by
  Algorithm~\ref{alg-KL-2}:
  \begin{enumerate}
  \item The average value of $s$ is strictly decreasing between segments: $\bar
    s(1+q_i,q_{i+1}) > \bar s(1+q_{i+1},q_{i+2})$
  \item Within each segment the average of any prefix is smaller or equal to
    that of any suffix: $\forall k \in [1+q_i,q_{i+1}]$, then $\bar s(1+q_i,k) \leq
    \bar s(1+q_i,q_{i+1}) \leq \bar s(k,q_{i+1})$
  \end{enumerate}
\end{remark}
\begin{proof}[{\bf Proof of Thm.~\ref{linathm}}]
  Starting with an empty buffer we evict the first time when the top $C$
  position of the LRU stack are filled. The position $i^* = \arg \min_i \max_j
  \bar s(i,j)$ is exactly $K+1$ (as can be seen applying Lemmas~\ref{sub-lemma}
  and \ref{sub-lemma-2}). The following position $l$ that has $\max_j
  \bar s(l,j)< \max_j \bar s(i^*,j)$ is $L+1$, so each time an item reaches that
  position it is evicted (it can reach it only after a miss, since the positions
  after $L$ are not in the buffer).
\end{proof}

Below, we give a general formulation of the concepts of profit and of
\emph{profit rate} in the HMRM. For the definition we adopt the single item
average occupancy model. Given an item $\omega$ and a time $t$,
consider a single item eviction policy $\mu$ to determine, for any
underlying state $z$ of the CG, whether $\omega$ is kept in the buffer
or evicted upon reaching that state.  We call $\mu$-profit of $\omega$
the probability $\pi^{\mu}$ that $\omega$ is referenced before it is
evicted, which is a measure of how useful it would be to keep $\omega$
in the buffer under the policy. Let then $t+\Delta$, with $\Delta>0$,
be the earliest time after $t$ such that $\omega$ is either referenced
or evicted at time $t+\Delta$.  Clearly, $\E^{\mu}[\Delta]$ is a
measure of the storage investment made on $\omega$ to reap that
profit.  Therefore, the quantity $\pi^{\mu}/ \E^{\mu}[\Delta]$ is a
measure of profit per unit time, under policy $\mu$.  Finally, we call
\emph{profit rate} of $\omega$ the maximum profit rate achievable for
$\omega$, as a function of $\mu$.
(It ought to be clear that profits and profit rates depend upon the current
underlying state $z$ of the CG, although this dependence has not been reflected
in the notation, for simplicity.)

As for the LRUSM, let consider a buffered item at position $i$ in the
LRU stack. If we choose to keep it in the buffer until it goes past
position $j$, then the item is going to spend on average the same time
in each position between $i$ and $j$ and hence its profit per unit
time will be $\bar s(i,j)$.  This function has a maximum for some
value of $j$ that defines the item profit rate, as shown above.

The \emph{Least Profit Rate} (LPR) policy evicts, upon a miss, a page
in the buffer with minimum profit rate.  Profit rates \emph{are
independent of buffer capacity}, hence can be viewed as priorities. If
ties are resolved consistently for all buffer capacities, LPR
satisfies the inclusion property. In general, LPR is a reasonable
heuristic, but not necessarily an optimal policy in the fixed
occupancy model.

\section{On-line vs.\ Off-line Optimality}
\label{sec:olol}
Intuitively, the optimal off-line policy makes the best possible use
of the complete knowledge of the future address trace, whereas the
optimal on-line has only a statistical knowledge of the future. We
compare these two information conditions via the \emph{stochastic
competitive ratio}, defined as the following functional of the
distribution $s$:
\begin{align}
  \chi[s] \deq \frac{M^{\lpr}[s]}{M^{\opt}[s]}\mc
\end{align}
where $M^{\lpr}[s]$ and $M^{\opt}[s]$ denote the expected miss rates
(technically, the limit of the expected number of misses per step over
an interval of diverging duration, as considered in gain
optimality). The next theorem states the key result of this section.

\begin{theorem}\label{thm:compet}
  For any stack access distribution $s$ and buffer capacity $C$, $\chi[s]
  =O (\ln C)$.  If $M^{\lpr}[s] \geq \frac{1}{C}$, then the bound can be
    tightened as $\chi[s] \in O\left(\ln\frac{1}{M^{\lpr}[s]}\right)
  $.
\end{theorem}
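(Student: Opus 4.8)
The plan is to sandwich $\chi[s]$ by pinning down $M^{\lpr}[s]$ \emph{exactly} and lower-bounding $M^{\opt}[s]$, then balancing a free window-length parameter. Write $a_k \deq 1-S(k)=\Pb[d>k]$ for the tail of the stack-distance distribution, so $a_0=1$ and $a_C=M^{\lru}[s]$. First I would record the shape of the on-line optimum: by Theorem~\ref{thm:aveCap} together with \S\ref{sec:fixed-capac-probl}, the optimal fixed-capacity miss rate equals the optimal average-occupancy miss rate, i.e.\ the convex combination of the SEP endpoints $\ev_{q_i},\ev_{q_{i+1}}$ bracketing $C$. Since the SEP costs are the points $(q_i,a_{q_i})$ and the $q_i$ are the vertices of their lower convex hull, this says $M^{\lpr}[s]=\hat a(C)$, where $\hat a$ is the lower convex envelope of $k\mapsto a_k$. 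Thus $M^{\lpr}[s]\le a_C$ but, crucially, can be far smaller (as in a near round-robin $s$), so the crude inequality $M^{\lpr}\le M^{\lru}$ must \emph{not} be used.

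Next I would prove the off-line lower bound $M^{\opt}[s]\ge (n-C)/\sum_{k=0}^{n-1}(1/a_k)$ for every $C<n\le V$. The combinatorial core is the classical fact that any capacity-$C$ algorithm faults at least $D-C$ times on a block containing $D$ distinct items. To make this a rate I would track, inside a block started afresh, the number $D_t$ of distinct items seen so far: an access is new exactly when its stack depth exceeds $D_{t-1}$, and, the depths being i.i.d.\ $\sim s$, this makes $D_t$ a pure birth chain whose increment probability from level $k$ is $\Pb[d>k]=a_k$, \emph{independently of the stack configuration}. Hence the expected number of accesses needed to collect $n$ distinct items is the sum of geometric means $\sum_{k=0}^{n-1}1/a_k$, and a renewal/Wald argument over consecutive i.i.d.\ blocks, each forcing $\ge n-C$ OPT misses, yields the stated bound.

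Combining the two ingredients gives, for every admissible $n$, $\chi[s]\le \hat a(C)\,(\sum_{k=0}^{n-1}1/a_k)/(n-C)$, and since $a_k\ge\hat a(k)$ I may replace $a_k$ by the convex function $h\deq\hat a$ in the denominator. I would then take $n$ to be the first index past $C$ with $h(n)\le h(C)/2$ and use convexity twice: the supporting line of $h$ at $C$ (slope $-s$, with $s=|h'(C)|$) forces $n-C\ge h(C)/(2s)$ and also bounds $\sum_{k<C}1/h(k)\le \tfrac1s\ln(1+sC/h(C))$, while on $[C,n)$ one has $h\ge h(C)/2$ so that part of the sum contributes $\le 2(n-C)/h(C)$. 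Assembling these and using the envelope identity $\sum_{k\le C}|h'|=1-h(C)$, whence $sC\le 1-h(C)$, collapses everything to $\chi[s]\le 2\ln(1+\tfrac{1-h(C)}{h(C)})+2=2\ln(1/M^{\lpr}[s])+O(1)$. This is the advertised $O(\ln(1/M^{\lpr}))$, and it is $O(\ln C)$ the moment $M^{\lpr}[s]\ge 1/C$.

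I expect the main obstacle to be the fully \emph{unconditional} $O(\ln C)$ bound in the regime $M^{\lpr}[s]<1/C$: there the distinct-items estimate alone produces $2\ln(1/M^{\lpr})$, which can exceed $\ln C$, even though the true ratio stays small because so tiny an LPR miss rate signals strong locality under which OPT gains little (so my off-line lower bound is merely loose there, not the ratio large). I would close this case either by re-centering the halving threshold at $\Theta(1/C)$ rather than at $M^{\lpr}/2$, or by supplementing the off-line lower bound with a second estimate valid for concentrated $s$. The two technical linchpins throughout are the clean convex-envelope identity $M^{\lpr}[s]=\hat a(C)$ and the configuration-independence of the distinct-count birth chain that powers the renewal step.
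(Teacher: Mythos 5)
Your route is genuinely different from the paper's and is essentially sound. The paper also starts from the same two ingredients --- the renewal lower bound $M^{\opt}[s]\geq G\bigl(\sum_{j=0}^{C+G-1}\tfrac{1}{1-S(j)}\bigr)^{-1}$ (Prop.~\ref{bellb}, identical to your ``distinct-items birth chain'' argument with $G=n-C$) and the identification of $M^{\lpr}[s]$ with a convex combination of the SEP costs $(q_i,1-S(q_i))$ --- but then, instead of working directly with the convex envelope, it replaces $s$ by a quasi-uniform distribution $s'$ (Prop.~\ref{sigmaeta}) that preserves $M^{\lpr}$ while only decreasing the OPT lower bound, and evaluates the resulting closed form (Prop.~\ref{prop:quasi-uniform-lopt}). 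Your substitution $a_k\mapsto\hat a(k)$ in the denominator of the OPT bound plays exactly the role of the paper's flattening of $s$ to $\eta=\bar s(K+1,L)$; your version keeps the whole envelope where the paper keeps only the linear piece through $C$, and your choice of $n$ via the halving of $\hat a$ replaces the paper's choice $G=(W-C)/2$. The envelope identity $M^{\lpr}[s]=\hat a(C)$ and the observation that $M^{\lpr}\leq M^{\lru}$ must not be used are both correct and show you have the right picture.

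The one gap is the one you flag yourself: as assembled, your bound is $2\ln\bigl(1+sC/\hat a(C)\bigr)+O(1)$ followed by $sC\leq 1-\hat a(C)$, which yields only $O\bigl(\ln(1/M^{\lpr})\bigr)$ and so does not give the unconditional $O(\ln C)$ when $M^{\lpr}[s]<1/C$. But you do not need to re-center the threshold or add a second estimate: take $s$ to be the \emph{right} one-step drop of the envelope at $C$, i.e.\ $s\deq \hat a(C)-\hat a(C+1)$. Convexity still gives both inequalities you need ($\hat a(k)\geq\hat a(C)+s(C-k)$ for $k<C$, and $\hat a(C)-\hat a(n)\leq s(n-C)$ since the drops are non-increasing), and now $s\leq\hat a(C)$ trivially because $\hat a(C+1)\geq 0$. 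Hence $sC/\hat a(C)\leq C$, and your penultimate bound already reads $\chi[s]\leq 2\ln\bigl(1+\min\{C,\,(1-\hat a(C))/\hat a(C)\}\bigr)+O(1)$, which is simultaneously $O(\ln C)$ unconditionally and $O\bigl(\ln(1/M^{\lpr}[s])\bigr)$ when $M^{\lpr}[s]\geq 1/C$. With that one-line fix your argument proves both claims of the theorem.
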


To gain some perspective on the $O(\ln C)$ bound for LPR, we observe
that the stochastic competitive ratio of LRU can be as high as $C$,
(take $s : s(C+1)=1 $). We also remark that, for classes of
distributions where the miss rate of LPR is bounded from below by a
constant, the competitive ratio is bounded from above by a
corresponding constant.

Theorem ~\ref{thm:compet} is established through several intermediate
results:
\begin{enumerate}
\item A lower bound $L^{\opt}[s]$ is developed for the (difficult
  to evaluate) quantity $M^{\opt}[s]$, which yields a manageable upper
  bound to $\chi[s]$ (Prop.~\ref{bellb}). (An analog lower bound for
  the deterministic case can be found in \cite{PanagiotouS06}.)
\item An upper bound to the competitive ratio is evaluated for a quasi
  uniform distribution, which is analytically tractable
  (Prop.~\ref{prop:quasi-uniform-lopt}).
\item Finally, the analysis of the stochastic competitive ratio of an arbitrary
  distribution $s$ is reduced to that of a related, quasi uniform distribution
  $s'$ (Prop.~\ref{sigmaeta}).
\end{enumerate}
Steps 1 and 3 lead to the following chain of inequalities:
\begin{align}
  \chi[s]=\frac{M^{\lpr}[s]}{M^{\opt}[s]} \leq \frac{M^{\lpr}[s]}{L^{\opt}[s]}
  \leq \frac{M^{\lpr}[s']}{L^{\opt}[s']} \mf
\end{align}

\begin{proposition}\label{bellb}
  Under the LRUSM, the miss rate of OPT is bounded from below as $ M^{\opt}[s]
  \geq L^{\opt}[s]$, where, for $G\in\{1,\ldots,V-C\}$,
  \begin{align}\label{eq:lopt}
    &L^{\opt}_G[s]\deq G\left(\sum_{j=0}^{C+G-1} \frac{1}{1-S(j)}\right)^{-1}
    \mc &L^{\opt}[s] \deq \max_{G\in\{1,\ldots,V-C\}} L^{\opt}_G[s] \mf
  \end{align}
\end{proposition}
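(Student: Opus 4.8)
The plan is to bound $M^{\opt}[s]$ from below by a flow (Little's-law) argument applied to the items that occupy the top $C+G$ positions of the LRU stack yet are kept \emph{outside} OPT's buffer. Fix $G\in\{1,\ldots,V-C\}$ and, in steady state, let $W$ be the set of $C+G$ items occupying stack positions $1,\ldots,C+G$, and let $E_t\subseteq W$ consist of those members of $W$ that are not in OPT's buffer at time $t$. Since the buffer holds at most $C$ items, at most $C$ of the $C+G$ members of $W$ can be buffered, so $|E_t|\ge G$ at every time once at least $C+G$ distinct items have been seen.

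Next I would analyze the dynamics of $E$. An item enters $E$ only when OPT evicts it while it occupies a position $\le C+G$, and it leaves $E$ in exactly one of two ways: it is accessed (necessarily a miss, since it is outside the buffer), or it drifts below position $C+G$ through an access of depth $>C+G$. Crucially, an excluded item cannot re-enter the buffer until it is accessed, so while in $E$ it evolves by the free drift of the characteristic generator of Section~\ref{sec:aver-occup-probl}. Writing $\rho_{in}$ for the long-run rate at which items enter $E$ and $\bar\tau$ for their mean sojourn time in $E$, Little's law (valid in steady state by the unichain hypothesis) gives $G\le\E[|E_t|]=\rho_{in}\,\bar\tau$. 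Since each entry into $E$ is an OPT eviction and evictions occur only on misses, the entry rate is dominated by the miss rate: $\rho_{in}\le M^{\opt}[s]$.

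It then remains to show $\bar\tau\le T_{C+G}\deq\sum_{j=0}^{C+G-1}\frac{1}{1-S(j)}$, which yields $M^{\opt}[s]\ge\rho_{in}\ge G/\bar\tau\ge G/T_{C+G}=L^{\opt}_G[s]$, and maximizing over $G$ gives the claim. An item entering $E$ at position $i$ leaves $E$ when it is first accessed or first reaches position $C+G+1$. By Proposition~\ref{thm:time-spent-lrusm}, the expected time it spends at a position $k\in\{i,\ldots,C+G\}$ (already weighted by the chance of reaching $k$ without being accessed) is $t^i_k=1/(1-S(i-1))$, independent of $k$, so the mean sojourn from entry position $i$ is $\tau(i)=\sum_{k=i}^{C+G}t^i_k=(C+G-i+1)/(1-S(i-1))$. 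Because $1/(1-S(\cdot))$ is nondecreasing, $\tau(i)=\sum_{k=i}^{C+G}\frac{1}{1-S(i-1)}\le\sum_{k=i}^{C+G}\frac{1}{1-S(k-1)}\le T_{C+G}$ for every $i$, and hence $\bar\tau\le\max_i\tau(i)\le T_{C+G}$.

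The main obstacle I expect is the rigorous stochastic bookkeeping behind the two inequalities $\rho_{in}\le M^{\opt}$ and $\bar\tau\le T_{C+G}$: one must argue that entries into $E$ can be charged injectively to misses (each eviction accompanies a distinct miss once the buffer is full), and that an excluded item genuinely follows the unconditioned generator dynamics so that Proposition~\ref{thm:time-spent-lrusm} applies verbatim to its sojourn. Establishing stationarity of $|E_t|$ and the existence of the long-run entry rate $\rho_{in}$—both needed to invoke Little's law—relies on the unichain assumption; recasting the whole argument as a renewal-reward computation over the lifetimes of Section~\ref{sec:aver-occup-probl} would be the cleanest way to make these three ingredients simultaneously precise.
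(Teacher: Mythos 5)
Your flow argument on the excluded set $E_t$ is a genuinely different route from the paper's, but it has a gap at its central step, namely the bound $\bar\tau\le T_{C+G}$ on the mean sojourn in $E$. Proposition~\ref{thm:time-spent-lrusm} (and the Wald/renewal--reward computation you propose) applies to a sojourn that begins at a \emph{stopping time} of the i.i.d.\ stack-depth sequence, so that the post-entry depths are distributed unconditionally. Entries into $E$, however, occur exactly when OPT evicts, and OPT is clairvoyant: the event ``$\omega$ is evicted at time $t$ into position $i$'' is measurable with respect to the \emph{future} depths (OPT evicts the buffered item whose next access is furthest away). Conditioned on this event the subsequent depth sequence is biased --- toward realizations in which $\omega$ is not accessed for a long time and, more damagingly, possibly toward realizations in which near-future depths are small, so that $\omega$ lingers at its current position well beyond the unconditional expectation $1/(1-S(i-1))$. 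Thus the sample-path identity $\E[|E_t|]=\rho_{in}\bar\tau$ (which is fine) gets combined with a value of $\bar\tau$ computed under the wrong measure; the inequality $\bar\tau\le T_{C+G}$ averaged over OPT-selected entry times is precisely what would require proof, and it does not follow from Proposition~\ref{thm:time-spent-lrusm}. You flag this as ``stochastic bookkeeping,'' but it is the crux rather than a technicality. (Your other two ingredients, $|E_t|\ge G$ and $\rho_{in}\le M^{\opt}[s]$, are correct.)

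The paper sidesteps the issue entirely by partitioning the \emph{time axis} rather than following evicted items: the trace is cut into minimal consecutive segments each containing exactly $C+G$ distinct references. The segment boundaries are stopping times of the i.i.d.\ depth process alone (the count of distinct items grows from $j$ to $j+1$ after a geometric wait with parameter $1-S(j)$), so the segment lengths are i.i.d.\ with mean $\sum_{j=0}^{C+G-1}1/(1-S(j))$ regardless of the policy; and each segment forces at least $G$ misses for \emph{any} policy by pure counting, since at most $C$ of its $C+G$ distinct items can be buffered when it starts. If you want to salvage your accounting, the cleanest repair is essentially to adopt this segmentation: it replaces per-item sojourns, whose start times OPT controls using knowledge of the future, by per-segment waiting times, whose start times OPT cannot influence.
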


\begin{proof} For a given value of $G$, we consider a partition of the trace
  $a_1, a_2, \ldots$ into consecutive segments, each minimal under the
  constraint that it contains exactly $C+G$ distinct references. Let $\tau_i$
  denote the number of steps in the $i$-th such segment.  The random variables
  $\tau_1, \tau_2, \ldots$ are statistically independent and identically
  distributed. Any one of them, generically denoted $\tau$, can be decomposed as
  \begin{equation}
    \tau \deq \sum_{j=0}^{C+G-1} \phi_j \mc
  \end{equation}
  where $\phi_j$ is the minimum number of steps, starting at some fixed
  time, to observe the first access with stack distance greater than $j$. It
  can be easily seen that $\phi_j$ is a geometric random variable, with
  parameter $p_j=1-S(j)$, $\Pb[\phi_j=k]=(1-p_j)^{k-1}p_j$, expected
  value $1/p_j$, and finite variance. By linearity of expectation we have:
  \begin{equation}
    \E[\tau] =  \sum_{j=0}^{C+G-1} \frac{1}{1-S(j)} \mf
  \end{equation}
  Under any policy, including OPT, in each of the intervals of
  durations $\tau_1,\tau_2, \ldots $, there occur at least $G$ misses,
  since at most $C$ of the referenced items could be initially in the
  buffer. Therefore, we can write the following chain of relations:
  \begin{align}
    \begin{split}
      M^{OPT}[s] \geq& \lim_{q\rightarrow\infty}
     \E\left[\frac{q G}{\sum_{i=1}^{q}\tau_i}\right]
     = G \lim_{q\rightarrow\infty} \E\left[\frac{1}{\bar \tau_q}\right]
      = G \E \left[ \frac{1}{\lim_{q\rightarrow\infty} \bar \tau_q}\right]
      = \frac{G}{\E[\tau]} = L^{\opt}_G[s]\mc
    \end{split}
  \end{align}
  where $\bar \tau_q\deq\sum_{i=1}^q \frac{\tau_i}{q}$. The
  interchange between limit and expectation is justified because (i)
  by the law of large numbers, $\bar \tau_q$ converges in distribution
  to the delta peaked at $\E[\tau]$ and (ii) the function $1/x$ is
  continuous and bounded within the support of $\bar \tau_q$ (which
  equals $\{x\in\Z : x\geq C+G\}$ since, for each $i$, $\tau_i \geq
  C+G$) \cite{Loeve77}.
\end{proof}

\begin{proposition}\label{prop:quasi-uniform-lopt}
  Let $s'$ be the distribution defined as
  \begin{equation}
    s'(j) \deq
    \begin{cases}
      \sigma & j=1\\
      \eta & j\in \{2,\ldots,V-1\}\\
      \eta' & j=V 
    \end{cases}\mf
  \end{equation}
  Then the following lower bound for OPT miss rate holds:
  \begin{equation}
    M^{\opt}[s'] \geq  L^{\opt}[s'] \geq \frac{V-C}{2}\left[ 1+\frac{1}{\eta} 
      \ln\left( \frac{(V-2)\eta+\eta'}{\left(\frac{V-2-C}{2}\right)\eta + \eta'}
      \right) \right]^{-1} \mf
  \end{equation}
\end{proposition}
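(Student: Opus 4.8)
The first inequality $M^{\opt}[s'] \geq L^{\opt}[s']$ is simply Proposition~\ref{bellb} specialized to the distribution $s'$, so the entire task reduces to establishing the second inequality, a lower bound on $L^{\opt}[s'] = \max_G L^{\opt}_G[s']$. Since the maximum dominates every individual term, it suffices to exhibit one value of $G$ for which $L^{\opt}_G[s']$ already meets the claimed bound; the factor $(V-C)/2$ appearing on the right-hand side strongly suggests the choice $G = (V-C)/2$, the midpoint of the admissible range $\{1,\ldots,V-C\}$, which I would adopt.

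The plan is first to make the summand $1/(1-S(j))$ explicit for $s'$. A direct computation of the cumulative distribution gives $S(j) = \sigma + (j-1)\eta$ for $1 \leq j \leq V-1$, whence $1 - S(j) = (V-1-j)\eta + \eta'$ for $1 \leq j \leq V-1$ (using $\sigma + (V-2)\eta + \eta' = 1$), while $1 - S(0) = 1$. With $G = (V-C)/2$ the upper summation index becomes $C+G-1 = (V+C)/2 - 1$, so the defining sum splits as $1 + \sum_{j=1}^{(V+C)/2-1} [(V-1-j)\eta + \eta']^{-1}$. Reindexing by $k = V-1-j$ turns the second term into $\sum_{k=(V-C)/2}^{V-2} (k\eta + \eta')^{-1}$, a sum of a decreasing sequence.

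The key quantitative step is to upper-bound this sum by an integral (upper bounding the denominator sum lower-bounds $L^{\opt}_G$). Since $x \mapsto (x\eta + \eta')^{-1}$ is decreasing, the standard comparison $\sum_{k=a}^{b} f(k) \leq \int_{a-1}^{b} f(x)\,dx$ with $a = (V-C)/2$ and $b = V-2$ gives $\sum_{k=(V-C)/2}^{V-2}(k\eta+\eta')^{-1} \leq \int_{(V-2-C)/2}^{V-2}(x\eta+\eta')^{-1}\,dx = \tfrac{1}{\eta}\ln\!\big(((V-2)\eta+\eta')/((\tfrac{V-2-C}{2})\eta+\eta')\big)$, where I used $(V-C)/2 - 1 = (V-2-C)/2$ and the antiderivative $\eta^{-1}\ln(x\eta+\eta')$. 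Substituting back, the full denominator is at most $1 + \eta^{-1}\ln(\cdots)$, and taking reciprocals and multiplying by $G = (V-C)/2$ yields exactly the asserted bound on $L^{\opt}_{(V-C)/2}[s'] \leq L^{\opt}[s']$.

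The only real technical wrinkle I anticipate is the parity of $V-C$: strictly, $G$ must be a positive integer, so when $V-C$ is odd one cannot take $G = (V-C)/2$ literally. I would handle this either by stating the bound under the harmless assumption that $V-C$ is even, or by taking $G = \lfloor(V-C)/2\rfloor$ and checking that the resulting constants differ from the stated ones only in lower-order terms that do not affect the subsequent competitive-ratio argument of Theorem~\ref{thm:compet}. All remaining manipulations---evaluating $S(j)$, the reindexing, and the logarithmic integral---are routine.
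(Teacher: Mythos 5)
Your proposal is correct and follows essentially the same route as the paper's proof: apply Proposition~\ref{bellb}, compute $1-S(j)=(V-1-j)\eta+\eta'$, reindex the sum, bound it above by the logarithmic integral, and set $G=(V-C)/2$. Your additional remark about the integrality of $G$ when $V-C$ is odd is a legitimate (minor) point of care that the paper silently glosses over.
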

\begin{proof}
  Applying Prop.~\ref{bellb} we can write
  \begin{align}
    L^{\opt}_G[s']&=G\left[1+\sum_{j=1}^{C+G-1} \frac{1}{1-S(j)}\right]^{-1}
    = G\left[1+\sum_{j=1}^{C+G-1} \frac{1}{\eta'+(V-j-1)\eta}\right]^{-1} \\
    &= G\left[1+\sum_{k=V-C-G}^{V-2} \frac{1}{\eta'+k\eta}\right]^{-1}
    \geq G\left[1 + \frac{1}{\eta} \ln \left(
        \frac{(V-2)\eta+\eta'}{(V-C-G-1)\eta + \eta'} \right) \right]^{-1} \mf
  \end{align}
  By setting $G=\frac{V-C}{2}$ the thesis follows.
\end{proof}

\begin{lemma}\label{belb}
  If $\forall j \; S_1(j) \geq S_2(j)$ then $L^{\opt}[s_1] \leq L^{\opt}[s_2]$.
\end{lemma}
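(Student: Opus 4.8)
The plan is to prove the inequality termwise in the auxiliary parameter $G$ and then pass to the maximum, so the whole statement reduces to a monotonicity observation. First I would note that the hypothesis $S_1(j) \ge S_2(j)$ for all $j$ gives, at every index appearing in the sum of \eqref{eq:lopt}, the chain $1 - S_1(j) \le 1 - S_2(j)$ and hence $\frac{1}{1-S_1(j)} \ge \frac{1}{1-S_2(j)}$. Before exploiting this I must verify that all these denominators are strictly positive. This is where the restriction built into the definition of $L^{\opt}$ matters: the outer maximum ranges only over $G \in \{1,\dots,V-C\}$, so the summation index never exceeds $C+G-1 \le V-1$, and since we assume $s(V)\neq 0$ we have $S(j) < 1$ for every $j \le V-1$. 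Thus each summand is well defined and positive for both $s_1$ and $s_2$.

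Summing the pointwise inequality over $j = 0, \dots, C+G-1$ yields $\sum_{j} \frac{1}{1-S_1(j)} \ge \sum_{j} \frac{1}{1-S_2(j)} > 0$. Taking reciprocals reverses the inequality, and multiplying through by the positive constant $G$ gives, for every fixed admissible $G$,
\[
  L^{\opt}_G[s_1] \le L^{\opt}_G[s_2].
\]

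It remains to take the maximum over $G$. For every admissible $G$ one has $L^{\opt}_G[s_1] \le L^{\opt}_G[s_2] \le \max_{G'} L^{\opt}_{G'}[s_2] = L^{\opt}[s_2]$, so the left-hand maximum is itself bounded by $L^{\opt}[s_2]$, which is exactly $L^{\opt}[s_1] \le L^{\opt}[s_2]$.

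I do not expect any genuine obstacle here: the claim is a pure monotonicity fact, and the single point that requires a moment of care is the positivity of the denominators, guaranteed precisely by the constraint $G \le V-C$ together with the standing assumption $s(V)\neq 0$. In particular, the argument uses no structural feature of $s_1$ or $s_2$ beyond the pointwise domination of their cumulative distributions, which is why the conclusion holds for arbitrary such pairs.
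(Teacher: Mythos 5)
Your proof is correct and follows essentially the same route as the paper's, which simply observes that $L^{\opt}$ is decreasing in each $S(j)$ via the sum $\sum_{j}\frac{1}{1-S(j)}$; you merely spell out the termwise comparison, the reciprocal step, and the passage to the maximum over $G$ in more detail, together with the (valid) positivity check on the denominators.
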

\begin{proof}  From the definition of $L^{\opt}$ we can see that it is decreasing with
  $\sum_{g=0}^{C+G-1} \frac{1}{1-S(g)}$, and hence decreasing with each $S(j)$.
\end{proof}

\begin{proposition}\label{sigmaeta}
  Let $s$ be an LRU stack access distribution, $C$ the buffer capacity and $\lpr$
  its associated optimal online policy. Consider $s'$ defined as follows:
  \begin{equation}
    s'(j)=
    \begin{cases}
      \sigma & j=1\\
      \eta & j\in \{2,\ldots,L+D\}\\
      \eta' & j=L+D+1\\
      0 & j>L+D+1
    \end{cases}\mc
  \end{equation}
  where $\eta \deq \bar s(K+1,L)$, $\sigma \deq S(K) - (K-1)\eta$, $D \deq
  \left\lceil \frac{1-S(L)}{\eta} \right\rceil - 1$, $\eta' \deq 1-S(L)-D\eta$
  (note that $0<\eta'\leq \eta$), then $s'$ is a valid distribution and
  $L^{\opt}[s'] \leq L^{\opt}[s]$ and $M^{\lpr}[s'] = M^{\lpr}[s]$.
\end{proposition}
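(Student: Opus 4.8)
The plan is to dispatch the three assertions---validity of $s'$, the inequality $L^{\opt}[s'] \le L^{\opt}[s]$, and the identity $M^{\lpr}[s'] = M^{\lpr}[s]$---after first recording the closed form of the cumulative distribution $S'$. Summing the definition of $s'$ gives, for $1 \le j \le L+D$,
\[
  S'(j) = \sigma + (j-1)\eta = S(K) + (j-K)\eta,
\]
since $\sigma = S(K)-(K-1)\eta$. Two identities fall out immediately and drive the whole argument: $S'(K)=S(K)$, and, using $\eta = \bar s(K+1,L) = (S(L)-S(K))/(L-K)$, also $S'(L)=S(L)$. Thus $1-S'$ is \emph{affine} in $j$ on $[1,L+D]$ with slope $-\eta$, interpolating the graph of $1-S$ at the two eviction points $K$ and $L$. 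For validity I would check that $S'(L+D+1)=S(L)+D\eta+\eta'=1$ (exactly how $D$ and $\eta'$ were defined), that $\eta,\eta'>0$, and that $\sigma>0$: the latter follows because positions $2,\dots,K$ decompose into complete segments, each of average strictly greater than that of $[K+1,L]$ (which is $\eta$) by Property 1 of the Remark, whence $\sum_{m=2}^{K}s(m) \ge (K-1)\eta$ and $\sigma = S(K)-(K-1)\eta \ge s(1) > 0$. A short side computation using $\bar s(L+1,V)<\eta$ also confirms $L+D+1\le V$, so the support fits inside the virtual space.

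For $L^{\opt}[s'] \le L^{\opt}[s]$ I would invoke Lemma~\ref{belb}, reducing the claim to the pointwise domination $S'(j) \ge S(j)$ for every $j$. I expect this to be the main obstacle, as it must be verified on three ranges and involves averages spanning several segments. On $K \le j \le L$ I would write $S'(j)-S(j) = (j-K)\bigl(\eta - \bar s(K+1,j)\bigr) \ge 0$, nonnegative because $\bar s(K+1,j) \le \bar s(K+1,L)=\eta$ by Property 2 (a prefix of the segment $[K+1,L]$ has average at most that of the whole segment). On $j<K$ I would write $S'(j)-S(j) = (K-j)\bigl(\bar s(j+1,K)-\eta\bigr)$ and argue $\bar s(j+1,K) \ge \eta$: the block $[j+1,K]$ is a suffix of some earlier segment followed by complete later segments up to $q_i$, and every such piece has average at least $\eta$ (Properties 1 and 2), so their weighted average does too. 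On $L<j\le L+D$, $S'(j)-S(j)=(j-L)\bigl(\eta-\bar s(L+1,j)\bigr)\ge 0$, since $[L+1,j]$ is a prefix of the segments following $L$, all of average below $\eta$; finally for $j>L+D+1$ we have $S'(j)=1\ge S(j)$ trivially.

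For the miss-rate identity I would use that LPR is gain optimal (Theorem~\ref{linathm} together with the fixed-occupancy argument of \S\ref{sec:fixed-capac-probl}), so that $M^{\lpr}[\cdot]$ at capacity $C$ equals the optimal fixed-capacity miss rate, i.e.\ the value at abscissa $C$ of the lower convex hull of the points $\{(k,1-S(k))\}$. For $s$, since $K=q_i<C\le q_{i+1}=L$, that value is the linear interpolation between $(K,1-S(K))$ and $(L,1-S(L))$, namely $1-S(K)-(C-K)\eta$, which is precisely $M^{\kl}[s]$. For $s'$, the points $(k,1-S'(k))$ are collinear for $1\le k\le L+D$ by the affineness established above, so $\ev_C$ already lies on the hull and the optimal miss rate is simply $1-S'(C)=1-S(K)-(C-K)\eta$. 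The two expressions coincide, giving $M^{\lpr}[s']=M^{\lpr}[s]$ and completing the proof.
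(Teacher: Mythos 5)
Your proof is correct and follows essentially the same route as the paper: establish that the transformation preserves total mass and nonnegativity, show $S'(j)\ge S(j)$ pointwise via the segment-average properties (Lemmas~\ref{sub-lemma} and \ref{sub-lemma-2}) so that Lemma~\ref{belb} gives $L^{\opt}[s']\le L^{\opt}[s]$, and then check that both miss rates equal $1-S(K)-(C-K)\eta$. You actually supply several verifications the paper leaves implicit (positivity of $\sigma$, the bound $L+D+1\le V$, and the three-range case analysis for $S'\ge S$), and your convex-hull phrasing of the miss-rate identity is just a repackaging of the paper's direct computation $M^{\lpr}[s]=1-S(L)+(L-C)\eta=(L+D-C)\eta+\eta'=M^{\lpr}[s']$.
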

\begin{proof}  We begin by observing (due to Lemmas~\ref{sub-lemma} and \ref{sub-lemma-2})
  $\bar s(2, K)>\bar s(K+1, L)>\bar s(L+1, V)$.  The transformation from $s$ to
  $s'$ can be intuitively described as follows.
  \begin{compactitem}
  \item We first ``flatten'' to $\eta=\bar s(K+1, L)$ the access distribution
    within the segment $[K+1, L]$ (in this operation no probability mass is
    moved outside the segment).
  \item We set $s(j)=\eta$ in the interval $[2,K]$. Since $\bar s(2, K)>\bar
    s(K+1, L)$ some probability mass is removed from $[2, K]$.
  \item We add to $s(1)$ the probability removed during the previous step.
  \item We redistribute the probability mass $1-S(L)$ at positions starting from
    $L+1$, assigning $\eta$ per position, possibly followed by a leftover
    $\eta'$. This is possible because $\bar s(K+1, L)>\bar s(L+1, V)$.
  \end{compactitem}
  After all these movements total probability mass is preserved and no position
  has a negative value, hence $s'$ is still a valid distribution. Because of
  Lemmas~\ref{sub-lemma} and \ref{sub-lemma-2} the transformation yields
  $\forall j \; S'(j) \geq S(j)$ and hence, because of Lemma~\ref{belb},
  $L^{\opt}[s'] \leq L^{\opt}[s]$. As for LPR miss rate we have
  \begin{equation}
    M^{\lpr}[s] = 1-S(L) + (L-C)\eta = (L+D-C)\eta + \eta' = M^{\lpr}[s'] \mf    
  \end{equation}
\end{proof}

\begin{proof}[{\bf Proof of Thm.~\ref{thm:compet}}]
  Because of Prop.~\ref{sigmaeta}, given a distribution $s$ we can obtain
  a quasi uniform distribution $s'$ (with a narrowed memory space 
  $W \deq L+D+1\leq V$) such that
  \begin{equation}\label{chitilde}
    \chi[s] \leq \frac{M^{\lpr}[s']}{L^{\opt}[s']}
    = \frac{\eta'+(W-1-C)\eta}{W-C}  \left[2+\frac{2}{\eta}\ln\left(
        \frac{(W-2)\eta+\eta'}{\left(\frac{W-2-C}{2}\right)\eta+\eta'}\right)\right]
    \deq \tilde\chi[s'] \mf
  \end{equation}
  To simplify the analysis of $\chi[s]$ we can assume $C \gg 1$ and $W-C>2$ (the
  complementary case is easy to deal with). Equation (\ref{chitilde}) reduces to
  \begin{equation}
    \label{eq:chi}
    \chi[s] \leq  O(1) + 2 \ln\left(\frac{2(W-1)}{W-2-C}\right) \in
    O\left(\ln C\right) \mf
  \end{equation}
    Finally, since $M^{\lpr}[s] \leq (W-C)\eta$ and $\eta \leq \frac{1}{W-2}$
  (since $s(1)+\eta'+\eta(W-2)=1$) we have $M^{\lpr}[s] \leq \frac{W-C}{W-2}$, implying
  \begin{equation}
    \chi[s] \in O\left(\ln\frac{1}{M^{\lpr}[s]}\right) \mc
  \end{equation}
  which, for $M^{\lpr}[s]>\frac{1}{C}$, provides a more descriptive bound than
  (\ref{eq:chi}).
\end{proof}

\section{Fast Simulation of the LPR Policy}
\label{sec:fast-simul-optim}

In experimental studies it is important to simulate eviction
policies on sets of benchmark traces.  From the stack distances, the
number of misses for all buffer capacities can be easily derived in
time $O(V)$.  Previous work \cite{BennettK75,AlmasiCP02} has shown how
to compute the stack distance for the LRU policy in time $O(\log V)$
per access. We derive an analogous result for the more complex LPR
policy.

\begin{theorem}\label{thm:lpr-stack-up}
  Given any stack-distance distribution $s$, the number of misses
  incurred by the corresponding optimal LPR policy on an arbitrary
  trace of $N$ references, can be computed, simultaneously for all
  capacities, in time $O(V+N \log V)$.
\end{theorem}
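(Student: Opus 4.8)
The plan is to reduce the all-capacities miss count to computing, for every access, its depth in the \lpr{} stack, and then to maintain that stack in an augmented balanced search tree so that each access costs $O(\log V)$. Since \lpr{} is a stack policy (Thm.~\ref{linathm} and the inclusion property), the buffer of capacity $C$ is exactly the top $C$ of $\Lambda^{\lpr}_t$, so an access at \lpr-stack depth $\delta_t$ is a miss precisely for the capacities $C<\delta_t$. I would therefore keep a single counter array $m[1..V]$, increment $m[\delta_t]$ at each step, and at the end output, for each $C$, the miss count $\sum_{\delta>C} m[\delta]$ by one suffix-sum pass. Together with the $O(V)$ precomputation of the profit rates $\xi$ and the breakpoints $q_1<\cdots<q_l$ via Algorithm~\ref{alg-KL-2}, this accounts for the additive $O(V)$ term and reduces the theorem to producing $\delta_1,\ldots,\delta_N$ in $O(\log V)$ (amortized) per access.

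First I would recall the LRU technique of \cite{BennettK75,AlmasiCP02}: store the LRU stack in a balanced BST keyed by position and augmented with subtree sizes, so the stack distance $d_t$ is read off as an order statistic in $O(\log V)$ and the update ``move the accessed item to the top'' is a delete-then-insert. The \lpr{} stack is obtained from the LRU stack by reordering each segment $[q_i+1,q_{i+1}]$ according to the profit-rate priority $\xi$, so the new ingredient is to maintain this reordering online. I would augment the tree with each node's segment index (located by binary search on the $q_i$) and its priority, and maintain $\Lambda^{\lpr}_t$ itself as an order-statistics tree, so that $\delta_t$ is again an order-statistic query.

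The heart of the argument is the update of $\Lambda^{\lpr}_t$, and the delicate point that \S\ref{sec:least-profit-rate} exposes is that \lpr{} is \emph{lazy}: within a segment the stack order is not the fully $\xi$-sorted order but a history-dependent (``stale'') permutation that is preserved, up to the two boundary items, as every access drifts the items one position deeper. I would establish the structural invariant that between two accesses a uniform downward drift leaves the relative \lpr-order of each segment's interior unchanged, so that a single access affects $\Lambda^{\lpr}_t$ only by (i) moving the accessed item to the top and (ii) repairing the order at a \textbf{bounded interface}: the segment-membership changes at the crossed boundaries and the one segment whose internal order a miss refreshes. The target is to realize each access by $O(1)$ \texttt{split}/\texttt{merge}/\texttt{reverse} operations on the BST, each $O(\log V)$, yielding $O(\log V)$ per access and the claimed $O(V+N\log V)$ bound.

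The main obstacle is exactly step (ii): ruling out the naive \emph{cascade} in which every segment lying above the access hands its boundary element down to the next segment, which would cost $\Theta(l)=O(V)$ per access. The plan is to absorb this cascade lazily---representing each segment's contents as a rotatable block whose internal sequence is untouched by drift, so that a boundary crossing becomes a single block-rotation recorded by a per-segment offset rather than an element-by-element relocation---and to charge the few genuine relocations (at the accessed item's segment and its immediate neighbours) by an amortized argument tied to misses. The part I expect to require the most care is proving that this lazy representation reproduces the \lpr{} policy of \S\ref{sec:least-profit-rate} \emph{exactly and simultaneously for all capacities}; a concrete check is that, in steady state, the maintained depths must reproduce the $K(C)$-$L(C)$ miss rate $M^{\kl}[s]$, whose convex-combination form reflects precisely the hysteresis that the invariant must track.
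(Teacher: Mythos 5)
Your overall architecture is the paper's: reduce the problem to computing, per access, the LPR stack depth (then suffix-sum the histogram), reuse the $O(\log V)$ LRU machinery, decompose the LPR stack into the segments $[q_i+1,q_{i+1}]$ produced by Algorithm~\ref{alg-KL-2}, and observe that each segment's interior evolves only by cyclic rotations, so that each segment can be kept as a rotatable block in a balanced tree. Your ``structural invariant'' is exactly what the paper proves as Proposition~\ref{thm:lru-lpr} (segment equivalence plus the relative-update rule), and stating it as a target rather than proving it is acceptable for a sketch.

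The genuine gap is in how you dispose of the cascade. By the relative-update rule, an access at LRU depth $d_t$ falling in segment $Q_a$ forces a unit cyclic shift in \emph{every} segment $Q_i$ with $i<a$, i.e., in up to $l-1=\Theta(V)$ segments (e.g., for strictly decreasing $s$ every position is its own segment). Recording each of these shifts ``by a per-segment offset'' is still one write per affected segment, hence $\Theta(V)$ bookkeeping on such a step; and your amortization ``tied to misses'' cannot rescue this, because the theorem is claimed for an arbitrary trace: an adversarial trace that repeatedly accesses at depth $\Theta(V)$ pays this cost on every single step, and there is no decreasing potential to charge it to. The missing ingredient is a second level of laziness: the paper stores the per-segment rotation offsets \emph{implicitly} in an auxiliary static balanced tree whose leaves are the segments, so that ``increment the offset of all segments in the prefix $Q_1,\ldots,Q_{a-1}$'' is realized by incrementing the $O(\log l)$ counters on the left children hanging off the root-to-leaf path for $Q_a$; a segment's accumulated pending rotation is summed along its path and applied only when that segment is actually touched. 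With that device added, the rest of your plan goes through and coincides with the paper's algorithm.
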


The algorithm proposed to prove the preceding result exploits some
relations between the LPR stack and the LRU stack and achieves
efficiency by means of fast data structures.

\begin{proposition}\label{thm:lru-lpr}
  Let $s$ be a stack-distance distribution and let $(q_1, q_2, \ldots,
  q_l)$ be the (increasing) sequence associated with $s$ as in
  Theorem~\ref{thm:aveCap}. Let $\Lambda$ be the LRU stack and let
  $\Pi$ be the LPR stack corresponding to $s$, both assumed initially
  equal. In either stack, let the $i$-th segment be the set of
  positions in interval $Q_i\deq[q_i+1, q_{i+1}]$. Finally, for $j \in
  \{1,2, \ldots, V\}$, let $\rho_t(j)$ denote the position in the LPR
  stack $\Pi_t$ of the item that is in position $j$ of the LRU stack
  $\Lambda_t$, i.e., $\Pi_t(\rho_t(j))$=$\Lambda_t(j)$. Then,
  we have:

  \begin{enumerate}
   
  \item {\sc Segment equivalence.} At any time $t$, segment $Q_i$
  contains the same items in both stacks, that is, $j \in Q_i$ if and
  only if $\rho_t(j) \in Q_i$.

  \item {\sc Relative update.} Upon an access $a_t$ at LRU stack
  depth $d_t$, if $d_t=1$, then the map $\rho$ between the two
  stacks is unchanged. Otherwise ($d_t>1$), $\rho$ is updated as
  follows. For any $t$, any segment $Q_i$, with $i=1,\ldots, l-1$, and
  any $h \in \{0, q_{i+1}-q_i-1\}$, (so that $(q_i+1+h) \in Q_i$), we
  have:
\begin{align}
      \rho_{t+1}(q_i+1+h)= \begin{cases} \rho_{t}(q_i+1+h) & d_t < q_i+1+h\\ 
   \rho_{t}\left(q_i+1 + \left((h-1) \bmod (d_t-q_i) \right) \right) 
      & q_i +1 +h \leq d_t \leq q_{i+1} \\
   \rho_{t}\left(q_i +1 + \left( (h-1) \bmod (q_{i+1}-q_i) \right) \right) 
      & q_{i+1} < d_t \mf \\  
  \end{cases}
\end{align} 

  In other words, (a) below the LRU point of access, $\rho$ does not
  change; (b) within the prefix of the segment including the access
  up to the point of access itself, as well as (c) within those
  segments that are entirely above the point of access, $\rho$ incurs
  a unit, right cyclic shift.  \end{enumerate}
\end{proposition}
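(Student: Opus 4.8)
The plan is to argue by induction on $t$, carrying both properties together as a single joint invariant. At $t=0$ the two stacks coincide by assumption, so $\rho_0$ is the identity and both claims hold trivially. For the inductive step I fix an access $a_t$ at LRU depth $d_t$ and let $Q_m$ (with $q_m < d_t \le q_{m+1}$) be the segment containing it. The case $d_t=1$ is immediate: the accessed item already sits at the top of both stacks, no position changes, and $\rho$ is unchanged. So assume $d_t>1$.

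Two updates must then be compared. The LRU side is explicit: $\Lambda_t$ undergoes a right unit cyclic shift of its prefix of length $d_t$, so $\Lambda_{t+1}(1)=\Lambda_t(d_t)$, $\Lambda_{t+1}(p)=\Lambda_t(p-1)$ for $2\le p\le d_t$, and $\Lambda_{t+1}(p)=\Lambda_t(p)$ for $p>d_t$. The real work is to describe the corresponding LPR update, and I would first establish it as a decomposition statement: under LPR the region above the access does \emph{not} rotate as one block (as it does for LRU) but factors into independent rotations, one per segment. Concretely, (a) positions below $d_t$ are untouched; (b) each segment $Q_i$ lying entirely above the access ($q_{i+1}<d_t$) has its content rotated internally by one, the item leaving its bottom being exactly the item that LRU pushes across the boundary $q_{i+1}\mapsto q_{i+1}+1$; and (c) the segment $Q_m$ containing the access rotates only its prefix $[q_m+1,d_t]$, the accessed item vacating its LPR slot and rising to the top.

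The crux is justifying this decomposition from the policy itself. LPR evicts by profit rate, and the \textbf{Remark} following Lemmas~\ref{sub-lemma} and \ref{sub-lemma-2} shows that the segment averages $\bar s(1+q_i,q_{i+1})$ are \emph{strictly} decreasing in $i$, while within a segment every prefix average is at most the segment average and every suffix average is at least it. Consequently the block of items eligible for the $i$-th eviction band is rigidly the set of items at LRU depths in $Q_i$: no admissible LPR rearrangement can move an item out of its segment, except the single item that LRU itself drifts across the boundary. This rigidity is what forces LRU's block rotation to factor, segment by segment, into the rotations above, and it is simultaneously what preserves \textsc{segment equivalence}, since each segment merely reuses the block of LPR positions $Q_i$ it occupied before, permuting the items inside it. I expect this rigidity argument --- tying the profit-rate separation of the Remark to the impossibility of cross-segment moves --- to be the main obstacle, and the one place where the specific structure of the $q_i$ as convex-hull breakpoints (Theorems~\ref{thm:aveCap} and \ref{linathm}) is essential; one must also check that the within-segment rotation is the unique choice consistent with the inclusion property of Definition~\ref{def:stackPolicy}.

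Granting the decomposition, the update formula falls out by composing the two position relabelings and reading off $\Pi_{t+1}(\rho_{t+1}(j))=\Lambda_{t+1}(j)$. For a segment $Q_i$ strictly above the access, LRU sends $\Lambda_t(q_i+h)\mapsto\Lambda_{t+1}(q_i+1+h)$ while the internal LPR rotation reuses the old slots, giving $\rho_{t+1}(q_i+1+h)=\rho_t(q_i+1+((h-1)\bmod(q_{i+1}-q_i)))$, which is the third branch; the boundary value $h=0$ correctly assigns the incoming item $\Lambda_t(q_i)$ to the slot $\rho_t(q_{i+1})$ vacated at the bottom. For the access segment $Q_m$ the identical computation runs over the truncated range $[q_m+1,d_t]$ of length $d_t-q_m$, producing the second branch, while positions with $d_t<q_i+1+h$ stay fixed, giving the first. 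Verifying that the modular index bookkeeping matches in each of the three ranges is routine once the decomposition is in hand.
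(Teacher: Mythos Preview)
Your inductive strategy is sound and would ultimately succeed, but you are working harder than necessary because you miss the paper's key shortcut. The paper does \emph{not} derive \textsc{segment equivalence} from the profit-rate inequalities of the Remark; instead it observes that for every capacity $C=q_i$ the $\kl$ policy has $L(q_i)=q_i$ (Corollary~\ref{thm-KL-2-comp}), so the LPR buffer of size $q_i$ is forced to be exactly the top $q_i$ positions of the LRU stack, i.e., $B_t^{\lpr}(q_i)=B_t^{\lru}(q_i)$. Segment equivalence is then immediate from $Q_i = B(q_{i+1})\setminus B(q_i)$, with no induction and no appeal to profit rates. Your ``rigidity'' argument via the Remark is plausible but, as you yourself flag, it is the delicate step, and it is simply bypassed in the paper.

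For \textsc{relative update} the paper also proceeds differently: rather than first asserting a decomposition of the LPR stack update and then composing it with the LRU shift, it fixes the segment $Q_a$ containing the access and tracks, capacity by capacity for $C\in Q_a$, which item each buffer evicts under the $\kl$ rule. This directly reveals that, within $Q_a$, only two LPR positions change (the accessed item leaves its slot $C_a$ for the top, and $\Lambda_t(q_a)$ drops into that slot), with all other items in $Q_a$ keeping their LPR positions; the formula is then read off. The segments $Q_i$ with $i<a$ are handled by the identical eviction-tracking argument. Your phrasing ``each segment\ldots\ has its content rotated internally by one'' is potentially misleading here: the \emph{LPR positions} within a segment do \emph{not} undergo a cyclic shift (most items stay put); what shifts cyclically is the map $\rho$ as a function of the \emph{LRU} index, which is a different statement. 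Your final bookkeeping paragraph gets the right answer, but the intermediate description conflates the two viewpoints.

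In short: correct destination, harder road. The single observation $B^{\lpr}(q_i)=B^{\lru}(q_i)$ replaces your entire rigidity argument and cleanly separates the two parts of the proposition.
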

  
\begin{proof}
{\sc Segment equivalence}. We begin by observing that, for buffer sizes
in the set $\{q_1, q_2, \ldots, q_l \}$, the LPR policy coincides with
LRU, whence $B_t^{LPR}(q_i) =B_t^{LRU}(q_i)$. In fact, for a $\kl$
policy, the buffer content is a subset of the first $L$ positions of
the LRU stack. From Corollary~\ref{thm-KL-2-comp}, if $C=q_i$, then
$L(C)=q_i$, hence the LPR buffer content must equal that of the first
$q_i$ positions of the LRU stack which, by definition of stack, is
also the content of the LRU buffer of capacity $q_i$. The segment
equivalence property follows since, for any stack policy, the content
of the stack in a segment $Q_i$ equals (by definition) the set
theoretic difference between $B(q_{i+1})$ and $B(q_i)$.\\

\noindent{\sc Relative update}. If $d_t=1$, then neither stack changes,
therefore $\rho_{t+1}=\rho_t$. Otherwise, let $d_t \in Q_a$, that is,
let $Q_a$ be the segment capturing the access and consider the
following cases.

\begin{romenum}
\item For all buffer sizes $C>q_{a+1}$, the access is a hit both for the LRU and
  for the LPR policy, hence both stacks, and consequently the $\rho$ map, remain
  unchanged at positions greater than $q_{a+1}$. This is a subcase of case (a)
  in the statement and applies to all segments with $i>a$ (hence $1+q_i+h
  >q_{a+1}$).

\item For buffer sizes $C \in Q_a$ the situation is as follows.  Under LRU,
  there is a miss for $C<d_t$, so that in the LRU stack: the items in positions
  smaller than $d_t$ shift down by one position, the item at $d_t$ goes at the
  top of the stack, and all other items retain their position. Under LPR, let
  $C_a$ be the smallest capacity of a buffer containing the referenced item,
  $a_{t+1}$. Then, all the buffers with $C<C_a$ will evict item
  $\Lambda_{t}(q_a)$, which will go to position $C_a$ of the LPR stack, the
  item previously at $C_a$ will go to the top of the stack, while all remaining
  items in segment $Q_a$ will retain their position. Consequently, for
  $h=(d_t+1)-(q_a+1), \ldots, q_{a+1}-(q_a+1)$, we have
  $\rho_{t+1}(q_a+1+h)=\rho_{t}(q_a+1+h)$, still a subcase of case (a) in the
  statement. Furthermore, $\rho_{t+1}(q_a+1)=\rho_t(d_t)$ and, for $h=1, \ldots,
  d_t-(q_a+1)$, $\rho_{t+1}(q_a+1+h)=\rho_{t}(q_a+1+h-1)$, which establishes
  case (b) in the statement.

\item For segments $Q_i$ with $i<a$, the argument is a straightforward
  adaptation of that developed for case (ii) and establishes case (c) of the
  statement.
\end{romenum}
\end{proof}

\comment{**************************************************************
\begin{figure}
  \centering
  \includegraphics[width=6cm]{images/stack-update}
  \caption{LPR stack update (steady state). $b_t$ is the LPR stack depth
   of the access $a_t$ and $y_t(C)$ the eviction made by the buffer of
   capacity $C$.}
  \label{fig:stack-up}
\end{figure}
end-of-comment*********************************************************}

We are now ready to provide the algorithm for LPR stack-distance
computation and its analysis.

\begin{proof}[{\bf Proof of Thm.~\ref{thm:lpr-stack-up}}] The work of
\cite{BennettK75,AlmasiCP02} has provided a procedure that, given the 
initial LRU stack $\Lambda_0$ and the prefix $a_1, \dots, a_t$ of the
input trace, will output the LRU stack distances $d_0,\ldots,d_{t-1}$, in
time $O(V+ t \log V)$. Below, we develop a representation of the map
$\rho$ between the LRU and the LPR stack, which can be updated and
queried in time $O(\log V)$ per access. Then the LPR stack distance of
access $a_{t+1}$ can be obtained as $\rho(d_t)$.

By Proposition~\ref{thm:lru-lpr}, we can represent $\rho$ by a
separate sequence $(\rho(q_i+1), \ldots, \rho(q_{i+1}))$ for each
segment. On such a sequence, we need to perform cyclic shifts of an
arbitrary prefix and to access element $\rho(q_i+1+h)$, given an $h
\in [0,q_{i+1}-(q_i+1)]$. Any of the well-known dynamic balanced trees
(AVL, 2-3, red-black, \dots) \cite{CormenLRS04} can be easily adapted to
perform each of the required operations in time
$O(1+\log(q_{i+1}-q_i)) = O(\log V)$. Hereafter, we denote by $R_i$
the data structure for segment $Q_i$.

The number of segments where the map $\rho$ can change in one step can
be $\theta(V)$, in the worst case. Therefore, we adopt a \emph{lazy}
update strategy whereby only the segment $Q_a$ capturing the access is
actually updated; for the other segments, record is taken that a shift
should be applied to the sequence, without performing the shift
itself. It is sufficient to increment a counter storing the amount of
shift that has to be applied to the sequence and then perform just one
global rotation when the segment is accessed. Still, individually
incrementing each segment counter could lead to work proportional to
$V$, per step. Instead, we maintain an
\emph{auxiliary tree} $T$ of counters which collectively serve the 
segments and where at most logarithmically many counters need updating
in a given step. A further field in the auxiliary tree will enable
quick identification of segment $Q_a$.

More specifically, the auxiliary tree is a static, balanced tree with
$l-1$ leaves corresponding, from left to right, to segments $Q_1,
\ldots Q_{l-1}$. In each internal node, a search field contains the
maximum right boundary of any segment associated with a descendant of
that node. In each node, a counter field will be maintained so that,
at the end of a step, the sum of the counters over the ancestors of
the $i$-th leaf represent the amount of shift to be applied to
$\rho$-sequence in segment $Q_i$.

With the above data structures in place, the algorithm to process one
access $a_{t+1}$ is outlined next.
\begin{compactenum}
\item From the LRU procedure, obtain the LRU stack distance $d_t$.
\item In the auxiliary tree $T$, with the help of the search field, 
traverse the path $\cal P$ from the root to the leaf corresponding to
the segment $Q_a$ which contains $d_t$.
\item While traversing $\cal P$, increment by one the counter of a
left child $\nu$ of a visited node whenever $\nu$ itself is not on
${\cal P}$. (This operation corresponds to incrementing the shift
count for all the segments to the left of $Q_a$, as required by
Proposition~\ref{thm:lru-lpr}.)
\item While traversing $\cal P$, add the counters of the visited 
nodes and apply a shift of the resulting amount to $R_a$. Subtract such
amount from the counter of the leaf for $Q_a$.
\item Read $\rho_t(d_t)$ from the $(d_t-q_a)$-th position of sequence 
$R_a$ and output this value as the LPR stack distance of $a_{t+1}$.
\item Apply a unit right cyclic shift to the prefix of length $(d_t-q_a)$
of sequence $R_a$.
\end{compactenum}
Each step in the outlined procedure can be accomplished in time
$O(\log V)$, so that the overall time for processing $N$ accesses is
$O(V+N \log V)$, where the term $V$ accounts for the initial set up of
the data structures.

To avoid that the counter in a node $\nu$ of the auxiliary tree grow
unbounded, we observe that it is sufficient to maintain its value
modulo the minimum common multiple of the lengths of the segments
associated with the descendant leaves of $\nu$.
\end{proof}

\section{On Finite Horizon}
\label{sec:fh}

In practice, when dealing with sufficiently long traces, a policy that
is optimal over an infinite horizon is likely to achieve near optimal
performance.  For shorter traces, transient effects may play a
significant role, whence the interest in optimal policies over a
finite horizon.  In principle, the optimal policy can be computed by a
dynamic-programming algorithm based on (\ref{beqsys1}), but the
exponential number of states makes this approach of rather limited
applicability.  An alternate, often successful route consists in
guessing a closed form characterization of a policy $\pi$ and its
corresponding optimal cost function $J^{\pi}_{\tau}(\cdot)$.  Under
very mild conditions, if the guess satisfies (\ref{beqsys1}), then
$\pi$ is an optimal policy. Unfortunately, we have been unable to find
a tractable form for the optimal cost. Ultimately, we have
circumvented this obstacle for monotone stack-depth distributions, by
realizing that what is really needed to make an optimal choice between
two states is not the absolute value of their costs, but rather their
relative value.
\begin{theorem}\label{lruthm}
  Let $s$ be non increasing, i.e., $s(j)\geq s(j+1)$ for $ j\in\{1,
  V-1\}$.  Then, for any finite horizon $\tau \geq 1$ and any initial
  buffer content, LRU is an optimal eviction policy.
\end{theorem}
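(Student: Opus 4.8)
The plan is to reduce the claim to a monotonicity (\emph{domination}) property of the optimal cost-to-go $J^*_\tau$, and then to prove that property by induction on the horizon $\tau$. The starting observation is that the LRU stack $\Lambda_t$ evolves as a deterministic function of the disturbances $d_0,d_1,\ldots$ alone (the right cyclic shift of the prefix of length $d_t$), \emph{independently of the eviction policy}. Coupling two competing policies on the same disturbance sequence, I may therefore compare them state by state while holding the stack $\Lambda_t$ common; the only difference between the two evolutions is the buffer content, and the per-step cost is the indicator that the accessed item $\Lambda_t(d_t)$ is absent. I encode a buffer as the set of its occupied LRU-stack positions and introduce the partial order $B \succeq B'$ (``$B$ is more top-heavy''), defined for buffers of equal size by requiring the sorted position vector of $B$ to be componentwise $\le$ that of $B'$. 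Every domination relation is a composition of \emph{single swaps}, in which one buffered item sits at a more recent position $p$ in $B$ and at a less recent position $q>p$ in $B'$, all other items coinciding.

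The key lemma I would establish is: \textbf{(Domination)} if $B \succeq B'$ then $J^*_\tau(\Lambda,B) \le J^*_\tau(\Lambda,B')$ for every $\Lambda$ and every $\tau$. This lemma alone yields the theorem: at a miss, evicting a bottom (least recently used) item rather than any higher one leaves the successor buffer more top-heavy, hence---by the lemma applied at horizon $\tau-1$---with no larger optimal continuation cost; iterating shows that evicting the least recently used item is an optimal control at every step, i.e.\ $\lru$ attains $J^*_\tau$ from any initial state. The base case $\tau=1$ of the lemma is immediate and is exactly where monotonicity enters: $J^*_1(\Lambda,B)=1-\sum_{i\in B} s(i)$, and since $s$ is non increasing a more top-heavy $B$ collects at least as much probability mass, so $J^*_1(\Lambda,B)\le J^*_1(\Lambda,B')$.

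For the inductive step it suffices to treat a single swap $x=(\Lambda,A\cup\{p\})$ and $y=(\Lambda,A\cup\{q\})$ with $p<q$. I would expand $J^*_\tau$ over the access depth $d$ and compare $x$ and $y$ term by term, letting $x$ mimic $y$'s optimal eviction. For every $d\notin\{p,q\}$ the hit/miss status is identical in the two states, and a short check on the cyclic-shift relabelling shows that a matched eviction keeps the two successors a single swap (or makes them coincide); the inductive hypothesis then gives the favourable inequality for that term. The genuinely delicate terms are $d=p$, where $x$ scores a hit while $y$ misses, and $d=q$, where the roles are reversed. Here a pointwise comparison fails, and the two terms must be handled \emph{jointly}, their net balance being controlled by the monotonicity $s(p)\ge s(q)$: the event favouring $x$ is at least as likely as the event favouring $y$.

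The main obstacle is precisely that this joint balancing does not close under the bare inequality $J^*_\tau(y)\ge J^*_\tau(x)$; the asymmetric terms leave a residual of the form $s(p)\,\delta_1-s(q)\,\delta_2$ that one cannot sign without quantitative control of the cost differences $\delta_1,\delta_2$ produced at depths $p$ and $q$. I would therefore \emph{strengthen the induction hypothesis} to a two-sided invariant on the relative cost of single-swap pairs, namely $0\le J^*_\tau(\Lambda,A\cup\{q\})-J^*_\tau(\Lambda,A\cup\{p\})\le s(p)-s(q)$. This bound is tight and easy to verify in the two extreme regimes---at $\tau=1$ it holds with equality, and for capacity $C=1$ the two coupled states merge after a single access, again giving exactly $s(p)-s(q)$---which is strong evidence that it is the right invariant. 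Carrying this quantitative bound through the depth-$d$ case analysis, so that the successor single-swap gaps feed back into $s(p)-s(q)$ with the correct weights, is the technical heart of the argument and the step I expect to be most laborious; it is the concrete realisation of the ``inductive invariant on the relative values of the cost for select pairs of states'' announced in the introduction.
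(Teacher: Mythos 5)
Your overall strategy---induction on the horizon over an exchange relation between paired buffer states, with the monotonicity of $s$ entering at the base case and in balancing the two asymmetric depths---is the same as the paper's, and your reduction of the theorem to a domination lemma is sound. But the inductive step, which you yourself flag as not carried out, is where the real difficulty lives, and the strengthening you propose to close it is false. Take $V=3$, $C=2$, $s=(0.5,\,0.3,\,0.2)$, $A=\{1\}$, $p=2$, $q=3$. A direct computation gives $J^*_2(\Lambda,\{1,3\})=0.55$ and $J^*_2(\Lambda,\{1,2\})=0.40$, so the gap is $0.15$, whereas $s(p)-s(q)=0.1$. The difference between single-swap states compounds over the time the swap survives (roughly by a factor $\sum_t S(p-1)^t$), so no bound of the form $s(p)-s(q)$ can hold, and your two-sided invariant cannot serve as the induction hypothesis.

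The paper avoids the need for any quantitative control by two moves you are missing. First, it does not prove domination for arbitrary single swaps: it restricts to \emph{critical pairs} $y=1\nu1\iota0\sigma$, $z=1\nu0\iota1\sigma$ with $\sigma\in 0^*$, i.e.\ swaps in which the deeper of the two differing positions is the \emph{deepest buffered position}. This restricted class is exactly what is needed to certify LRU's choice (comparing evict-deepest against evict-anything-else always produces such a pair), and---this is Lemma~\ref{lem:critp}---it is closed under coupling both states to the LRU dynamics. Second, the instantaneous cost is compared in aggregate, $\E_d[g(y,d)]\le\E_d[g(z,d)]$ by monotonicity of $s$, while the continuation costs are compared pointwise in $d$: in the two asymmetric cases the condition $\sigma\in 0^*$ lets the lagging state's (LRU) eviction make the two successors \emph{coincide}, or at worst remain critically ordered, so the continuation terms never work against you and the bare one-sided inequality $J^*_{\tau-1}(y')\le J^*_{\tau-1}(z')$ closes the induction. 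Your difficulty at depths $p$ and $q$ arises precisely because, for a general swap, the bottom-heavy state can miss at $p$ and evict an item \emph{below} $q$, thereby overtaking the top-heavy state's successor; with the all-zeros suffix this cannot happen. To repair your argument, replace the general domination lemma by the restricted one and couple both states to LRU rather than to optimal play.
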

\begin{theorem}\label{mruthm}
  Let $s$ be non decreasing, i.e., $s(j)\leq s(j+1)$ for $ j\in\{1,
  V-1\}$.  Then, for any finite horizon $\tau \geq 1$ and any initial
  buffer content, MRU is an optimal eviction policy.
\end{theorem}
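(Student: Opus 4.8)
The plan is to work with the fixed-capacity, finite-horizon problem and its backward Bellman recursion \eqref{beqsys1}, and to prove $\mru$ optimal by establishing a monotonicity property of the optimal cost-to-go $J^*_\tau$ with respect to a natural partial order on buffer contents. Since $s$ is non decreasing, deeper items (larger LRU-stack position, hence larger access probability) are the more valuable ones, so I would order buffers by \emph{domination}: writing the buffered positions in increasing order, $B' \succeq B$ when the sorted position vector of $B'$ pointwise dominates that of $B$. This order is generated by adjacent swaps, i.e.\ replacing the item at some stack position $k$ by the (currently unbuffered) item at position $k+1$. The two facts I would prove, by induction on the horizon, are: (A) $J^*_\tau$ is non increasing along $\succeq$ (deeper buffers are cheaper); and (B) at every state $\mru$ attains the minimum in \eqref{beqsys1}. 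Given (A), fact (B) is immediate: upon a miss $\mru$ evicts the shallowest buffered item below the top, thereby retaining the $\succeq$-maximal set among all admissible single evictions, so by (A) its successor minimizes $J^*_{\tau-1}$; a backward induction then yields $J^{\mru}_\tau=J^*_\tau$.

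The heart of the argument is (A), and here the obvious route — comparing the two sides of \eqref{beqsys1} disturbance-by-disturbance for a swap pair $(x,x')$ with $x'$ the deeper one — fails: at the access $d=k$ state $x$ enjoys a hit while $x'$ suffers a miss, so the immediate cost favors $x$, and this can only be recouped at the complementary access $d=k+1$; but the inner $\min$ over $x$'s evictions at $d=k+1$ obstructs any clean per-term compensation. Instead I would prove (A) by a pathwise coupling/exchange argument. Fix an optimal policy $\pi$ for $x$, feed both systems the same stack-distance sequence, and construct a policy $\pi'$ for $x'$ maintaining the invariant that the two buffers either coincide or form an adjacent swap pair at some positions $(m,m+1)$ with the deeper item buffered in $x'$. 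A case analysis on the access depth $d$ relative to the current pair shows the invariant is preserved, the pair position merely shifting under the cyclic stack rotation for $d\notin\{m,m+1\}$: whenever $\pi$ evicts an item outside the pair, $\pi'$ copies that eviction; whenever $\pi$ evicts the pair item it holds, $\pi'$ evicts its own pair item, after which the two buffers coincide with no miss differential.

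Under this coupling the only step at which the miss counts of $\pi$ and $\pi'$ can differ is the single \emph{resolution} step where $d\in\{m,m+1\}$: if $d=m$ then $x'$ takes one extra miss, if $d=m+1$ then $x$ takes one extra miss, and thereafter the buffers coincide and the futures are identical. Hence the total difference $D=\text{misses}(\pi,x)-\text{misses}(\pi',x')\in\{-1,0,+1\}$, and conditioning on the history up to the resolution step it equals $+1$ with conditional probability proportional to $s(m+1)$ and $-1$ with probability proportional to $s(m)$. Because $s$ is non decreasing we have $s(m)\le s(m+1)$ at \emph{every} position the walking pair can occupy, so $\E[D]\ge 0$; therefore $J^*_\tau(x')\le J^{\pi'}_\tau(x')\le J^{\pi}_\tau(x)=J^*_\tau(x)$, which is exactly (A).

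The main obstacle is precisely the entanglement of the optimal eviction (the inner $\min$) with the immediate-cost compensation between the complementary accesses $d=k$ and $d=k+1$; the device that overcomes it is to abandon the term-by-term Bellman comparison in favor of the global coupling above, in which the miss differential is \emph{deferred} to one resolution step whose sign is then controlled solely by the monotonicity of $s$. Two points require care: verifying that every eviction $\pi$ might choose — in particular one that touches the current pair — is matched by a response of $\pi'$ preserving the invariant (forcing a zero-differential coincidence when the pair is broken), and checking that the random walk of the pair position $(m,m+1)$ never escapes the regime $s(m)\le s(m+1)$, which holds automatically since $s$ is non decreasing globally. Finally, as in the companion Theorem~\ref{lruthm} for LRU, the coupling uses only the ordering of the per-position access probabilities and not their independence across steps, so I expect the same argument to extend to the non-independent generalization of the LRUSM.
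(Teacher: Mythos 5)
Your high-level plan---order buffers by depth-domination, prove that $J^*_\tau$ is monotone along that order, and read off MRU's optimality from the Bellman recursion---captures the right intuition, and you correctly identify the central difficulty (the hit/miss asymmetry between the complementary accesses $d=m$ and $d=m+1$). But the coupling you propose for fact (A) does not close. The problematic case is the resolution step $d=m+1$, where $x$ (holding the shallower pair item) misses and $x'$ hits. You take $\pi$ to be an \emph{arbitrary} optimal policy for $x$, and nothing forces it to evict its pair item at this step. If $\pi$ evicts some other buffered item, at a position $p$ outside the pair, then $\pi'$ cannot respond at all---$x'$ had a hit, and on a hit no eviction is admissible in this model---so after this step the two buffers differ by a swap at positions that are in general non-adjacent and, worse, with the orientation \emph{reversed}: $x$ now holds the deeper of the two differing items. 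Your fallback (``forcing a zero-differential coincidence when the pair is broken'') is exactly the move that is unavailable here. To push the induction through this case you would need, in addition to monotonicity, a reverse bound of the form ``a swap in the wrong direction costs at most one extra expected miss,'' so that the miss $x$ just paid covers the reversed pair going forward; without that second half of the invariant the argument, and in particular the claim that $D\in\{-1,0,+1\}$ with the buffers coinciding after resolution, is simply false on these paths.

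The paper's proof sidesteps the arbitrary-$\pi$ problem entirely. It decomposes the MRU cost by ``observers'' attached to the out-of-buffer items (the holes): under MRU a hole at stack depth $l$ evolves autonomously (it stays put if $d<l$, sinks to $l+1$ if $d>l$, and jumps to depth $2$ upon the miss $d=l$), independently of where the other holes are. Writing $\gamma_\tau(i)$ for the expected misses charged to a hole starting at depth $i$, the cost difference between two states differing in a single hole is $\gamma_\tau(i)-\gamma_\tau(j)$, and a one-dimensional induction on the recursion $\gamma_{\tau+1}(i)=s(i)(1+\gamma_\tau(2))+S(i-1)\gamma_\tau(i)+(1-S(i))\gamma_\tau(i+1)$ yields the two-sided invariant $\gamma_t(2)\le\gamma_t(i)\le\gamma_t(j)\le 1+\gamma_t(2)$ for $i<j$. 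The upper bound $\gamma_t(j)\le 1+\gamma_t(2)$ is precisely the ingredient your argument lacks. Optimality of MRU then follows by a one-step-deviation argument in which the continuation policy is always MRU (optimal for shorter horizons by the outer induction), so no coupling against an unknown optimal policy is ever required.
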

Thus, for monotone stack-depth distributions, the finite horizon
optimal policy is time invariant, hence it is also optimal over an
infinite horizon.  This property does not hold for arbitrary
distributions.  In spite of the symmetry between the above two
theorems, their proofs, given in \S\ref{sec:proof-thm.-reflruthm} and
\S\ref{sec:proof-thm.-refmruthm}, require significantly different
ideas.

We have extended Thm.~\ref{lruthm} to the case of (non increasing)
\emph{dependent} stack depth distribution, where, given a prefix trace
$\zeta$, the next stack distance is described by the following
distribution, assumed to be non increasing for all $\zeta$:
\begin{align}
  s_\zeta(i) = \Pb\left[d_t = i | \zeta \right] \mf
\end{align}
The details are given in \S\ref{sec:gener-depend-proc-lru},
Thm.~\ref{thm:gener-depend}. For the case of initially empty buffer a
result similar to Thm.~\ref{thm:gener-depend}, but based on a stronger
notion of optimality, is derived by Hiller and Vredeveld
\cite{HillerV09}, using different techniques. This generalization of
the LRUSM has also been studied by Becchetti \cite{Becchetti04}, who
provides sufficient conditions on $s_\zeta$ and $C/V$ for the
stochastic competitive ratio of LRU against OPT to be $O(1)$.

\subsection{Non-Increasing Access Distribution}
\label{sec:proof-thm.-reflruthm}

For the purposes of this section, the state description for the LRUSM
developed in \S\ref{sec:lru-stack-model} can be simplified, by
unifying the representation of the LRU stack and of the buffer in a
vector $x$ of $V$ binary components, where $x_t(j)=1$ when the item at
depth $j$ in the LRU stack is in the buffer at time $t$ and $x_t(j)=0$
otherwise.  The disturbance is still the LRU depth of the access
($w_t=d_t$), while the control $u_t$ specifies the LRU depth of the
item to be evicted.  Denoting by $f$ the state transition function, we
have:
\begin{equation}
  x_{t+1}=f(x_t,d_t,u_t) \mf  
\end{equation}
With this representation, the well-known LRU policy amounts to
evicting the item in the deepest position of the (resulting) stack,
among those that are in the buffer:
\begin{definition}
  Let $R_d(x)$ denote the state resulting by applying a unit right
  cyclic shift to the prefix of length $d$ of $x$; (strictly speaking,
  if $x(d)=0$ then $R_d(x)$ is a pseudo-state, as it is not in the
  admissible state set).  The \emph{Least Recently Used} (LRU) policy
  is defined (for a miss, $x(d)=0$) by
  \begin{equation}\label{eqn:lru}
      \lru(x,d) = \max\{j:~y(j)=1, \;\text{ where } y=R_d(x)\}\mf
  \end{equation}
\end{definition}

\begin{definition}
  We say that two states $y$ and $z$ are form a \emph{critical pair}
  and write $y <_c z$ if their structure is related as follows, where
  $\nu,\iota$ are arbitrary and $\sigma \in 0^*$:
\begin{equation} \begin{split} y &=
  1\nu1\iota0\sigma \mc\\ z &= 1\nu0\iota1\sigma \mf \end{split}
  \end{equation}
We also write $y \leq_c z$ when $y = z$ or $y <_c z$.
\end{definition}

\begin{remark}
A critical pair represents a choice between what would the LRU policy
do (obtaining $y$) and what would a different eviction policy do
(obtaining $z$), when choosing the item to evict after the stack
rotation.
\end{remark}
\begin{lemma}\label{lem:critp}
  The evolution of a critical pair under LRU preserves its criticality
  and order:
  \begin{equation}
    \forall y \, \forall z \, :\,  y <_c z \quad \forall d \quad 
    y' = f^{\lru}(y,d) \leq_c z' f^{\lru}(z,d) \mf
  \end{equation}
where $f^{\lru}(x,d)=f(x,d,\lru(x,d))$.
\end{lemma}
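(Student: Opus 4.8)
The plan is to collapse the whole statement to the motion of two positions. Unpacking the critical-pair notation, $y$ and $z$ differ in exactly two coordinates: a position $p$ with $y(p)=1,\ z(p)=0$, and a deeper position $q>p$ with $y(q)=0,\ z(q)=1$; they agree everywhere else, and because $\sigma\in 0^*$ every coordinate beyond $q$ is $0$ in both. I would first record the single asymmetry that drives the argument: since the tail past $q$ is all zero and $z(q)=1$, the deepest buffered position of $z$ is exactly $q$, whereas $y(q)=0$ forces the deepest buffered position of $y$ to lie at some $m$ with $p\le m\le q-1$. Hence, on any miss, LRU evicts $z$ at (the image of) the deep index $q$ and evicts $y$ at (the image of) a strictly shallower index $m$.

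Next I would describe the common mechanism. The shift $R_d$ relocates coordinates by a fixed rule — a coordinate at position $j<d$ moves to $j+1$, the coordinate at $d$ moves to the top, and coordinates beyond $d$ stay fixed — so it transports the two discrepancies to new positions determined solely by $d$, and the leading coordinate of the output is always $1$ (the accessed item occupies the top and is buffered: already so on a hit, brought in on a miss). When neither $p$ nor $q$ is the access point and the access is a hit on both sides, there is no eviction and the relocated pair is at once again of the form $1\nu'1\iota'0\sigma'$, so $y'<_c z'$. When the access is a miss on both sides I would invoke the asymmetry: $z'$ is obtained by clearing (the image of) the deep index $q$, which removes $z$'s discrepancy, while $y'$ is obtained by clearing the image $\hat m$ of $y$'s deepest buffered coordinate $m$. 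If $m=p$, both discrepancies are cleared simultaneously and $y'=z'$; otherwise $m$ is a coordinate on which $y$ and $z$ agree (a common $1$), so $y'$ and $z'$ differ exactly at the image of $p$ (where $y'=1$) and at $\hat m$ (where $z'=1$), and since $m$ was $y$'s deepest $1$ the tail past $\hat m$ is all zero — precisely $y'<_c z'$.

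The bookkeeping is then organized by where $d$ falls relative to $p$ and $q$: $d<p$, $d=p$, $p<d<q$, $d=q$, and $d>q$ (the last forced to be a miss because the tail is zero), each split by hit versus miss. The interesting cases are $d=p$ and $d=q$, where $y$ and $z$ carry opposite hit/miss status. When $d=p$ the item that misses is $z$, whose forced LRU eviction clears the deep index $q$; this cancels both discrepancies at once and yields $y'=z'$. When $d=q$ the item that misses is instead $y$, which evicts its \emph{own} deepest buffered coordinate — possibly a common $1$ lying inside $\iota$ — so the pair typically persists as a strict critical pair $y'<_c z'$. The step I expect to be most delicate is verifying, in the miss subcases, that the all-zero tail $\sigma\in 0^*$ survives the eviction: this is exactly where identifying the evicted index as $y$'s \emph{deepest} buffered position is indispensable, since it is what guarantees the output is a genuine critical pair rather than merely two states differing in two coordinates. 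Once the relocation rule for $R_d$ and the dichotomy ``$z$ evicts the deep index, $y$ evicts a shallower one'' are in place, each remaining case is a short, mechanical verification.
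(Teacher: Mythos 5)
Your argument is correct and follows essentially the same route as the paper's proof: a case analysis on the hit/miss status of $y$ and $z$ (equivalently, on where $d$ falls relative to the two discrepancy positions), driven by the observation that LRU evicts the deepest buffered position, which for $z$ is the deep discrepancy $q$ and for $y$ is some shallower $m\leq q-1$. The paper states the four cases far more tersely (its "$\iota\in 0^*$'' dichotomy is exactly your $m=p$ versus $m>p$ split), but the substance is identical.
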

\begin{proof}
    We analyze the four possible cases:
  \begin{compactitem}
  \item Hit for both $y$ and $z$. The two stacks rotate and produce a critical
    pair with $y' <_c z'$.
  \item Miss for both $y$ and $z$. The two evictions in the last filled
    positions make the states equal if $\iota \in 0^*$, otherwise they yield $y'
    <_c z'$.
  \item Hit for $y$ and miss for $z$. The eviction in $z$ yields $y'=z'$.
  \item Miss for $y$ and hit for $z$. The eviction in $y$ brings $y'=z'$ if
    $\iota \in 0^*$ and $y' <_c z'$ otherwise.
  \end{compactitem}
\end{proof}

\begin{proposition}\label{lruthm2}
  Let $s$ be monotonic non decreasing, then LRU is the optimal eviction policy
  for every time horizon $\tau$ and any initial buffer content. Furthermore:
  \begin{equation}
    \forall \tau \; \forall y \, \forall z \, :\, y \leq_c z \quad
    J^*_\tau(y)\leq J^*_\tau(z) \mf
  \end{equation}
\end{proposition}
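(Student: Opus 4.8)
The plan is to prove the two assertions of the proposition --- optimality of $\lru$ and the critical-pair inequality $y\leq_c z \Rightarrow J^*_\tau(y)\leq J^*_\tau(z)$ --- \emph{jointly}, by induction on the horizon $\tau$. The base case $\tau=0$ is immediate, since $J^*_0\equiv 0$ makes both claims vacuous. In the inductive step I would first establish optimality of $\lru$ at horizon $\tau$ from the critical-pair inequality at horizon $\tau-1$, and then use the resulting clean recursion to push the critical-pair inequality up to horizon $\tau$.

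For the optimality part, consider a miss at access depth $d$. After the rotation $R_d$ brings the accessed item to the top, the admissible controls evict one of the buffered items; say the buffered positions are $i_0<i_1<\cdots<i_C$ with $i_C$ the deepest (so all entries below $i_C$ are $0$). The key observation is that evicting the deepest item $i_C$ (which is exactly what $\lru$ does) versus evicting any shallower buffered item $i_k$ produces precisely a critical pair: the two successor states agree everywhere except that the $\lru$ successor carries a $1$ at $i_k$ and a $0$ at $i_C$, while the alternative carries $0$ at $i_k$ and $1$ at $i_C$, with only zeros below $i_C$. Hence the $\lru$ successor is the $y$-member and the alternative the $z$-member of a pair $y<_c z$. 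By the inductive hypothesis (the critical-pair inequality at $\tau-1$), $J^*_{\tau-1}$ of the $\lru$ successor is no larger than that of any competitor, so $\lru$'s eviction is a minimizer of $J^*_{\tau-1}(f(x,d,u))$ in the Bellman recursion~(\ref{beqsys1}) for every $(x,d)$. Consequently $\lru$ attains the optimum at horizon $\tau$, i.e.\ $J^{\lru}_\tau=J^*_\tau$.

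Having established optimality, $J^*_\tau$ obeys the deterministic recursion $J^*_\tau(x)=\E_d\big[g(x,d)+J^*_{\tau-1}(f^{\lru}(x,d))\big]$. Fixing a critical pair $y<_c z$ that differs at a shallower position $p$ (with $y(p)=1,\ z(p)=0$) and a deeper position $q>p$ (with $y(q)=0,\ z(q)=1$), I would subtract the two recursions to obtain
\[
J^*_\tau(z)-J^*_\tau(y)=\sum_d s(d)\Big[\big(g(z,d)-g(y,d)\big)+\big(J^*_{\tau-1}(f^{\lru}(z,d))-J^*_{\tau-1}(f^{\lru}(y,d))\big)\Big].
\]
The \emph{future} contribution is non-negative term by term: Lemma~\ref{lem:critp} guarantees $f^{\lru}(y,d)\leq_c f^{\lru}(z,d)$ for every $d$, so the inductive hypothesis at $\tau-1$ makes each bracketed difference of $J^*_{\tau-1}$ values $\geq 0$. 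The \emph{immediate} contribution vanishes for all $d\notin\{p,q\}$ (where $y$ and $z$ agree), equals $+1$ at $d=p$ (a hit for $y$, a miss for $z$) and $-1$ at $d=q$ (a miss for $y$, a hit for $z$), so after weighting it collapses to $s(p)-s(q)$. The hypothesis on $s$ (the shallower differing position being at least as likely as the deeper one, $s(p)\geq s(q)$ whenever $p<q$) makes this non-negative, and the whole difference is $\geq 0$, closing the induction.

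The delicate point --- and the reason the naive argument fails --- is precisely the immediate-cost term at $d=q$: there $\lru$'s state $y$ incurs a miss while $z$ scores a hit, so a term-by-term domination over the disturbances is false, and Lemma~\ref{lem:critp} only yields $J^*_{\tau-1}(y')\leq J^*_{\tau-1}(z')$ (difference $\geq 0$), which is too weak to absorb the adverse $-1$ inside that single term. The resolution is to \emph{not} compare disturbance-by-disturbance but to pool the two exceptional values: the favorable $+1$ at the shallower $p$ is paired against the adverse $-1$ at the deeper $q$, and monotonicity of $s$ ($s(p)\geq s(q)$) ensures the favorable event is weighted at least as heavily. Verifying the exact critical-pair shape of the successor states in Lemma~\ref{lem:critp} for the boundary disturbances $d=p$ (where the two successors in fact coincide) and $d=q$ is the only genuinely case-sensitive bookkeeping, and it is already discharged by that lemma.
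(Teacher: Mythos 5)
Your proof is correct and follows essentially the same route as the paper's: induction on $\tau$, with LRU's optimality at each step extracted from the critical-pair inequality at $\tau-1$ (an eviction other than the deepest forms a critical pair with LRU's successor as the dominant member), the future cost compared disturbance-by-disturbance via Lemma~\ref{lem:critp}, and the immediate cost compared only in expectation, where the pooled $s(p)-s(q)\geq 0$ uses monotonicity. You merely make explicit two points the paper leaves implicit (why $\min_u$ is attained by LRU, and why term-by-term domination of the one-step cost fails), and you correctly read the hypothesis as non-increasing $s$ despite the typo in the proposition's statement.
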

\begin{proof}[Proof (by induction on $\tau$)]
  \par{\bf Base case.} For $\tau=1$ we have
  \begin{equation}
    \forall x \quad J^*_1(x)=\E_d\left[g(x,d)\right] \deq \bar g (x) \mc
  \end{equation}
  which, using the monotonicity of $s$, yields
  \begin{equation}
    \forall y \, \forall z \, :\, y \leq_c z \quad J^*_1(y)\leq
    J^*_1(z) \mf
  \end{equation}
\noindent \par{\bf Induction.} Assuming now that the statement holds for all 
$t < \tau$, we obtain
  \begin{equation}
    \begin{split}
      \forall x \quad J^*_\tau(x)&=\bar g(x) + \E_d\left[\min_u
        J^*_{\tau-1}\left(f(x,d,u)\right)\right]\\
      &=\bar g(x) + \E_d\left[J^*_{\tau-1}\left(f^{\lru}(x,d)\right)\right]\mf
    \end{split}
  \end{equation}
  Since $\bar g(y)\leq \bar g(z)$ and since, by the inductive hypothesis and
  Lemma~\ref{lem:critp},
  \begin{equation}
    J^*_{\tau-1}\left(f^{\lru}(y,d)\right)
    \leq J^*_{\tau-1}\left(f^{\lru}(z,d)\right) \mc
  \end{equation}
  we finally obtain $J^*_\tau(y) \leq J^*_\tau(z)$.
\end{proof}
\begin{proof}[{\bf Proof of Thm.~\ref{lruthm}}]
  Thm.~\ref{lruthm} follows directly from Prop.~\ref{lruthm2}.
\end{proof}

\subsubsection{Generalization to Dependent Processes}
\label{sec:gener-depend-proc-lru}

Let $\mathcal{V}\deq\{1, 2, \ldots, V\}$,
$\mathcal{V}^0\deq\{\epsilon\}$ ($\epsilon$ being the null trace),
$\mathcal{V}_L =\cup_{i=0}^L \mathcal{V}^i$ be the set of traces of
size no greater than $L$, and $X$ the state space (boolean vectors on
the LRU stack).  We consider stochastic processes generating a trace
of length $L$, such that, after having generated a partial trace
$\zeta$ of length $t-1=|\zeta|$, the probability distribution of the
next access is specified by
\begin{equation}\label{eq:stat-w}
 s_\zeta(i)= \Pb\left[a_{t+1}=\Lambda_t(i)\right|\zeta]
           =\Pb\left[d_t=i\right|\zeta] \mf
\end{equation}
The optimal cost achievable for such a process, given a partial 
trace $\zeta$ is $\forall x \in X$,
\begin{align}
  J^*_L(\zeta,x) = \E_d\left[\min_u\left\{g(x,d)+J^*_L\left(\zeta
        d,f(x,d,u)\right)\right\} |\zeta \right]
\end{align}
where the probability distribution of $d$ is a function of $\zeta$ 
as given in (\ref{eq:stat-w}).

\begin{theorem}\label{thm:gener-depend}
  If $\forall \zeta$ $s_\zeta$ is a non-increasing function, then LRU
  is optimal for any initial buffer content:
 \begin{align} 
   \forall L \; \forall \zeta \in \mathcal{V}_L \; \forall y,z \in X : 
   y \leq_c z \quad J_L^*(\zeta,y) \leq J_L^*(\zeta,z) \mf
 \end{align}
\end{theorem}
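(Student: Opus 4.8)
The plan is to prove the displayed critical-pair inequality by induction on the remaining horizon $L-|\zeta|$, in close parallel with the proof of Proposition~\ref{lruthm2}, from which Theorem~\ref{thm:gener-depend} will follow just as Theorem~\ref{lruthm} followed from that proposition. The observation that makes this generalization essentially free is that the two combinatorial ingredients of the independent-case argument are \emph{distribution-free}: Lemma~\ref{lem:critp} (a critical pair stays critical and ordered under one LRU step, for \emph{every} fixed access depth $d$) and the fact that, upon a miss at depth $d$, the state produced by the LRU eviction lies $\leq_c$-below the state produced by any other admissible eviction. Neither statement refers to how $d$ is distributed, so both transfer verbatim to the dependent setting; only the bookkeeping of the induction changes.

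For the base case I take $|\zeta|=L$: no further access is generated, so $J_L^*(\zeta,x)=0$ for all $x$ and $y\leq_c z$ gives $J_L^*(\zeta,y)=J_L^*(\zeta,z)=0$. For the inductive step I fix $\zeta$ with $|\zeta|<L$ and assume the claim for every one-step extension $\zeta d$ (which has strictly smaller remaining horizon). Writing $\bar g_\zeta(x)\deq \E_d[g(x,d)\,|\,\zeta]=\sum_{d:\,x(d)=0} s_\zeta(d)$ for the one-step miss probability, I first argue that LRU attains the inner minimum of the recurrence: among the admissible evictions after a miss at depth $d$, evicting the deepest buffered item (LRU) produces a state that forms a critical pair with, and lies $\leq_c$-below, the state produced by evicting any shallower item, so the inductive hypothesis applied at $\zeta d$ makes it the minimizer of $J_L^*(\zeta d,\cdot)$. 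Hence
\begin{equation}
  J_L^*(\zeta,x)=\bar g_\zeta(x)+\E_d\left[J_L^*\left(\zeta d, f^{\lru}(x,d)\right) \,|\, \zeta\right]\mf
\end{equation}

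It then remains to compare the two terms for a critical pair $y\leq_c z$. For the miss term, writing $y=1\nu1\iota0\sigma$ and $z=1\nu0\iota1\sigma$ with $\sigma\in 0^*$, the two states differ only at the shallower position $q\deq|1\nu|+1$ (a $1$ in $y$, a $0$ in $z$) and the deeper position $p\deq|1\nu1\iota|+1>q$ (a $0$ in $y$, a $1$ in $z$), giving $\bar g_\zeta(z)-\bar g_\zeta(y)=s_\zeta(q)-s_\zeta(p)\geq 0$ precisely because $s_\zeta$ is non-increasing. For the continuation term, Lemma~\ref{lem:critp} yields $f^{\lru}(y,d)\leq_c f^{\lru}(z,d)$ for each $d$, so the inductive hypothesis at $\zeta d$ gives $J_L^*(\zeta d,f^{\lru}(y,d))\leq J_L^*(\zeta d,f^{\lru}(z,d))$ pointwise in $d$, and taking the conditional expectation over $s_\zeta$ preserves the inequality. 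Summing the two comparisons closes the induction.

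The main thing to get right — and the step I would scrutinize most — is exactly where the hypothesis ``$s_\zeta$ non-increasing for \emph{every} $\zeta$'' is consumed. It enters at two distinct points: the one-step comparison $\bar g_\zeta(y)\leq\bar g_\zeta(z)$ at the current node, and, implicitly through the inductive hypothesis at $\zeta d$, the monotonicity needed to make LRU optimal one level down. Since the conditional distribution may change arbitrarily from step to step, I would verify that no step covertly relied on $s$ being the \emph{same} across time; inspection of Lemma~\ref{lem:critp} and of the eviction-ordering argument confirms that independence and time-invariance were never used there, so organizing the induction around the remaining horizon with the growing prefix $\zeta$ carried through is the only genuine adaptation required.
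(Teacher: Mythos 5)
Your proposal is correct and follows essentially the same route as the paper: induction on the remaining horizon $L-|\zeta|$, using the distribution-free Lemma~\ref{lem:critp} together with the monotonicity of $s_\zeta$ at the current node to compare $\bar g_\zeta(y)\leq \bar g_\zeta(z)$, exactly as in the paper's proof (which inducts from the base case $\tau_\zeta=1$ rather than $\tau_\zeta=0$, an immaterial difference). Your explicit justification that LRU attains the inner minimum, and your computation $\bar g_\zeta(z)-\bar g_\zeta(y)=s_\zeta(q)-s_\zeta(p)\geq 0$, merely spell out steps the paper leaves implicit.
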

\begin{proof}
  Let $L$ be given. For a trace $\zeta$ we define $ \tau_\zeta \deq L -
  |\zeta|$. The proof is by induction on $\tau_\zeta$.
  \par{\bf Base case.} 
  \begin{align}
    & \forall \zeta : \tau_\zeta=1 \; \forall x\in X \quad
    J^*_L(\zeta,x)=\E_d\left[g(x,d)\right |\zeta]\\
    \Rightarrow \quad & \forall y \leq_c z \quad J^*_L(\zeta,y) \leq 
    J^*_L(\zeta,z) \mf
  \end{align}
  \par{\bf Induction.}
  Since by inductive hypothesis we assume that $\forall \theta :
  \tau_\theta < \tau$
  \begin{equation}
    \forall y \leq_c z \quad J^*_L(\theta,y) \leq J^*_L(\theta,z) \mc
  \end{equation}
  we obtain, $\forall x \in X$,
  \begin{align}
    J^*_L(\zeta,x) &= \E_d\left[\min_u\left\{g(x,d)+J^*_L\left(\zeta
          d,f(x,d,u)\right)\right\}|\zeta\right]\\
    &= \E_d\left[g(x,d)|\zeta\right] + E_d\left[J^*_L\left(\zeta d,
        f^{\lru}(x,d)\right)|\zeta\right] \mf
  \end{align}
  Let $\theta \deq \zeta d$, since $\forall y \leq_c z \quad
  J^*_L(\theta,y) \leq J^*_L(\theta,z)$ and, by the inductive
  hypothesis and Lemma~\ref{lem:critp},
  \begin{equation}
    J^*_L\left(\theta, f^{\lru}(y,d)\right)
    \leq J^*_L\left(\theta, f^{\lru}(z,d)\right) \mc
  \end{equation}
  we finally obtain $J^*_L(\zeta, y) \leq J^*_L(\zeta, z)$.
\end{proof}

\subsection{Non-Decreasing Access Distribution}
\label{sec:proof-thm.-refmruthm}
In this subsection we prove that MRU is the optimal eviction policy for
non-decreasing $s$ for any time horizon. The proof will be by induction: by
assuming the optimal policy to be MRU for $t\leq \tau$ we will be able to prove
its optimality for the time horizon $\tau+1$ (more precisely, a strengthened
inductive hypothesis will be used).

In order to compare costs under MRU for different initial states we introduce a
useful partition of the misses.  Imagine to place an observer on every
out-of-buffer item, following the item going down the LRU stack during the
system evolution; every time an out-of-buffer item is accessed its observer
moves to the item evicted by the policy $\mu$. Thus the set $\Psi$ of the
observers remains constant during the evolution.

Let $d_t$ be the access depth at time $t$, let $\psi$ be an observer
and $l_t^\mu\left(\psi,x_0,d_{t'<t}\right)$ its LRU stack depth at
time $t$. We are interested in the event \emph{the item observed by
$\psi$ is accessed at time $t$}:
$d_t=l_t^\mu\left(\psi,x_0,d_{t'<t}\right)$.  We can partition the
misses occurring in $\tau$ steps attributing each miss to the observer
$\psi\in\Psi$ on the item currently accessed:
\begin{equation}
  J_{\tau}^\mu(x_0)=\sum_{t=0}^{\tau-1}\Pb_{\miss}(t)=\sum_{t=0}^{\tau-1}
  \sum_{\psi\in\Psi}\Pb[d_t=l_t^\mu(\psi,x_0,d_{t'<t})]=
  \sum_{\psi\in\Psi}\sum_{t=0}^{\tau-1}\Pb[d_t=l_t^\mu(\psi,x_0,d_{t'<t})] \mf
\end{equation}

Let $\psi_j$ be the observer which is at depth $j$ at time 0.  Under
MRU the evolution of an observer $\psi_j$ does not depend on $x_0$ but
only on the initial position of the observed item (i.e.,
$l_t^{\mru}(\psi_j,x_0,d_{t'<t}) =
l_t^{\mru}(\psi_j,d_{t'<t})=l_t(\psi_j)$ for brevity).
If we have two states $x'$ and $x''$ which differ for only two observers
$\psi_i$ and $\psi_j$ we can write their costs $\Gamma'$ and $\Gamma''$ as:
\begin{equation}\begin{split}
    \Gamma'&=\hspace{-1.3em}\sum_{\psi\in\Psi\setminus\{\psi_i\}}
    \sum_{t=0}^{\tau-1}\Pb[d_t=l_t(\psi)]+
    \sum_{t=0}^{\tau-1}\Pb[d_t=l_t(\psi_i)] \mc\\
    \Gamma''&=\hspace{-1.3em}\sum_{\psi\in\Psi\setminus\{\psi_j\}}
    \sum_{t=0}^{\tau-1}\Pb[d_t=l_t(\psi)]+
    \sum_{t=0}^{\tau-1}\Pb[d_t=l_t(\psi_j)] \mf\\
  \end{split}\end{equation}
where the first term is equal in both the costs, because it is due to observers
which start in the same position for both states, and thus:
\begin{equation}\begin{split}
  \Gamma'-\Gamma''=
  \sum_{t=0}^{\tau-1}\Pb[d_t=l_t(\psi_i)]-\sum_{t=0}^{\tau-1}\Pb[d_t=l_t(\psi_j)]=
  \gamma_\tau(i)-\gamma_\tau(j) \mc
  \end{split}\end{equation}
having defined $\gamma_\tau(i)\deq\sum_{t=0}^{\tau-1}\Pb[d_t=l_t(\psi_i)]$. Thus,
the difference in the costs depends only on the items observed by the
different observers.
Quantity $\gamma_\tau(i)$ represents the contribution to the total cost due to
items observed by \(\psi_i\), the observer that at time zero is in position $i$
(not in the buffer).

To prove that MRU is optimal for a time horizon of $\tau+1$ under the hypothesis
that it is optimal for any $t\leq \tau$ it is sufficient to prove that
\(\gamma_\tau(i)\leq\gamma_\tau(j)\) if \(i<j\):
\begin{proposition}\label{sec:proof-thm.-refmruthm-1}
  Let $s$ be non decreasing: $\forall j\in\{1, V-1\} \;\; s(j)\leq s(j+1)$.  If
  \begin{equation}\begin{split}
      \forall t\leq\tau \quad \forall i \, \forall j : i<j \qquad
      \gamma_t(2)\leq\gamma_t(i)\leq\gamma_t(j)\leq 1+\gamma_t(2)  \mc
    \end{split}\end{equation}
  then
  \begin{equation}\begin{split}
      \forall i\,\forall j : i<j \qquad
      \gamma_{\tau+1}(2)\leq\gamma_{\tau+1}(i)\leq\gamma_{\tau+1}(j)\leq 1+\gamma_{\tau+1}(2) \mf
    \end{split}\end{equation}
\end{proposition}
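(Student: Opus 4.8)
The plan is to reduce the proposition to a single one-step recurrence for the quantities $\gamma_\tau(i)$ and then verify the three desired inequalities (lower bound, monotonicity, upper bound) by sign analysis. First I would pin down how an observer moves under MRU. Starting an observer at LRU depth $i$ at time $0$ and conditioning on the first access depth $d_0$ (distributed as $s$, independently of everything), there are exactly three cases: if $d_0<i$ (probability $S(i-1)$) the prefix rotation leaves the observed item untouched, so the observer stays at depth $i$; if $d_0>i$ (probability $1-S(i)$) the rotation pushes the observed item one position down, so the observer moves to depth $i+1$; and if $d_0=i$ (probability $s(i)$) the observed item is accessed — a miss contributing $1$ at $t=0$ — and, since MRU ($K=1,L=V$) evicts $\Lambda_{t+1}(2)$, i.e.\ the item sent to depth $2$ by the rotation, the observer jumps to depth $2$. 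Using the i.i.d.\ nature of the stack distances to invoke time homogeneity on the remaining $\tau$ steps, this yields
\begin{equation}
  \gamma_{\tau+1}(i) = s(i)\bigl(1+\gamma_\tau(2)\bigr) + S(i-1)\,\gamma_\tau(i) + \bigl(1-S(i)\bigr)\,\gamma_\tau(i+1)\mf
\end{equation}
with the last term vanishing at $i=V$ since $1-S(V)=0$. This recurrence is the engine of the whole argument.

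For monotonicity I would set $a=\gamma_\tau(i)$, $b=\gamma_\tau(i+1)$, $c=\gamma_\tau(i+2)$, form the difference $D\deq\gamma_{\tau+1}(i+1)-\gamma_{\tau+1}(i)$ from the recurrence, and substitute $s(i)=S(i)-S(i-1)$. The inductive hypothesis supplies $a\le b\le c$ and $b\le 1+\gamma_\tau(2)$. After collecting terms the coefficient of $b$ cancels identically, and the difference telescopes into
\begin{equation}
  D = \bigl(s(i+1)-s(i)\bigr)\bigl(1+\gamma_\tau(2)-b\bigr) + S(i-1)\,(b-a) + \bigl(1-S(i+1)\bigr)\,(c-b)\mf
\end{equation}
Each of the three products is nonnegative — the first because $s$ is non-decreasing and $b\le 1+\gamma_\tau(2)$, the other two by the inductive monotonicity together with $0\le S\le 1$ — so $D\ge 0$, that is, $\gamma_{\tau+1}$ is non-decreasing in its index. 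The lower bound $\gamma_{\tau+1}(2)\le\gamma_{\tau+1}(i)$ is then immediate, $2$ being the least admissible index.

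The upper bound falls out of the recurrence even more directly: since $s(i)+S(i-1)+(1-S(i))=1$ and, by the inductive hypothesis, both $\gamma_\tau(i)$ and $\gamma_\tau(i+1)$ are at most $1+\gamma_\tau(2)$, the right-hand side is a convex combination bounded by $1+\gamma_\tau(2)$, giving $\gamma_{\tau+1}(i)\le 1+\gamma_\tau(2)$. Finally, because $\gamma_\tau(2)=\sum_{t=0}^{\tau-1}\Pb[d_t=l_t(\psi_2)]$ is a sum of nonnegative terms, it is non-decreasing in $\tau$, so $\gamma_\tau(2)\le\gamma_{\tau+1}(2)$ and hence $\gamma_{\tau+1}(i)\le 1+\gamma_{\tau+1}(2)$, closing the induction.

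I expect the main obstacle to be the first step — correctly justifying the MRU observer dynamics, in particular the ``jump to depth $2$'' upon an access to the observed item and the fact that this dynamics is independent of the buffer capacity and of $x_0$ (so that conditioning on $d_0$ and reindexing is legitimate). Once the recurrence is established, the remaining inequalities are a routine, if slightly delicate, matter of tracking signs; the one genuinely indispensable use of the hypothesis that $s$ is non-decreasing is in the first term of the decomposition of $D$.
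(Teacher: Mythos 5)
Your proof is correct and follows essentially the same route as the paper's: the same one-step observer recurrence $\gamma_{\tau+1}(i)=s(i)(1+\gamma_\tau(2))+S(i-1)\gamma_\tau(i)+(1-S(i))\gamma_\tau(i+1)$, the same convex-combination upper bound, and the non-decreasing hypothesis entering exactly once through a term of the form $(s(j)-s(i))\bigl(1+\gamma_\tau(2)-\gamma_\tau(\cdot)\bigr)\geq 0$. The only cosmetic differences are that you establish monotonicity via an exact identity for adjacent indices and then chain, where the paper bounds $\gamma_{\tau+1}(i)$ from above and $\gamma_{\tau+1}(j)$ from below for general $i<j$, and that you obtain $\gamma_\tau(2)\leq\gamma_{\tau+1}(2)$ directly from the definition of $\gamma_\tau$ as a partial sum rather than from the recurrence.
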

\begin{proof}
  \par{\bf Base case.}  For $t=1$ we have $\forall k \;\; \gamma_t(k)=s(k)$,
  and hence
  \begin{equation}
    \gamma_t(2)\leq\gamma_t(i)\leq\gamma_t(j) \mf
  \end{equation}
  Furthermore we have that
  \begin{equation}
    \gamma_t(k)=s(k)\leq 1\leq 1+\gamma_t(2) \mf
  \end{equation}
  \par{\bf Induction.}
  \begin{align}
    \begin{split}
      \gamma_{\tau+1}(i)&=s(i)(1+\gamma_{\tau}(2))+S(i-1)\gamma_{\tau}(i)+(1-S(i))\gamma_{\tau}(i+1)\\
      &\leq s(i)(1+\gamma_{\tau}(2))+(1-s(i))\gamma_{\tau}(i+1)\\
      &\leq s(i)(1+\gamma_{\tau}(2))+(1-s(i))\gamma_{\tau}(j)\\
      &=s(i)(1+\gamma_{\tau}(2))+s(j)\gamma_{\tau}(j)+(1-s(i)-s(j))\gamma_{\tau}(j) \mc
    \end{split}
  \end{align}
  \begin{align}
    \begin{split}
      \gamma_{\tau+1}(j)&=s(j)(1+\gamma_{\tau}(2))+S(j-1)\gamma_{\tau}(j)+(1-S(j))\gamma_{\tau}(j+1)\\
      &\geq s(j)(1+\gamma_{\tau}(2))+(1-s(j))\gamma_{\tau}(j)\\
      &=s(j)(1+\gamma_{\tau}(2))+s(i)\gamma_{\tau}(j)+(1-s(i)-s(j))\gamma_{\tau}(j)\\[5mm]
      &\Rightarrow \gamma_{\tau+1}(i)\leq\gamma_{\tau+1}(j) \mf
    \end{split}
  \end{align}
  Finally under MRU we have
  \begin{align}
    \begin{split}
      \gamma_{\tau+1}(k)&=s(k)(1+\gamma_{\tau}(2))
      +S(k-1)\gamma_{\tau}(k)+(1-S(k))\gamma_{\tau}(k+1)\\
      &\geq \min\left\{1+\gamma_{\tau}(2),\gamma_{\tau}(k),
        \gamma_{\tau}(k+1)\right\}\\&=\gamma_{\tau}(k) \mc
    \end{split}
  \end{align}
  and
  \begin{align}
    \begin{split}
      \gamma_{\tau+1}(k)&=s(k)(1+\gamma_{\tau}(2))+S(k-1)\gamma_{\tau}(k)+(1-S(k))\gamma_{\tau}(k+1)\\
      &\leq \max\left\{1+\gamma_{\tau}(2),\gamma_{\tau}(k),
        \gamma_{\tau}(k+1)\right\}\\&=1+\gamma_{\tau}(2) \mc
    \end{split}
  \end{align}
  and therefore
  \begin{equation}
    \gamma_{\tau+1}(k)\leq 1+\gamma_{\tau+1}(2) \mf
  \end{equation}
\end{proof}
\begin{proof}[{\bf Proof of Thm.~\ref{mruthm}}]
  Thm.~\ref{mruthm} follows directly from 
  Prop.~\ref{sec:proof-thm.-refmruthm-1}.
\end{proof}

\section{On Bias Optimality}
\label{sec:ih}

Bias optimality is a stronger property than average (gain) optimality, since it
also takes into account the cost minimization in transient states of the
dynamical system (whereas average costs are insensible to policy changes in
transient states, provided the set of recurrent states stays unchanged).
Bias optimal policies are characterized as solutions of the Bellman equation
(see Prop.~\ref{bel}). In this section we provide evidence of the hardness of
the general solution of the Bellman equation in two ways:
\begin{itemize}
\item We prove that bias-optimal policies in general do not satisfy the
  inclusion property (the same result also applies to optimal policies in a
  finite horizon).
\item We derive the complex solution of the Bellman equation for the relatively
  simple case of $C=2$.
\end{itemize}

\begin{theorem}\label{nonbias}
  There are systems for which the unique optimal policy over some finite horizon
  and bias-optimal over infinite horizon is not a stack policy.
    \end{theorem}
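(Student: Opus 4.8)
The plan is to exhibit a concrete small LRUSM instance where the inclusion property is incompatible with bias optimality. The statement asserts existence, so a single well-chosen counterexample suffices. I would work with the smallest virtual space where a genuine conflict can arise between the buffer of capacity $C=1$ and the buffer of capacity $C=2$, namely $V=3$ (or possibly $V=4$), and pick a stack-distance distribution $s$ that is \emph{not} monotone, so that the gain-optimal policy is some nontrivial $\kl$ policy rather than plain LRU or MRU. The key point is that bias optimality, unlike gain optimality, is sensitive to the cost incurred in \emph{transient} states; so I would look for an instance where the unique way to minimize transient cost in the capacity-$2$ buffer forces a buffer content $B(2)$ that does \emph{not} contain $B(1)$, violating inclusion.

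First I would set up the two coupled Bellman equations (Prop.~\ref{bel}), one for $C=1$ and one for $C=2$, on the reduced state space (LRU stack together with the buffer occupancy vector). For $V=3$ the LRU stack has $3! = 6$ orderings, and the buffer-occupancy vector has few admissible configurations per capacity, so the system is small enough to solve in closed form or by direct case analysis. I would then compute the differential-cost vector $\vec h$ and read off the bias-optimal eviction decision at each miss. The crux is to verify that, for the chosen $s$, the capacity-$1$ optimal policy keeps (say) the item at LRU depth $1$ while the capacity-$2$ optimal policy, in order to shave transient cost, evicts that very item and retains two \emph{other} items — so $B(1) \not\subseteq B(2)$ at some reachable time $t$.

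The natural mechanism to exploit is a distribution with a \emph{local} peak away from the top of the stack: if $s$ assigns low probability to depth $1$ but high probability to depth $2$ or $3$, the single-slot buffer and the two-slot buffer can rationally disagree about which item is most valuable to protect, and bias optimality (which cares about the exact transient trajectory, not just the steady-state frequencies) can make this disagreement \emph{strict and unique}. I would choose numerical values of $s(1),s(2),s(3)$ making the optimizer unique — no ties in the $\arg\min$ of the Bellman update — so that the ``unique optimal policy'' clause of the theorem holds; a generic choice of parameters achieves uniqueness, and I would simply pin down one such choice.

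\textbf{The main obstacle} I anticipate is twofold. First, establishing \emph{uniqueness} of the bias-optimal policy is more delicate than establishing mere existence of a non-inclusion violation: I must rule out an alternative optimal policy that happens to respect inclusion, which requires checking that the relevant Bellman minimizations are strict. Second, bias optimality requires solving the Bellman equation for the differential costs $\vec h$ rather than just comparing steady-state miss rates, and since $\vec h$ has no closed form in general (as the paper itself notes for finite horizon), I expect to fall back on an explicit finite-state computation for the small instance — verifying numerically (with exact rational arithmetic, so the proof is rigorous) that the differential-cost comparison at the critical state forces the inclusion-violating eviction. The remark that the result ``also applies to optimal policies in a finite horizon'' suggests the same instance, run for a suitable finite $\tau$, already exhibits the violation, so I would present the finite-horizon computation first and then invoke the passage to the bias-optimal limit.
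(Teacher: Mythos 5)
Your overall strategy coincides with the paper's: both proofs are computer-assisted counterexamples in which one writes down a specific non-monotone stack-distance distribution, runs exact finite-horizon dynamic programming for two capacities $C' < C''$ starting from nested buffers, exhibits a reachable state where the two optimal evictions differ so that inclusion breaks, and then solves the infinite-horizon Bellman equation for the same instance to transfer the violation to bias optimality. The paper does exactly this with $V=8$, $C'=2$, $C''=3$, horizon $T=5$, and a distribution $s=(\beta,3\beta,3\beta,0,4\beta,0,0,5\beta)$ with $\beta=1/16$.

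However, your proposed concrete instance cannot work, and the failure is structural, not numerical. In this model the referenced item must always be brought into (and kept in) the buffer, and the just-accessed item can never be evicted; hence for every capacity $C$ and every time $t$, the most recently used item $\Lambda_t(1)$ belongs to $B_t(C)$. Since the capacity-$1$ buffer is forced to equal $\{\Lambda_t(1)\}$ — it admits no eviction decision whatsoever — the inclusion $B_t(1)\subseteq B_t(C)$ holds automatically for every policy, so the conflict you describe (``the capacity-$1$ policy keeps the item at depth $1$ while the capacity-$2$ policy evicts that very item'') is impossible. Moreover, with $V=3$ the only remaining comparison, $C=2$ versus $C=3=V$, is also vacuous because the capacity-$V$ buffer eventually holds everything. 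The smallest comparison that could in principle exhibit a violation is $C'=2$ versus $C''=3$ with $V\ge 5$ or so, and in practice one needs enough room in the stack (the paper uses $V=8$, with zeros interleaved in $s$) to make the two buffers' differential costs genuinely disagree about which deep item to protect. Your methodological remarks — exact rational arithmetic, checking strictness of the Bellman minimizers to get uniqueness, doing the finite-horizon computation first and then the infinite-horizon one — are all sound and match what the paper does; you would just need to redo the search over a larger state space than the one you propose.
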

\begin{proof}  We will exhibit a counterexample of a distribution $s$ that has optimal
  policies not induced by a priority (and hence not a stack policy). In more
  detail we first obtain by dynamic programming (executed by a computer program)
  the finite horizon optimal policies for two different buffer capacities $C'$
  and $C''$ (being $C' < C''$). Starting with buffers that satisfy the inclusion
  ($B_0(C') \subseteq B_0(C'')$) we show that there exists a temporal horizon $\tau$
  and state positions $j'$ and $j''$ such that, when in a state with both
  positions filled, for $C=C'$ the optimal policy evicts at $j'$, whereas for
  $C=C'$ it evicts at $j''$.
    By solving (by a computer program) the Bellman equation associated to the
  system we also prove that a similar situation applies to the infinite horizon
  case, implying that the unique bias-optimal policy in infinite horizon does
  not have the inclusion property.

  Consider the following $s$ distribution, with $V=8$ and $\beta=\frac{1}{16}$.
  \begin{center}
    \begin{tabular}{r|c|c|c|c|c|c|c|c|}
                                                \multicolumn{1}{r|}{$s(j)$:}
      &$\beta$&$3\beta$&$3\beta$&$0$&$4\beta$&$0$&$0$&$5\beta$\\
      \multicolumn{1}{c}{}\\[-3mm]
      \multicolumn{1}{c}{$j\in[1,V]$}& \multicolumn{1}{c}{1}&
      \multicolumn{1}{c}{2}& \multicolumn{1}{c}{3}& \multicolumn{1}{c}{4}&
      \multicolumn{1}{c}{5}& \multicolumn{1}{c}{6}& \multicolumn{1}{c}{7}&
      \multicolumn{1}{c}{8}\\[3mm] \multicolumn{9}{c}{}\\
    \end{tabular}
  \end{center}
  We are given an initial LRU stack $\Lambda_0$ and we consider the following
  initial buffers $B_0(2)$ and $B_0(3)$, satisfying the inclusion property $B_0(2)
  \subset B_0(3)$:
  \begin{align}
    B_0(2)&=\left[\Lambda_0(1), \Lambda_0(4) \right] \mc &
    B_0(3)&=\left[\Lambda_0(1), \Lambda_0(4), \Lambda_0(7) \right] \mf
  \end{align}
  If an access arrives at $x_0=\Lambda_0(8)$ a miss occurs in both buffers, and
  hence an eviction is needed. By computing the optimal policy for a time
  horizon of $T=5$ we see that
  \begin{compactitem}
  \item for $C=2$ the (unique) optimal eviction is at depth 2 ($\Lambda_0(1)$),
  \item for $C=3$ the (unique) optimal eviction is at depth 5 ($\Lambda_0(4)$).
  \end{compactitem}
  After the optimal evictions the two buffers become
  \begin{align}
    B_1(2)&=\left[\Lambda_0(8), \Lambda_0(4) \right] = \left[\Lambda_1(1),
      \Lambda_1(5) \right] \mc \\
    B_1(3)&=\left[\Lambda_0(8), \Lambda_0(1), \Lambda_0(7) \right] =
    \left[\Lambda_1(1), \Lambda_1(2), \Lambda_1(8) \right] \mc
  \end{align}
  hence violating the inclusion property.  The same eviction choice is given by
  the solution of the Bellman equation in infinite horizon, proving that
  bias-optimal policies are not, in general, stack policies.
    (Intuitively, the inclusion property violations happens because having in the
  buffer $\Lambda(8)$ decreases the ``profit'' of having $\Lambda(5)$: in fact a
  possible subsequent access to $\Lambda(8)$ brings to a new state with high
  instant cost $g$, since it has in the buffer the low profit item $\Lambda(6)$;
  whereas when $C=2$ the same access causes a miss, thus enabling the eviction
  of the poorly profitable $\Lambda(6)$.)
                  \end{proof}

\subsection{The Bias-Optimal Policy for $C=2$}
\label{sec:bias-optimal-C2}

When $C=2$, the state of our dynamical system can be identified by the
unique index $j\in\{2, \ldots, V\}$ such that the buffer contains the
items in positions $1$ and $j$ of the LRU stack.  The Bellman equation
becomes
\begin{equation}\label{bellman}
  h(j)= 1 - s(j) + h(2) - \lambda
  +S(j-1) \min\left\{0, h(j)-h(2)\right\}
  +(1-S(j)) \min\left\{0, h(j+1)-h(2)\right\}\mf
\end{equation}
The $h(j)$'s are defined up to an additive constant, so we can set
$h(2)=0$ to simplify the equation:
\begin{equation}\label{bellman-ext}
  h(j)=1 - s(j) - \lambda
  +S(j-1) \min\left\{0, h(j)\right\}
  +(1-S(j)) \min\left\{0, h(j+1)\right\}\mf
\end{equation}
The solutions will satisfy
  $h(j)=h(j)-h(2)=\lim_{\tau\rightarrow +\infty} J_\tau^*(j)-J_\tau^*(2)$.
We now ``guess'' the rather complex form of the solutions, in terms of the
auxiliary functions

\begin{equation}\begin{split}
    \beta(j)& \deq\max_{l\geq 1} \bar s(j, j+l-1) \mc\\
    \Phi(j) &\deq \Big\{l\geq 1: \forall k\in\left\{0, \dots, l-1\right\}
    \bar s(j+k, j+l-1)\geq \beta(2)\Big\} \cup \big\{0\big\} \mc\\
    \phi(j) &\deq \max \, \Phi(j)  \mc \hspace{1cm}
        \rho(j) \deq
    \begin{cases}
      \bar s(j, j+\phi(j)-1)-\beta(2) & \phi(j)\not = 0 \mc\\
      \beta(j)-\beta(2) & \phi(j) = 0 \mf\\
    \end{cases}
  \end{split}\end{equation}
where $\Phi(j)$ is the subsequence of items that are visited when applying the
policy induced using $\beta$ as a priority and $\phi(j)$ the length of this
subsequence.

\begin{proposition}\label{m2bel}
  Bellman equation (\ref{bellman-ext}) is solved using the following $\lambda$
  and $h(j)$:
  \begin{equation}
    \begin{split}
      \lambda&=1-\beta(2) \mc\\
      h(j)&=\beta(2)-s(j)-\frac{S(j-1)}{1-S(j-1)}\phi(j)\rho(j)
      -\phi(j+1)\rho(j+1)\mf\\
                \end{split}\end{equation}
\end{proposition}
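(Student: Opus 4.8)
The plan is to verify the two formulas by direct substitution into the Bellman equation~(\ref{bellman-ext}), the only real work being to resolve the terms $\min\{0,h(j)\}$ and $\min\{0,h(j+1)\}$, which amounts to pinning down the sign of $h(j)$ for every $j$. By Proposition~\ref{bel} this verification is all that is required. First I would abbreviate $P(j)\deq\phi(j)\rho(j)$ and observe that only these products enter the stated formula, so the value of $\rho(j)$ in the degenerate case $\phi(j)=0$ (where $P(j)=0$) is irrelevant. The whole computation then collapses to two facts: a backward recursion for $P$ and the sign of $h$.

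The key structural lemma I would prove is the recursion $\phi(j)\le\phi(j+1)+1$ always, with $\phi(j)=\phi(j+1)+1$ exactly when $s(j)+P(j+1)\ge\beta(2)$ and $\phi(j)=0$ otherwise; equivalently $P(j)=\bigl(s(j)-\beta(2)+P(j+1)\bigr)^+$. The inequality $\phi(j)\le\phi(j+1)+1$ is immediate by reindexing the suffix conditions defining $\Phi$. For the dichotomy I would use the moving-average monotonicity of Lemmas~\ref{sub-lemma} and~\ref{sub-lemma-2} (and the Remark following them): if $s(j)+P(j+1)\ge\beta(2)$, prepending position $j$ to the block $[j+1,j+\phi(j+1)]$ keeps every suffix average $\ge\beta(2)$, so $\phi(j+1)+1\in\Phi(j)$; conversely, if $\phi(j)\ge1$, bounding $\bar s(j+1,\cdot)$ from above by $\rho(j+1)+\beta(2)$ (a prefix-versus-whole average inequality within a segment) forces $s(j)+P(j+1)\ge\beta(2)$. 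From this the sign of $h$ follows: in the ``keep'' regime $\phi(j)\ge1$, using the one-step identity $\bar s(j,j+\phi(j)-1)=\tfrac{s(j)+(\phi(j)-1)\bar s(j+1,j+\phi(j)-1)}{\phi(j)}$ together with $P(j)=s(j)-\beta(2)+P(j+1)$, the stated formula simplifies to $h(j)=-P(j)/(1-S(j-1))\le0$; in the ``evict'' regime $\phi(j)=0$ it gives $h(j)=\beta(2)-s(j)-P(j+1)>0$. Hence $\min\{0,h(j)\}=h(j)$ precisely when $\phi(j)\ge1$ and equals $0$ when $\phi(j)=0$, and likewise for $j+1$.

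With the signs fixed I would substitute into~(\ref{bellman-ext}) and check the four sign-patterns of $(h(j),h(j+1))$ separately. Using $1-s(j)=S(j-1)+(1-S(j))$ and $\lambda=1-\beta(2)$, every case reduces to the single identity $P(j)-P(j+1)=s(j)-\beta(2)$ in the keep case (which is exactly the recursion) or to $s(j)<\beta(2)$ in the evict case; the factor $S(j-1)/(1-S(j-1))$ appearing in $h(j)$ is precisely what yields $(1-S(j-1))h(j)=-P(j)$, reflecting the geometric sojourn time $1/(1-S(j-1))$ of Proposition~\ref{thm:time-spent-lrusm}. I would also dispatch the two boundaries: that $h(2)=0$ (because the maximizing block from position~$2$ has average exactly $\beta(2)$, whence $\rho(2)=0$ and $P(2)=0$), so the normalization $h(2)=0$ is consistent, and that at $j=V$ the term $(1-S(V))\min\{0,h(V+1)\}$ vanishes since $S(V)=1$.

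The hard part will be the structural recursion for $\phi$ and $P$ of the second paragraph, not the final substitution. The subtlety is that $\Phi(j)$ is defined by an ``all suffixes $\ge\beta(2)$'' condition that is neither manifestly downward-closed nor manifestly monotone in $j$, so establishing both $\phi(j)\le\phi(j+1)+1$ and the threshold characterization $s(j)+P(j+1)\gtrless\beta(2)$ genuinely requires the prefix/suffix average inequalities of Lemmas~\ref{sub-lemma}--\ref{sub-lemma-2}. Once that recursion and the resulting sign dichotomy for $h$ are in place, checking~(\ref{bellman-ext}) is routine algebra.
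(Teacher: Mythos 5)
Your proposal is correct and follows essentially the same route as the paper's proof: verify (\ref{bellman-ext}) by direct substitution after establishing the sign dichotomy for $h(j)$, which yields $\min\{0,h(j)\}=-\phi(j)\rho(j)/(1-S(j-1))$. Your explicit recursion $\phi(j)\rho(j)=\bigl(s(j)-\beta(2)+\phi(j+1)\rho(j+1)\bigr)^{+}$ is precisely the identity that the paper's two-case formula for $h(j)$ silently relies on, so you have supplied the structural detail the paper leaves implicit.
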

\begin{proof}  Let $\psi(j) \deq\frac{1}{1-S(j-1)}$, then
  \begin{equation}\begin{split}
      h(j)&=
      \begin{cases}
        -\psi(j)\rho(j)\phi(j)  & \rho(j)>0 \left(\Rightarrow
          \phi(j)>0\right)\\
        \beta(2)-s(j) - \phi(j+1)\rho(j+1) &\rho(j)<0
        \left(\Rightarrow \phi(j)=0\right)
      \end{cases}
    \end{split} \mf\end{equation}
  This implies $\min\left\{0, h(j)\right\}=-\psi(j)\rho(j)\phi(j)$.
    Using this term we can see that the chosen $\lambda$ and $h(j)$ satisfies
  (\ref{bellman-ext}).
\end{proof}

\section{Conclusions}
\label{sec:conclusions}

In this paper, we have revisited the classical eviction problem,
relating it to optimal control theory and introducing the average
occupancy variant, which provides solutions and insights even for the
classical, fixed occupancy version of the problem.

A number of interesting and challenging issues remain open in the area
of eviction policies for the memory hierarchy. One objective is the
search for optimal policies (or policies with good performance
guarantees), with fixed occupancy, for general HMRM traces. In this
context, it may be worthwhile to investigate the Least Profit Rate
policy beyond the LRUSM model.

Within the LRUSM, we have considered policy design assuming a known
stack-depth distribution: what performance guarantees can be achieved
if the distribution is not known a priori, but perhaps estimated
on-line, is another intriguing question, whose answer may have
practical value for memory management in general purpose systems,
where different applications are likely to conform to different
distributions.

In this work, we have also explored forms of optimality different from
gain optimality. However, even within the LRUSM, we lack general
solutions for a finite horizon as well as for infinite horizon, if we
insist on bias optimality.

A question underlying the entire area of eviction policies remains the
choice of an appropriate stochastic model for the trace. While the
LRUSM captures temporal locality in a reasonable fashion, it
completely misses spatial locality, a property critically exploited in
hardware and software systems. Spatial locality implies that certain
subsets of the addressable items occur more frequently in short
intervals of the trace than other subsets. On the contrary, the LRUSM
is invariant under arbitrary permutations of the items. The Markov
Reference Model can capture some level of temporal and space locality,
for example if the transition graph contains regions where outward
transitions have low probability, thus corresponding to a sort of
working set. However, in real programs, the same item tends to occur
in different working sets at different times, that is, the same item
can be accessed in different states of the trace, so that the state
cannot be identified with the last item that has been accessed, as in
the MRM (see also \cite{Liberatore99} for evidence on the limitations
of Markov models). Suitable hidden Markov models do not necessarily
suffer from this limitation, which motivates further investigations of
optimal policies for the general HMRM.

\subsection*{Acknowledgments}
We wish to express our gratitude to Prof.\ Augusto Ferrante who has kindly
provided valuable expert advise on optimal control theory at many critical
junctures of this research. 

\bibliographystyle{acm}
\bibliography{lpr}
\clearpage

\appendix

\section{Bellman Equation}
\label{sec:bellman-equation}
Let $\Delta$ be a discrete dynamical system and $X$, $W$ and $Q$ denote
respectively its state, disturbance and control spaces; let $\mathcal{P}$ be the
set of all the admissible policies to control the system.
\begin{definition}
  A system is said to be of \emph{type 1} (the standard model) if the policies
  are allowed to choose the control $u$ at time $t$ only as a function of the
  state at the same time:
  \begin{equation}
    u_t=\mu\left(x_t\right) \in U\left(x_t\right) \mc
  \end{equation}
  where
  \begin{equation}
    U:\quad X \rightarrow \mathscr{P}(Q) \mc
  \end{equation}
  and
  \begin{equation}
    \mu\in \mathcal{P}:\quad X \rightarrow Q \mf
  \end{equation}
\end{definition}
\begin{definition}
  A system is said to be of \emph{type 2} (our model) if the policies are
  allowed to choose the control $u$ at time $t$ as a function of the state and
  the disturbance at the same time:
  \begin{equation}
    u_t=\mu\left(x_t,w_t\right) \in U\left(x_t,w_t\right) \mc
  \end{equation}
  where
  \begin{equation}
    U:\quad X\times W \rightarrow \mathscr{P}(Q) \mc
  \end{equation}
  and
  \begin{equation}
    \mu\in \mathcal{P}:\quad X\times W \rightarrow Q \mf
  \end{equation}
\end{definition}
\begin{definition}
  A system
  $\Delta\tone=\left(X\tone,W\tone,Q\tone,g\tone,f\tone,U\tone(\cdot,\cdot)\right)$
  of type 2 is said to be \emph{equivalent} to a system
  $\Delta\ttwo=\left(X\ttwo,W\ttwo,Q\ttwo,g\ttwo,f\ttwo,U\ttwo(\cdot)\right)$ of
  type 1 if and only if $X\tone=X\ttwo$, $W\tone=W\ttwo$, $g\tone=g\ttwo$ and
  \begin{equation}
    \begin{split}
      &\forall x \in X \quad \forall w \in W\\
      &\exists u\tone \in U\tone(x,w) : f\tone(x,u\tone,w)=y \quad \Longleftrightarrow \quad
      \exists u\ttwo \in U\ttwo(x) : f\ttwo(x,u\ttwo,w)=y  \mf
    \end{split}
  \end{equation}
\end{definition}
\begin{remark}
  Given an initial state $x_0$ and a realization of $w_t$ for two equivalent
  systems $\Delta\tone$ and $\Delta\ttwo$, and a sequence of controls $u\tone_t$
  for system $\Delta\tone$ is always possible to find $u\ttwo_t$ such that the
  state trajectories $x_t$ (and hence the costs) of the two equivalent systems
  are the same.
\end{remark}
\begin{lemma}\label{belequiv}
  For every system $\Delta\tone$ of type 2 exists a system $\Delta\ttwo$ of type
  1 s.t.\ $\Delta\tone$ and $\Delta\ttwo$ are equivalent.
\end{lemma}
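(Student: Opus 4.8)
The plan is to construct the type-1 system $\Delta\ttwo$ by absorbing the disturbance dependence of the control into an enlarged control space. Concretely, I would keep $X\ttwo = X\tone$, $W\ttwo = W\tone$, and $g\ttwo = g\tone$ as forced by the definition of equivalence, and take the new control space to be the set of all maps from disturbances to old controls, $Q\ttwo \deq (Q\tone)^{W\tone}$. An admissible type-1 control at state $x$ is then a selector $\kappa$ that picks an admissible old control for every possible disturbance: $U\ttwo(x) \deq \{\kappa : W\tone \to Q\tone \mid \forall w \in W\tone,\ \kappa(w) \in U\tone(x,w)\}$. The type-1 dynamics evaluate the selector at the realized disturbance, $f\ttwo(x,\kappa,w) \deq f\tone(x,\kappa(w),w)$. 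Intuitively, since a type-1 policy must commit to its control before seeing $w$, I let it commit instead to an entire contingency plan, one control per disturbance value; the actual choice $\kappa(w)$ is only revealed when $w$ occurs, which reproduces exactly the information available to a type-2 policy.

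With this construction, verifying the equivalence biconditional is direct. I would fix $x \in X\tone$, $w \in W\tone$ and a target successor $y$. For the ($\Leftarrow$) direction, if some $\kappa \in U\ttwo(x)$ satisfies $f\ttwo(x,\kappa,w)=y$, then $u\tone \deq \kappa(w)$ lies in $U\tone(x,w)$ by the definition of $U\ttwo$, and $f\tone(x,u\tone,w) = f\tone(x,\kappa(w),w) = f\ttwo(x,\kappa,w) = y$. For the ($\Rightarrow$) direction, given $u\tone \in U\tone(x,w)$ with $f\tone(x,u\tone,w)=y$, I would define $\kappa$ to agree with $u\tone$ at the argument $w$ and to take any admissible value at every other disturbance, so that $\kappa \in U\ttwo(x)$ and $f\ttwo(x,\kappa,w) = f\tone(x,\kappa(w),w) = y$.

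The one point that needs care — and the only real obstacle — is the ($\Rightarrow$) direction: to extend the single admissible control $u\tone$ into a full selector $\kappa$, I need $U\tone(x,w')$ to be non-empty for every $w' \in W\tone$, so that $\kappa(w')$ can be filled in admissibly away from $w$. This is a standard well-posedness assumption (every state–disturbance pair admits at least one legal control; in the eviction setting it always holds, since no eviction is admissible on a hit and at least one eviction is admissible on a miss from a full buffer). Since $W\tone$ is finite, $Q\ttwo$ is a well-defined finite set and no measurability issue arises. Under the non-emptiness assumption the construction is complete, and the correspondence $\kappa(w) \leftrightarrow u\tone$ established above also matches controls step by step, which via the Remark preceding the lemma guarantees that the two systems produce identical state trajectories, and hence identical costs, for any initial state and disturbance realization.
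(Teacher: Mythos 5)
Your construction is essentially the paper's own: the paper also absorbs the disturbance-dependence into an enlarged control space by taking the new controls to be the type-2 policies themselves (maps involving $w$) and setting $f''(x,u'',w)=f'(x,u''(x,w),w)$, which is your selector idea with $X\times W\to Q$ in place of $W\to Q$. Your explicit attention to the non-emptiness of $U'(x,w')$ when extending a single admissible control to a full selector is a point the paper's proof passes over silently, but it does not change the route of the argument.
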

\begin{proof}
  The proof is obtained by choosing as controls for $\Delta\ttwo$ the policies
  of $\Delta\tone$. Let
  \begin{align}
    X'\deq X, \qquad W'\deq W, \qquad g'&\deq g, \qquad Q\ttwo\deq
    \mathcal{P}\tone, \qquad \forall x
    \in X \quad U\ttwo(x)\deq \mathcal{P}\tone \\
    f\ttwo\left(x,u\ttwo,w\right)&\deq f\tone\left(x,u'\tone(x,w),w\right),
    \qquad \text{(since $u' \in \mathcal{P}$)} \mf
  \end{align}
  Then $\forall x \in X, \quad \forall w \in W,$
  \begin{itemize}
  \item $\forall u\tone \in U\tone(x,w)$, let $y=f\tone\left(x,u\tone,w\right)$.
    If we set $u\ttwo$ such that $u\ttwo(x,w)=u\tone$ we have
    \begin{equation}
        f\ttwo\left(x,u\ttwo,w\right)=f\tone\left(x,u\ttwo(x,w),w\right)
        =f\tone\left(x,u\tone,w\right)=y \mf
    \end{equation}
  \item $\forall u\ttwo\in U\ttwo(x)$, let $y=f\ttwo\left(x,u\ttwo,w\right)$.
    If we set $u\tone=u'(x,w)$ we have
    \begin{equation}
      f\tone\left(x,u\tone,w\right) = f\tone\left(x,u\ttwo(x,w),w\right) =
      f\ttwo\left(x,u\ttwo,w\right) = y \mf
    \end{equation}
  \end{itemize}
\end{proof}
\par{\bf Optimal cost update equations.}
Let $w_t$ be a random process with values in $W$, i.i.d.\ for different $t$'s.
Let $\Delta\tone$ be a system of type 2 and $\Delta\ttwo$ an equivalent system
of type 1.  Then the cost update equation for $\Delta\tone$ can be written as:
\begin{align}\label{beq1}
  \begin{split}
    \forall x\in X \quad J^*_\tau(x)&=\E_w\left[\min_{u\tone\in
        U\tone(x,w)}\left\{g(x,w)+J^*_{\tau-1}
        \left(f\tone\left(x,u\tone,w\right)\right)\right\}\right] \mc
  \end{split}\\
  \vec J^*_\tau &= \T\tone \vec J^*_{\tau-1} \mc
\end{align}
whereas the same equation for $\Delta\ttwo$ is:
\begin{align}\label{beq2}
  \begin{split}
    \forall x\in X \quad J^*_\tau(x)&=\min_{u\ttwo\in U\ttwo(x)}\left\{\E_w\left[
        g(x,w)+J^*_{\tau-1}\left(f\ttwo\left(x,u\ttwo,w\right)\right)\right]\right\} \mc
  \end{split}\\
  \vec J^*_\tau &= \T\ttwo \vec J^*_{\tau-1} \mf
\end{align}
\begin{remark}
  Since equivalent systems can reproduce each other's state evolution, their
  optimal costs are the same, in particular
  \begin{equation}
    \T\tone \vec J^*_{\tau-1}=\T\ttwo \vec J^*_{\tau-1} \mf
  \end{equation}
\end{remark}
We recall the classical Bellman equation theorem:
\begin{theorem}[Standard Bellman equation]\label{belorig}
  Given a dynamical system $\Delta\ttwo$ of type 1, if $\exists \lambda$ and
  $\exists \vec h$ such that
  \begin{equation}
    \lambda \vec 1 + \vec h = \T\tone \vec h \mc
  \end{equation}
  then $\lambda$ is the optimal average cost of $\Delta\tone$ and $\vec h$ are the
  differential costs of the states, i.e.
  \begin{equation}
    \forall x,y \in X \quad \lim_{\tau\rightarrow +\infty} J_\tau^*(x)-J_\tau^*(y)=h(x)-h(y) \mf
  \end{equation}
\end{theorem}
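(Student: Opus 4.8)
The plan is to derive everything from two elementary properties of the one-step dynamic-programming operator $\T$ of the type-1 system, namely monotonicity ($\vec u \leq \vec v$ componentwise implies $\T\vec u \leq \T\vec v$) and translation invariance ($\T(\vec v + c\vec 1) = \T\vec v + c\vec 1$ for every scalar $c$). Both are immediate from the fact that $(\T\vec v)(x)=\min_u \E_w[g(x,w)+v(f(x,u,w))]$ is a minimum over $u$ of maps that are affine and monotone in $\vec v$ with nonnegative, mass-one coefficients. Combined, they give the sup-norm nonexpansiveness $\|\T\vec u-\T\vec v\|_\infty \leq \|\vec u-\vec v\|_\infty$, which is the workhorse of the argument.

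First I would feed the hypothesis $\lambda\vec 1 + \vec h = \T\vec h$ into the finite-horizon recursion $\vec J_\tau=\T\vec J_{\tau-1}$ initialised at the terminal cost $\vec J_0=\vec h$. A one-line induction using translation invariance gives $\vec J_\tau = \tau\lambda\vec 1 + \vec h$. To pass to the cost $J^*_\tau$ of the theorem, which starts from zero terminal cost ($\vec J^*_0=\vec 0$), I would compare the two runs by nonexpansiveness: $\|\vec J^*_\tau - (\tau\lambda\vec 1+\vec h)\|_\infty \leq \|\vec 0-\vec h\|_\infty=\|\vec h\|_\infty$ for all $\tau$. Dividing by $\tau$ and letting $\tau\to\infty$ yields $\lim_\tau J^*_\tau(x)/\tau=\lambda$, uniformly in $x$, so $\lambda$ is the optimal average cost.

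The delicate point is the differential-cost claim $\lim_\tau (J^*_\tau(x)-J^*_\tau(y))=h(x)-h(y)$: the crude bound above controls $\vec J^*_\tau-\tau\lambda\vec 1-\vec h$ only up to an additive term that is bounded but need not vanish, so taking differences leaves an error $\epsilon_\tau(x)-\epsilon_\tau(y)$ that must be shown to go to zero. This is exactly where the unichain hypothesis assumed throughout the paper is needed: under a single recurrent class (and aperiodicity, which can be imposed by the standard aperiodicity data transformation without changing $\lambda$ or $\vec h$), relative value iteration converges, i.e.\ $\vec J^*_\tau-\tau\lambda\vec 1$ converges up to a global additive constant, and uniqueness of that fixed point modulo constants forces the limit to be $\vec h$ plus a constant, which cancels in the difference. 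Since this is the classical average-cost optimality theorem, I would invoke the convergence of relative value iteration from Bertsekas~\cite{Bertsekas00} rather than reproving it; the paper's own contribution is the transfer of this statement to type-2 systems via the equivalence of Lemma~\ref{belequiv} together with the identity $\T\tone\vec J^*=\T\ttwo\vec J^*$ recorded just above.
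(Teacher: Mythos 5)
Your proposal is correct in substance, but note that the paper does not prove Theorem~\ref{belorig} at all: it is explicitly ``recalled'' as the classical average-cost optimality theorem for type-1 systems and then used as a black box in the proof of Proposition~\ref{bel}, whose only content is the transfer to type-2 systems via the equivalence of Lemma~\ref{belequiv} and the identity $\T\tone\vec J=\T\ttwo\vec J$. So your write-up supplies a proof where the paper supplies a citation. What your route buys is a genuinely self-contained argument for the gain part: the induction $\T^{\tau}\vec h=\tau\lambda\vec 1+\vec h$ combined with sup-norm nonexpansiveness pins $J^*_\tau(x)$ within $\|\vec h\|_\infty$ of $\tau\lambda+h(x)$, which gives $J^*_\tau(x)/\tau\to\lambda$ cleanly (to fully match the paper's definition of gain optimality as a minimum over policies, you should also note that the stationary policy attaining the minimum in $\T\vec h$ achieves average cost exactly $\lambda$, by the same nonexpansiveness applied to its own, non-minimizing, affine operator). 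For the differential-cost claim you correctly identify that boundedness of $\vec J^*_\tau-\tau\lambda\vec 1-\vec h$ is not enough and fall back on convergence of relative value iteration from \cite{Bertsekas00}, which is exactly where the paper's citation points; the one caveat is that the aperiodicity transformation you invoke modifies the finite-horizon costs $J^*_\tau$ themselves, so convergence of the transformed system's relative values does not by itself yield the stated limit for the original system --- for periodic unichain systems that limit can genuinely fail to exist. This is a defect of the theorem as stated (or an implicit aperiodicity assumption) rather than of your argument, but if you keep the proof self-styled you should either add the aperiodicity hypothesis or state the differential-cost limit in the Ces\`aro sense.
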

We are now ready to prove our version of the Bellman equation for a system
$\Delta\tone$ of type 2:
\begin{proof}[{\bf Proof of Prop.~\ref{bel}}]
  Consider a system $\Delta\ttwo$ equivalent to $\Delta\tone$. Applying
  Thm.~\ref{belorig} we have that, if we can solve Bellman equation for
  $\Delta\ttwo$ then we have found its optimal average cost and differential
  costs vector. Since the two systems are equivalent this implies that they are
  also the corresponding costs of $\Delta\tone$. Hence we have
  \begin{equation}
    \exists \lambda \, \exists \vec h : \lambda\vec 1 + \vec h = \T\ttwo \vec h
    \quad \Rightarrow \; \text{$\lambda$ and $\vec h$ costs for }\Delta\tone \mc
  \end{equation}
  but since $\T\ttwo \vec h=\T\tone \vec h$ we finally have
  \begin{equation}
    \exists \lambda \, \exists \vec h : \lambda\vec 1 + \vec h = \T\tone \vec h
    \quad \Rightarrow \; \text{$\lambda$ and $\vec h$ costs for }\Delta\tone \mf
  \end{equation}
\end{proof}

\end{document}